\definecolor{dark-red}{rgb}{0.6,0,0}
\definecolor{dark-green}{rgb}{0,0.6,0}
\definecolor{dark-blue}{rgb}{0,0,0.6}
\newtheorem{theorm}{Theorem}[chapter]
\newtheorem*{thm*}{Theorem}
\newtheorem{lemma}[theorm]{Lemma}
\newtheorem{proposition}[theorm]{Proposition}
\newtheorem{conjecture}[theorm]{Conjecture}
\theoremstyle{definition}
\newtheorem{defint}[theorm]{Definition}
\newcommand{\R}{\mathbb{R}}
\definecolor{blue}{rgb}{0,.0, .81}
\begin{document}
\frontmatter

\title{Applications of Discrete Mathematics for Understanding Dynamics of Synapses and Networks in Neuroscience}
\author{Caitlyn M. Parmelee}
\adviser{Professor Carina Curto} \adviserAbstract{Professor Carina Curto}
\major{Mathematics} \degreemonth{August} \degreeyear{2016} 

\maketitle
\begin{abstract}

Mathematical modeling has broad applications in neuroscience whether we are modeling the dynamics of a single synapse or the dynamics of an entire network of neurons. In Part I, we model vesicle replenishment and release at the photoreceptor synapse to better understand how visual information is processed. In Part II, we explore a simple model of neural networks with the goal of discovering how network structure shapes the behavior of the network. 

Vision plays an important role in how we interact with our environments.
 To fully understand how visual information is processed requires an understanding of the way signals are transformed at the very first synapse: the ribbon synapse of photoreceptor neurons (rods and cones). These synapses possess a ribbon-like structure on which approximately 100 synaptic vesicles can be stored, allowing graded responses through the release of different numbers of vesicles in response to visual input. These responses depend critically on the ability of the ribbon to replenish itself as ribbon sites empty upon release. The rate of vesicle replenishment is thus an important factor in shaping neural coding in the retina. In collaboration with experimental neuroscientists we developed a mathematical model to describe the dynamics of vesicle release and replenishment at the ribbon synapse. 
 
To learn more about how network architecture shapes the dynamics of the network, we study a specific type of threshold-linear network that is constructed from a simple directed graph. These networks are particularly well suited for our study because the network construction guarantees that differences in dynamics arise solely from differences in the connectivity of the underlying graph.  By design, the activity of these networks is bounded and there are no stable fixed points. Computational experiments show that most of these networks yield limit cycles where the neurons fire in sequence. Can we predict the order in which the neurons fire? To this end, we devised an algorithm to predict the sequence of firing using the structure of the underlying graph. Using the algorithm we classify all the networks of this type on five or fewer nodes.

\end{abstract}

\begin{copyrightpage}
\end{copyrightpage}

\begin{dedication}
\centering
To my parents, to whom I owe everything.

\vspace{0.25in}
And to Matt, to whom I owe this.
\end{dedication}

\begin{acknowledgments}

I would like to first thank my advisor, Dr.\ Carina Curto, for her support and encouragement over the last few years. The amazing opportunities she has provided me have been invaluable. 

Another thank you goes out to the other members of my committee, Dr.\ Bo Deng, Dr.\ Vladimir Itskov, Dr.\ Jamie Radcliffe, and Dr.\ Wallace Thoreson, for their conversations and support. I would also like to thank our exceptional collaborators, Dr.\ Wallace Thoreson, Dr.\ Matthew Van Hook, and Dr.\ Katherine Morrison for their discussions and insights.

I would also like to thank the University of Nebraska--Lincoln Mathematics Department for providing the resources to pursue my research and develop my teaching. 
  
  There are so many people I am grateful to have met along my journey and owe my deepest thanks:
  my labmates, for helping me navigate the world of mathematical neuroscience,
my fellow ``first-years,'' for being an incredible support system and filling my life with fun facts, and my officemates, for creating a safe space and making me smile even on the most difficult days. 
 
A special thanks goes to my undergraduate advisor, Dr.\ Matt Koetz, and the other math faculty at Nazareth College for setting me on this journey and continuing to believe in me every step of the way.

I could not have done this without my partner in crime, Nate Clayburn. He has been a source of strength and love for the last five years. 
 I am strong if you are strong.

 Lastly, my eternal thanks to my family, especially my parents and sister, whose love fills my life and to whom I owe so much.

\end{acknowledgments}

\begin{grantinfo}
This work was supported in part by the National Science Foundation grants  DMS-1516881, DMS-1225666/DMS-1537228 and P200A120068. This work was also supported in part by a sub-award from the NIH grant R01-EY010542-19.
\end{grantinfo}


\tableofcontents*

\mainmatter


\part{Dynamics of Ribbon Synapses}\label{Ribbon_part}
	\chapter{Introduction to Part \ref{Ribbon_part}}

Vision plays an important role in how we interact with our environments. In fact, half of our cerebral cortex is dedicated to processing the visual world \cite{neurotext}.  
Part \ref{Ribbon_part} explores how visual information is processed at the very first synapse of the visual pathway, the photoreceptor synapse.  We specifically look at the role of a structure called the synaptic ribbon.

In this introductory section we will discuss some background information about the structure and function of the visual system.  Visual processing begins when light enters the retina and is absorbed by photoreceptor neurons. 
Photoreceptor neurons are the principal light-sensitive cells in the retina. When light enters the retina, it passes through several layers of cells\footnote{The other cells in the retina are relatively transparent, so when light passes through them there is very little image distortion.\cite{neurotext}} before being absorbed by the outer segments of photoreceptors (see Figure \ref{retina}).   The absorption of light initiates the process of phototransduction which ultimately triggers changes in membrane potential. These signals are passed to a layer of bipolar cells\footnote{Photoreceptors also synapse onto horizontal cells, which modify the bipolar cells laterally.}, and then to a layer of ganglion cells. The ganglion cells send axons to the optic nerve and are the only source of outputs from the retina. 

\begin{figure}[h]
\begin{center}
\includegraphics[width=50mm]{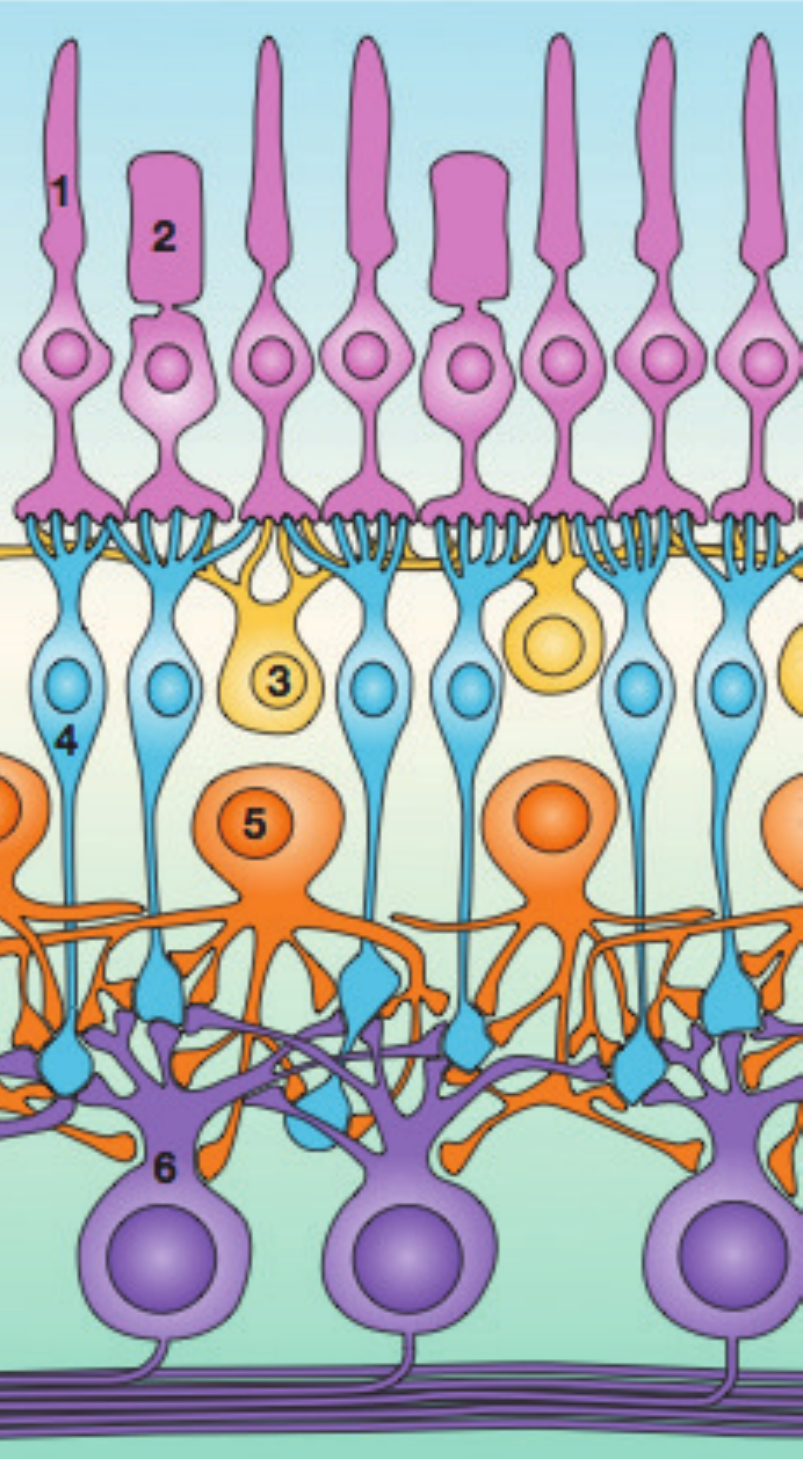}
\caption{Diagram of retinal pathway: (1) photoreceptor rods, (2) photoreceptors cones, (3) horizontal cells, (4) bipolar cells, (5) amacrine cells, (6) retinal ganglion cells. Adapted from \cite{retina}.}\label{retina}\end{center}
\end{figure}

To better understand vision, we wish to first understand how visual information is processed at the photoreceptor synapse.  To do this we need to set up some necessary background information about the biology of these synapses. We will start by discussing neurons and how they communicate with each other.  Then we will look more specifically at how photoreceptor neurons communicate.  Finally we will describe the synaptic ribbon, a specialized structure in photoreceptor synapses, and discuss what is currently known about its role in the vesicle cycle and information processing.

Following the introduction, Chapters \ref{releasereplenishmodel}-\ref{compmodel} discuss our work involving the ribbon synapse.  These results were obtained in collaboration with experimental neuroscientists in the Thoreson Lab at the University of Nebraska Medical Center. Our contribution has been to develop theoretical models describing the dynamics of release and replenishment in the ribbon synapse. Results from Sections \ref{dynamics}-\ref{ourmodel} were published in \cite{mine2} and results from Sections \ref{replenishment}-\ref{calcium} were published in \cite{mine1}.

\section{Synaptic transmission}\label{neurons}

Neurons are cells involved in the transmission of information in the nervous system. The neurons receive inputs from other neurons at the dendrites and once a threshold is reached the neuron can send a signal, often in the form of an action potential, down its axon to other cells. The pattern of action potentials codes the information being sent. This information transfer between the two cells takes place at the synapse. Figure \ref{neuron} shows a diagram of a conventional synapse.    
  The information is passed between neurons through the release of vesicles, which are small spheres made of membrane and packed with neurotransmitters. 
  
  \begin{figure}[h]
\begin{center}
\includegraphics[width=105mm]{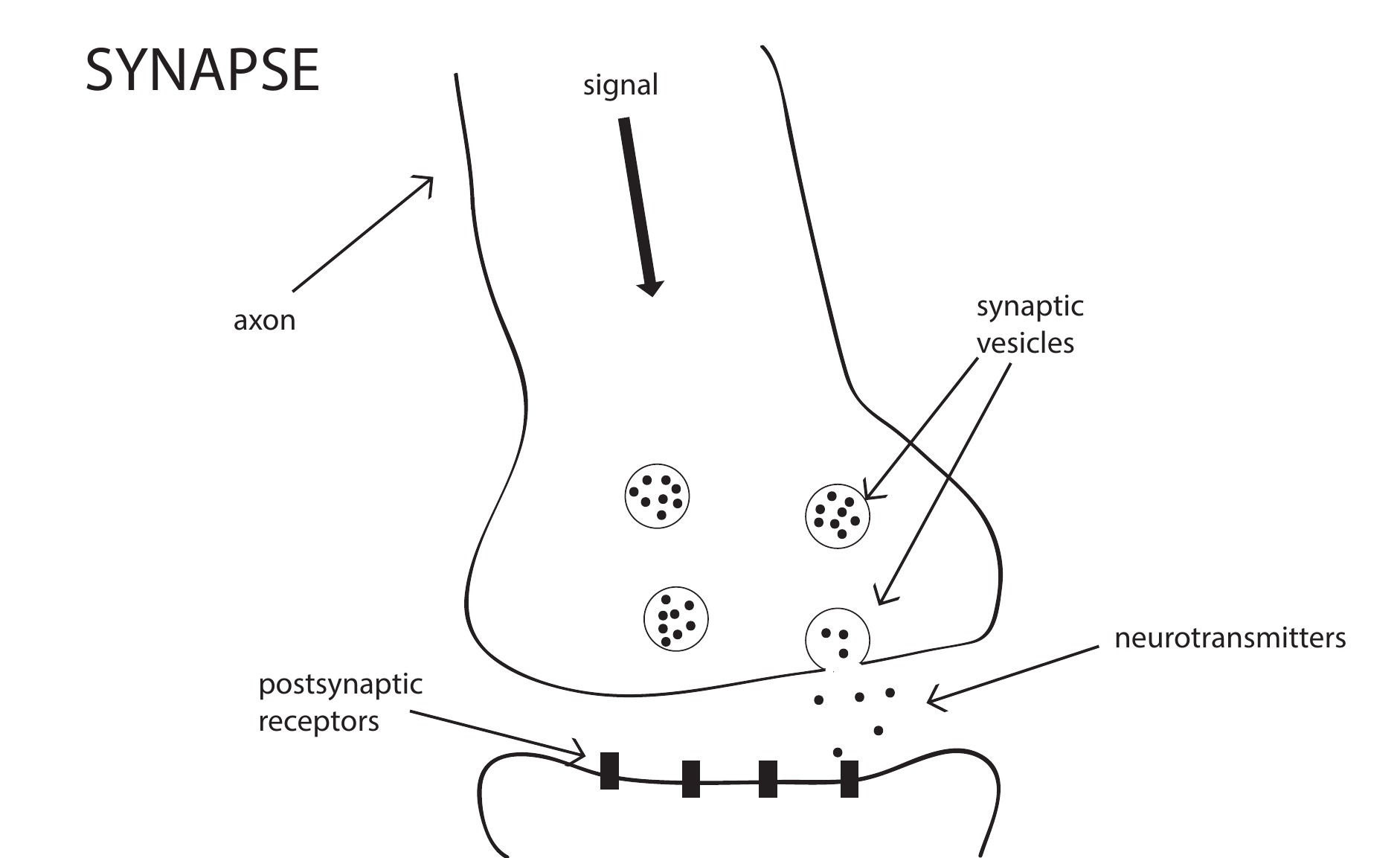}
\caption{Diagram of a synapse: When a signal reaches the terminal, it triggers vesicles to dock and fuse with the cell membrane and release neurotransmitters. These neurotransmitters travel across the synaptic cleft and  bind with the receptors on the postsynaptic cell.}\label{neuron}\end{center}
\end{figure}

When an action potential reaches the cell terminal, it triggers vesicle exocytosis. Vesicle exocytosis is the process in which the vesicles dock and fuse with the cell membrane and release their neurotransmitters into the synaptic cleft, the space between the two cells. The area of the cell membrane where this occurs is referred to as the ``active zone.''  The neurotransmitters then bind with the receptors on the postsynaptic cell, passing the information. For example, these receptors may open or close ion channels or activate second messenger systems.

Vesicles are recycled through endocytosis, which is the process by which vesicles are reformed using parts of the cell membrane and refilled with neurotransmitters. These recycled vesicles then become part of the mobile vesicle pool inside the cell.

\section{Photoreceptor neurons}\label{photoreceptors}

We are particularly interested in studying synaptic transmission at photoreceptor neurons. Photoreceptor neurons are the first cells of the visual system. Photoreceptors are located in the retina and their function is to convert light into changes in membrane potential. Light is absorbed by membranous disks, located on the outer segments, containing photopigment. There are two main types of photoreceptors: rods and cones. Rods are involved in night vision, motion detection, and peripheral vision and they are dense everywhere but the center of the eye. Cones are located in the center of the retina and are involved in color vision and detecting finer detail.  Unless otherwise specified, all the experimental data in Part \ref{Ribbon_part} refers only to cone photoreceptors, specifically in the aquatic tiger salamander.

In a conventional synapse the neuron responds to action potentials with discrete vesicle events.  Photoreceptor cells instead respond directly to the absorption of photons by releasing vesicles constantly in darkness and slowing release as light increases, i.e. the cell is depolarized in darkness and an increase in light causes the cell to hyperpolarize.   
The graded responses given by photoreceptor cells allow for a quicker processing of information as well as a larger range of responses \cite{diverseroles}.  This graded release is facilitated by a structure called the synaptic ribbon, described in the next section.
 
\section{Synaptic ribbon}\label{synapticribbon}

The synaptic ribbons present in cone photoreceptors are plate-like rectangular\footnote{Synaptic ribbons in different cells may have different shapes. For example, ribbons in the auditory system can be spherical or ellipsoidal rather than rectangular (\cite{ribbonshape},\cite{auditoryribbon}).} proteinaceous structures anchored to the inside of the cell membrane close to the Ca$^{2+}$ channels \cite{structure}. Daily and seasonal changes in the size, shape, number, and location of synaptic ribbons can occur based on light conditions \cite{diurnal}.  Vesicle release at the active zone is controlled by the opening and closing of the calcium channels. In cones, less than three channel openings are required to cause the fusion of a single vesicle, which allows for precise timing of release to accurately reflect changes in light intensity \cite{channels}. The increase in intracellular Ca$^{2+}$ also speeds the replenishment of vesicles, allowing for sustained release \cite{cone2}.

\begin{figure}[h]
\begin{center}
\includegraphics[width=150mm]{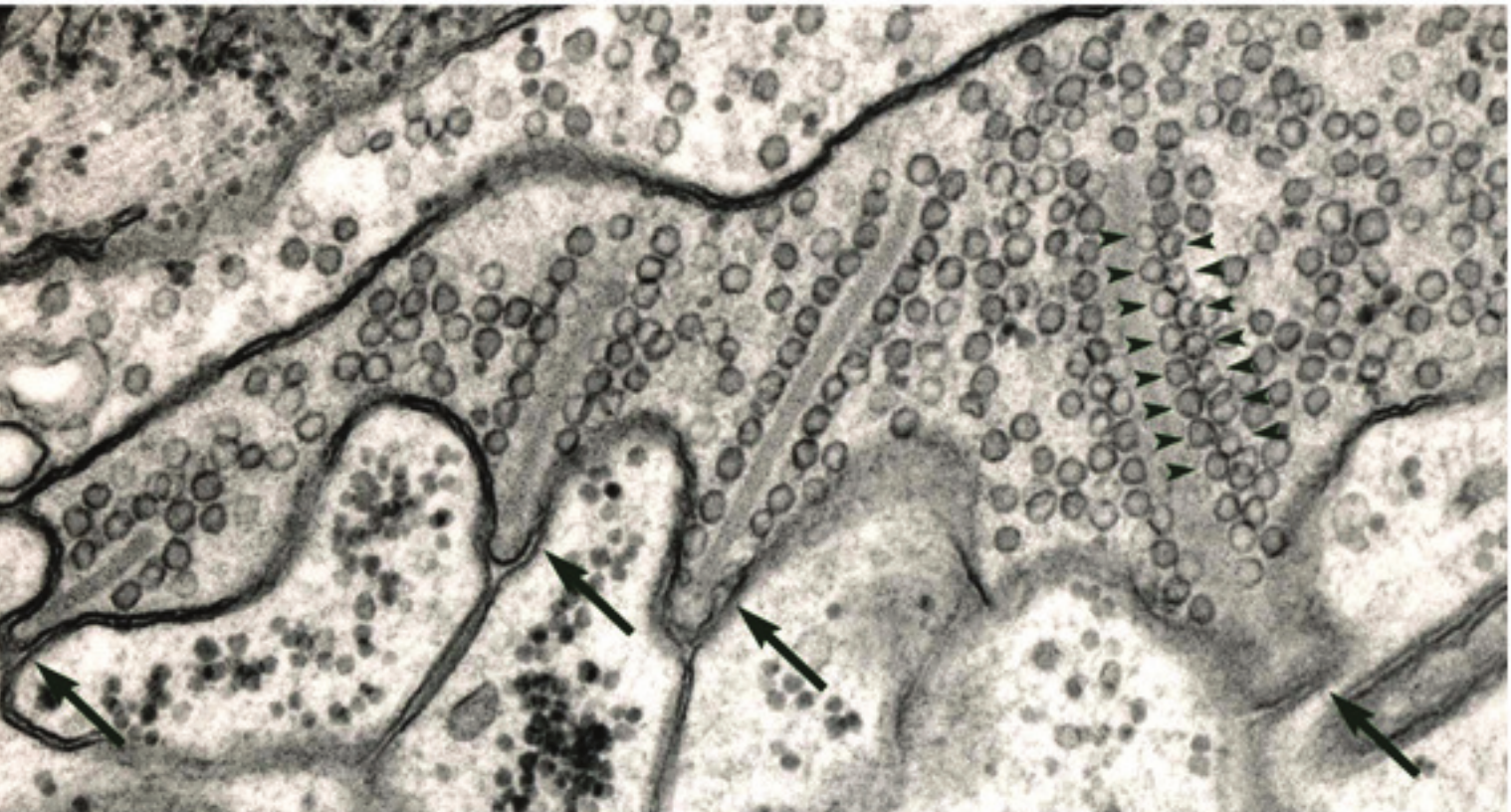}
\caption{Electron micrographs of synaptic ribbons in rod terminals: The larger arrows indicate the active zone at the bottom of each ribbon and the smaller arrows indicate the hexagonally packed vesicles tethered to the ribbon. Adapted from \cite{thoreson2004}.}\label{em}
\end{center}
\end{figure}

Recall that in conventional synapses vesicles dock and fuse directly with the cell membrane. The vesicles in ribbon synapses are instead first collected on the synaptic ribbon. In the cone photoreceptors of the aquatic tiger salamander there are approximately 11 rows of 5 vesicles stacked on each side of the ribbon, for a total of 110 vesicles \cite{poolsize}.  The vesicles become tethered to the ribbon via tiny filaments and then move along the ribbon towards the active zone. Not much is known about how the vesicles move down the ribbon to the active zone, but recent research posits that vesicles passively diffuse along the ribbon without an active transport mechanism \cite{diamond}. Once the vesicles reach the bottom two rows of the ribbon they are considered part of the rapidly releasable pool (RRP).  Experiments have shown that the ribbon may play a role in priming the vesicles for release \cite{ribbondestruction}. As a result, the RRP can be released almost immediately following the opening of calcium channels. Once the RRP is depleted, additional vesicles from the reserve pool on the ribbon take their place. Empty sites on the ribbon are refilled by the mobile vesicles in the cell terminal. See Figure \ref{fig:vesiclecycle} for a cartoon of the vesicle cycle in a ribbon synapse.

There are many theories regarding the function of the synaptic ribbon. The ribbon appears to support high rates of sustained vesicle release \cite{structure}, but how the ribbon achieves this is still an open question.  One theory is that the ribbon acts as a ``conveyor belt'' shuttling vesicles toward the active zone \cite{belt}. Another theory posits that it serves to hold the vesicles in contact with each other to facilitate multivesicular release via compound fusion \cite{belt}. Another theory asserts that the ribbon slows the delivery of vesicles, regulating the timing of release \cite{cone3}. Yet another proposes that the ribbon functions to store the vesicles close to the active zone \cite{singlevesiclemotion}.

\begin{figure}[h]
\begin{center}
                \includegraphics[width=130mm]{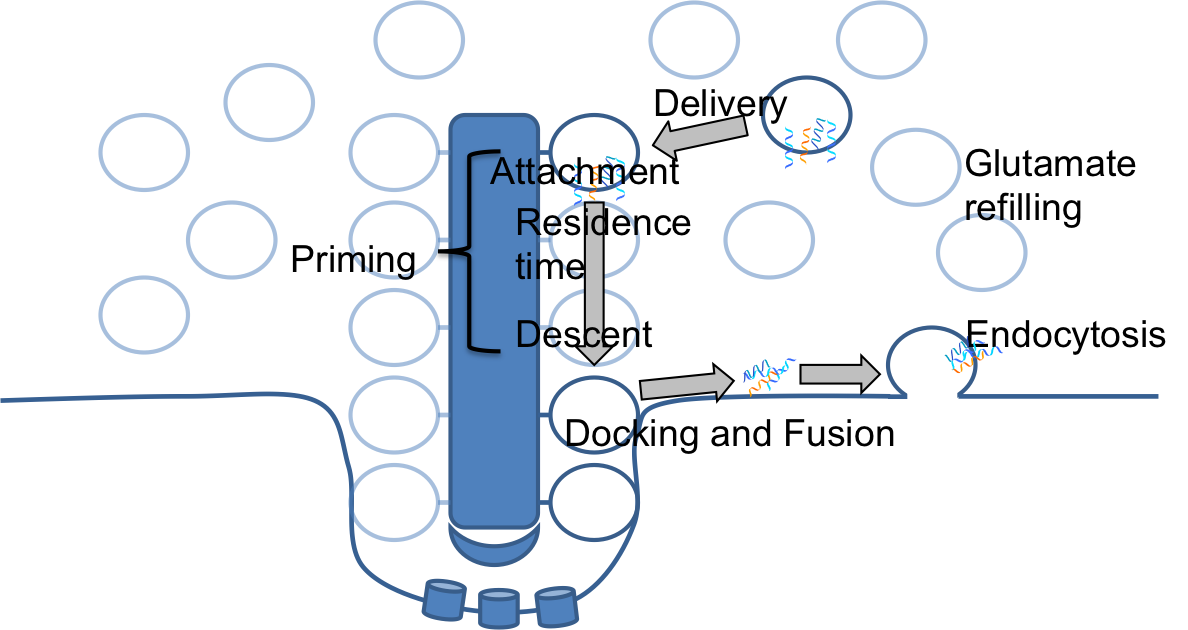}
                \caption{A cartoon depicting the vesicle cycle in the ribbon synapse \cite{ribbonfig}.}
                \label{fig:vesiclecycle}
\end{center}
\end{figure}

\vspace{-5mm}
\section{Questions about release and replenishment at the synaptic ribbon}

With the goal of better understanding how visual information is processed at the photoreceptor synapse in mind, we ask some questions about the role of the synaptic ribbon in the vesicle cycle of this synapse.

The number of vesicles released is stimulus-dependent, with stronger stimuli resulting in more vesicles released.  What causes this stimulus-dependence in the vesicle release? Does it depend solely on the probability of release or does it also depend the number of vesicles currently available on the ribbon?  To answer this question we created a model of release and replenishment using experimentally measured quantities to predict the unknown quantities of pool size and release probability. This allowed us to independently predict the pool size and the release probability to determine which changes with the stimulus strength. See Chapter \ref{releasereplenishmodel}.

There is an upper limit on the rate of sustained vesicle release. What is the rate-limiting factor for release? Studies indicate that vesicle replenishment is the rate-limiting step in sustained release \cite{cone3}, so we take a closer look at replenishment.  Vesicles move randomly in the cell terminal without a directed movement toward the ribbon or active zone \cite{cones}. This may be due to the fact that ribbon synapses lack synapsins, proteins that help bind vesicles to the actin cytoskeleton, allowing the vesicles to diffuse freely \cite{holt}.
Is this random motion of vesicles the rate-limiting step for replenishment?	To answer this question we designed a three-dimensional random walk model of vesicle replenishment to calculate the replenishment timescale.  We conclude that the random motion is not rate-limiting for replenishment.  See Section \ref{replenishment}.

It is known that Ca$^{2+}$ speeds the replenishment process \cite{cone2}. By what mechanism does Ca$^{2+}$ speed replenishment? To answer this question we modified our random walk model to test whether Ca$^{2+}$ acted on the ribbon or on the vesicles. We conclude that Ca$^{2+}$ affects the probability of attachment at sites on the ribbon rather than directly affecting vesicles. See Section \ref{calcium}.

To further study replenishment we ask two additional replenishment-related questions: (1) How many vesicles collide with the ribbon per second? and (2) How long does it take for the ribbon to fill? We can use our random walk model of replenishment to answer both. See Section \ref{otherquantities}.

Our random walk model does not take into account the geometry of the synaptic ribbon, so we designed a computational model to test the effects of ribbon geometry.  The computational model indicates that ribbon geometry does play a role in replenishment, so we also explore changes in local concentration near the ribbon and the effect of attachment probability on replenishment in an effort to explain the effects of geometry. The results appear in Section \ref{compmodel}.

\chapter{Model of vesicle release and replenishment}\label{releasereplenishmodel}

		Photoreceptors respond to changes in light by releasing vesicles from the ribbon. The amount of release depends on several key quantities: available pool size, release probability, and quantal amplitude.  The available pool size, $N$, is the number of vesicles on the ribbon that are primed and ready for release. The release probability, $P$, is the probability that a vesicle on the ribbon will be released. The quantal amplitude, $Q$, is the postsynaptic influence of a single vesicle.  We experimentally measure the response of photoreceptors to a given stimulus by measuring the postsynaptic currents\footnote{PSCs are generally measured in units of picoamps (pA).}  (PSCs) evoked in the postsynaptic cells onto which the photoreceptor synapses. Since the postsynaptic current is a linear sum of mini-EPSCs\footnote{ Mini-EPSCs (mEPSCs) are the change in current resulting from a single vesicle releasing its neurotransmitters.} \cite{mEPSC}, we can then estimate the number of vesicles released from the postsynaptic current using the quantal amplitude.
 
 The amount of release depends on the stimulus, so which of $N$, $P$, and/or $Q$ contribute to this stimulus-dependence? 
Changes in quantal amplitude occur on a longer timescale than our experiments \cite{quantal,quantal2, quantal3, quantal4}. Thus $Q$ cannot change quickly enough to be the cause of the stimulus-dependent changes in postsynaptic response.  Stimulus-dependent changes in postsynaptic response are often due to Ca$^{2+}$-dependent changes in $P$ \cite{mine2}, but it is also possible that stronger stimuli allow Ca$^{2+}$ to spread further up the ribbon, effectively increasing $N$. With these possibilities in mind, are the stimulus-dependent changes then due only to changes in the release probability, $P$, or are they a result of changes in $N$ as well?
 
In this section we will discuss a paradox that arises when asking this question. We then provide a model that  estimates $N$ and $P$ independently, based on experimental data, allowing us to resolve this paradox. We will also describe a generalization of the model.  The results in Sections \ref{dynamics}-\ref{ourmodel} are published in \cite{mine2}. The generalized model results in Section \ref{genmodel} are an unpublished extension of this work.

\section{Dynamics of release and replenishment}\label{dynamics}

Vesicles on the ribbon are released when the photoreceptor is stimulated and the amount released depends on the stimulus. As the vesicles are released, the empty sites on the ribbon are replenished by vesicles freely diffusing in the cell terminal. To study release and replenishment we model the available pool size during alternating periods of release and replenishment.  Figure \ref{fig:simplemodel} shows a cartoon of the model.  The variable $A(t)$ tracks the number of vesicles on the ribbon at time $t$. When the stimulus is on, $A(t)$ decreases due to vesicle release and when the stimulus is off, $A(t)$ increases due to vesicle replenishment. When designing the model, we make several key assumptions. We assume that no replenishment occurs during the periods of release. We also assume identical stimuli for the basic model, but the generalized model allows for multiple stimulus types in the same trial. By patching together the release and replenishment dynamics, using the ending value of one period as the initial value of the next, we can create a model of available pool size.

\begin{figure}
\begin{center}
		\includegraphics[width=120mm]{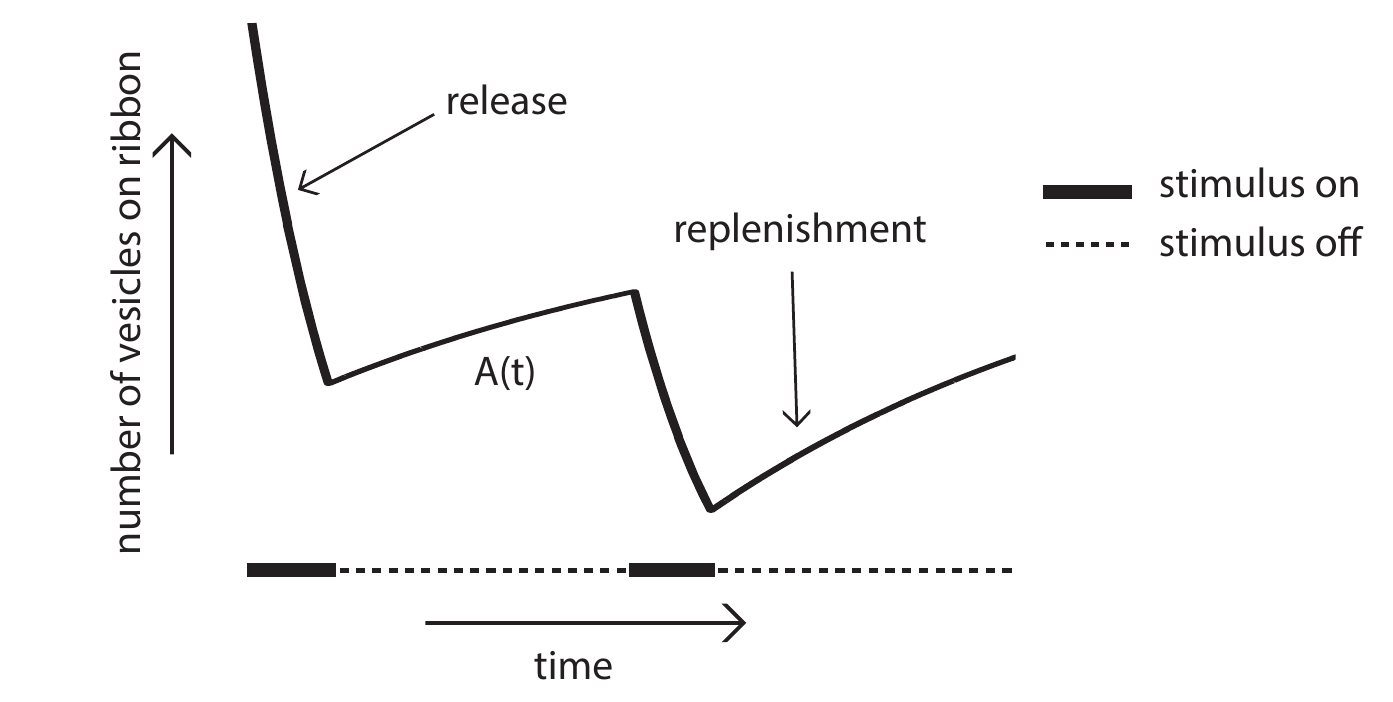}        
                \caption{Cartoon of release/replenishment model}
                \label{fig:simplemodel}
\end{center}
\end{figure}

\paragraph{Release dynamics.} The cumulative release at time $t$, $c(t)$, is governed by the differential equation:

\begin{equation}\dfrac{dc}{dt}=\dfrac{p_s(A_i-c)}{\tau_r}\end{equation}
where $p_s$ is the stimulus-specific probability of release, $A_i$ is the pool size at the beginning of the $i$th pulse, and $\tau_r$ is the time constant of release.  In salamander cones, the cumulative release curve can be fit by a two term exponential, one of the release time constants, $\tau_r$, is around $5$ ms and the other is too long to be accurately measured in our experimental setup, so we omit it from the model.  The timescale $\tau_r$ regulates the release for strong stimuli (e.g. steps to -19 mV), and for weaker stimuli (e.g. steps to -39 mV) the time constant is made effectively slower by the release factor $p_s$. Hence for strong stimuli we have $p_s=1$. 
Solving the differential equation yields \begin{equation} c(t)=A_i(1-e^{-p_st/\tau_r}).\end{equation}  Thus the release during the $i$th pulse is given by 
\begin{equation}R_i=c(\Delta t)=P_sA_i\end{equation}
where $P_s = 1-e^{-p_s\Delta t/\tau_r}$. Note that for 25 ms steps to -19 mV, $P_{-19} = 1-e^{-5} \approx 0.9933$. This is consistent with previous work showing that steps to -19 mV are strong enough to stimulate the release of nearly the entire pool of vesicles \cite{poolsize}. 

\paragraph{Replenishment dynamics} The cumulative replenishment at time $t$, $a(t)$, is governed by the differential equation:

\begin{equation}\dfrac{da}{dt}=\dfrac{n-a}{\tau_a}\end{equation}

\noindent where $n$ is number of sites unoccupied at the end of a pulse and $\tau_a$ is the time constant of replenishment. In salamander cones, replenishment is modeled with a two term exponential with time constants $\tau_{\text{fast}}=815$ ms and $\tau_{\text{slow}}=13$ s. Since the experiments occur on a much faster timescale than $\tau_{\text{slow}}$, we ignore $\tau_{\text{slow}}$ in our model.  Solving the differential equation yields \begin{equation} a(t)=n(1-e^{-t/\tau_a}).\end{equation} Thus the amount of replenishment after the $i$th pulse is given by \begin{equation} a(T)=n(1-\beta)\end{equation} where $\beta=e^{-T/\tau_a}$. Recall that we have chosen to omit the slow replenishment time constant, so after the initial release we assume only the fast-replenishing sites have time to fill during our replenishment period. Depending on the stimulus, we have a different fraction, $f_s$, of vesicles that are subject to the fast time constant. For steps to -19 mV, $f_{-19}=0.76$ and for steps to -39 mV, $f_{-39}=0.55$. Thus the number of available sites at the end of the $i$th pulse is $n = f_sA_s-A_{i}(1-P_s)$, where $A_s$ is the maximum pool size for stimulus $s$.

\section{Measuring available pool size}

\subsection{Pulse train experiments}\label{pulsetrain}
		To study the release and replenishment dynamics we consider a pulse train experiment. In this setup the cone is voltage-clamped near resting membrane potential and a steady train of pulses, i.e. voltage jumps, is applied to the presynaptic cell. Each pulse has duration $\Delta t$ ms, and the time between pulses is $T$ ms. During a given pulse the voltage jumps up to a chosen voltage step (e.g. a step to -19 mV) and between steps the voltage returns to -79 mV (see Figure \ref{fig:pulsetrain}).  The postsynaptic currents (PSCs) are measured in the postsynaptic horizontal cells.  Since postsynaptic response is the result of a linear sum of independent quantal release events (mEPSCs) \cite{mEPSC}, we can use the postsynaptic measurements to estimate the number of vesicles released from the cones\footnote{The mean amplitude of an mEPSC in the salamander retina, i.e. how much a single vesicle contributes, is around 6.5 $\pm$ 1.6 pA \cite{mEPSC}.}. 
Once the release and replenishment reach an equilibrium where release is limited by replenishment, we can measure the limiting release.  Our goal is to design a model that can predict maximum pool size and release probability using the first release and limiting release values measured during such an experiment.  

\begin{figure}
\begin{center}
		\includegraphics[width=120mm]{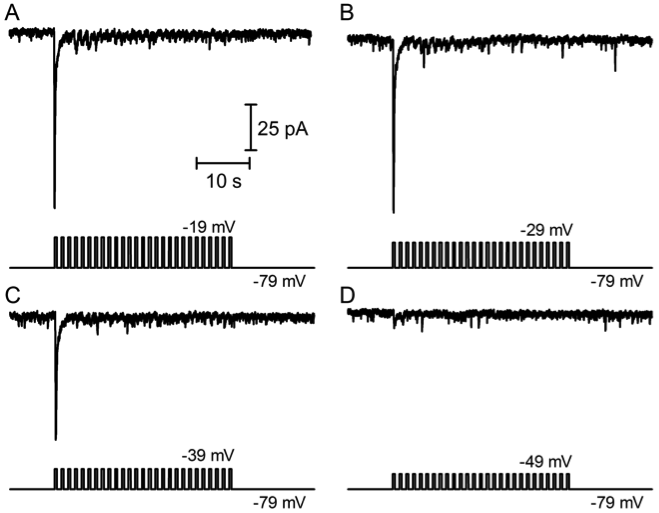}
                
                \caption{Pulse trains for voltage jumps to -19 mV,-29 mV, -39 mV, and -49 mV. Note that the stronger pulses have larger first release peaks \cite{mine2}.}
                \label{fig:pulsetrain}
\end{center}
\end{figure}

\subsection{An apparent paradox}
Previous work uses a method of back-extrapolation to estimate the maximum pool size, $A$ \cite{sakaba}.  This method considers the cumulative release curve and fits a line to the steady state response that occurs when release is limited by replenishment. Back-extrapolating to the time 0 gives an estimate of $A$. This method predicts that the maximum pool size is significantly smaller for weaker stimuli (see Figure \ref{backextrap}).  
The amplitude of the releasable pool predicted by back-extrapolation is 80 pA for -39 mV, 105 pA for -29 mV, and 132 pA for -19 mV \cite{mine2}.

\begin{figure}

\begin{center}

\includegraphics[width=150mm]{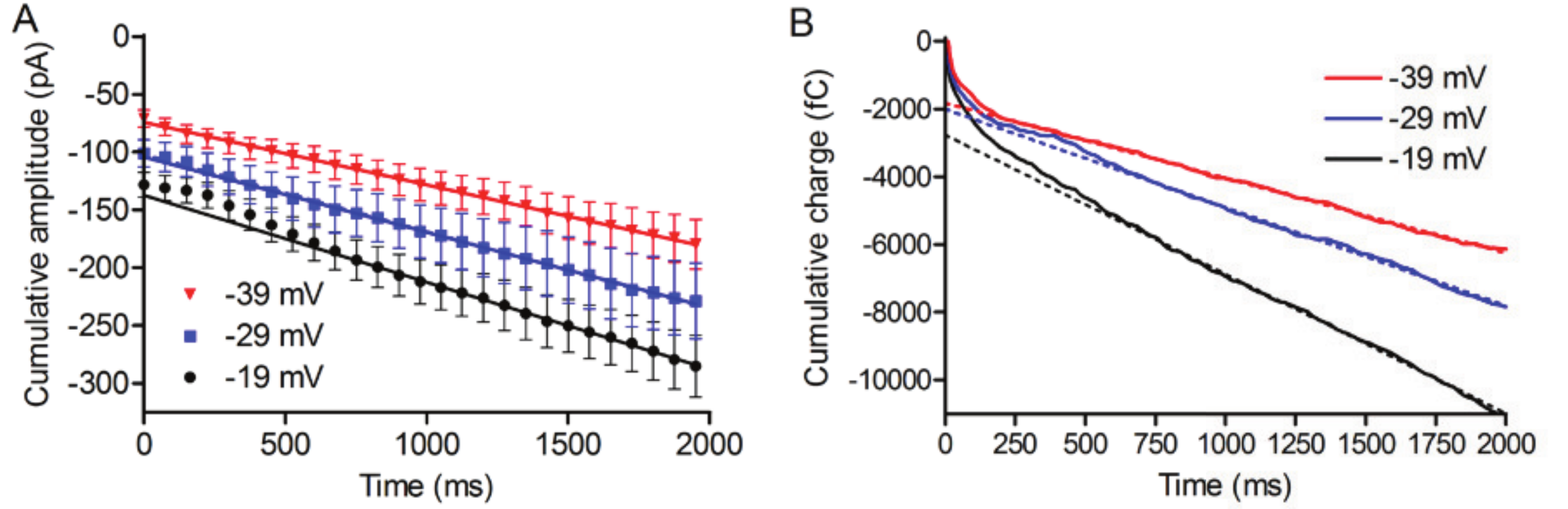}
\caption{Back-extrapolation method for -19 mV, -29 mV, and -39 mV pulses: Panel A shows a plot of cumulative amplitude in pA and Panel B show a plot of cumulative charge in fC \cite{mine2}. To predict maximum pool size we back-extrapolate from the steady state to $t=0$.}\label{backextrap}
\end{center}
\end{figure}

One of the pitfalls of the back-extrapolation method is that it assumes that the replenishment rate is constant. As we saw in Section \ref{dynamics}, the replenishment rate is certainly not constant in salamander cones. This causes the method to underestimate the maximum available pool size since replenishment is faster when the ribbon has more available space. The method is close for the stronger stimuli because the pulse train stabilizes to the limiting release right away when exposed to a strong stimulus. It also does not take into account the fast and slow replenishing sites. After the first pulse, slow sites don't have time to fill between pulses, so the limiting release reflects only the replenishment of the fast sites.

\section{Using our model to predict pool size}\label{ourmodel}

In our model, we let both the maximum pool size $A_s$ and the release probability $P_s$ vary independently to determine which causes the voltage-dependent changes in vesicle release. We want to estimate both $A_s$ and $P_s$ in terms of the measured first release $(R_1)_s$ and limiting release $R_s$ for each stimulus $s$.

\subsection{Derivation of pool size and release probability formulas}\label{derivations}

Let $A_i$ be the pool size at the beginning of the $i$th pulse, $c_i(t)$ be the cumulative release $t$ milliseconds into the $i$th pulse, and $a_i(t)$ be the cumulative replenishment $t$ seconds into the $i$th pulse.
 Then we can compute $A_i$ by taking the pool size at the beginning of the previous pulse, subtracting the release during that pulse, and adding the amount replenished before the $i$th pulse, i.e. $A_i=A_{i-1}-c_{i-1}(\Delta t)+a_{i-1}(T)$.
 \begin{figure}[h]
\begin{center}
\includegraphics[width=150mm]{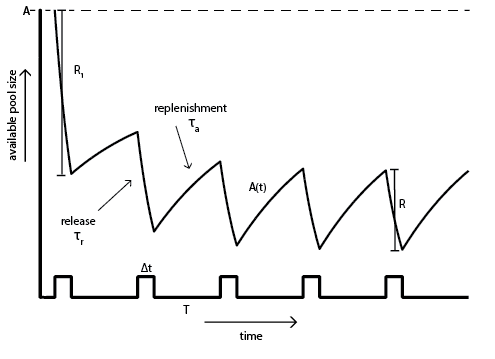}
\caption{Schematic of the pulse train setup \cite{mine2}: $A(t)$ keeps track of pool size at time $t$. Stimulus pulses of duration $\Delta t$ cause vesicle release and between pulses we have replenishment periods of duration $T$. Note that the pool size decreases during the pulses due to vesicle release and increases in between the pulses due to replenishment. The maximum possible pool size is denoted by $A$.}\label{schematic}
\end{center}
\end{figure} Note that $c_i(\Delta t) = A_iP_s$ and $a_i(T) = (f_sA_s-A_{i-1}(1-P_s))(1-\beta)$.  Thus 
\begin{equation} A_i = bA_{i-1}+c
\end{equation}
where $b = \beta(1-P_s)$ and $c = f_sA_s(1-\beta)$.

Solving the recursion we get \begin{equation} A_i = A_1b^{i-1}+c\dfrac{1-b^{i-1}}{1-b}.\end{equation}
The details of solving the recursion appear in Lemma \ref{recursionproof}.
Taking the limit as $i\rightarrow\infty$ gives \begin{equation}
A_\infty = \lim_{i\rightarrow\infty}A_i = \lim_{i\rightarrow\infty}\left(A_1b^{i-1}+c\dfrac{1-b^{i-1}}{1-b}\right)=\dfrac{c}{1-b}=\dfrac{f_sA_s(1-\beta)}{1-\beta(1-P_s)},\end{equation}
which represents the pool size at the beginning of each pulse during the steady state.
Since $R_i=P_sA_i$ is the amount released during the $i$th pulse, then the limiting release is given by \begin{equation} R_s=P_sA_\infty=\dfrac{P_sf_sA_s(1-\beta)}{1-\beta+\beta P_s}.\end{equation}
Solving for $A_s$ yields \begin{equation} A_s = \dfrac{R_s(1-\beta+\beta P_s)}{P_sf_s(1-\beta)}=\dfrac{R_s}{f_s}\left(\dfrac{1}{P_s}+\dfrac{\beta}{1-\beta}\right).\end{equation}
Also, note that $(R_1)_s=A_1P_s=A_sP_s$. Solving for $P_s$ and substituting into our equation for $A_s$ gives \begin{equation}\label{A_s} A_s=\left(\dfrac{\beta}{1-\beta}\right)\dfrac{R_s(R_1)_s}{(f_s(R_1)_s-R_s)}.\end{equation}
Using the fact that $(R_1)_s=A_sP_s$, we can also find an expression for $P_s$, \begin{equation}\label{P_s} P_s=\left(\dfrac{1-\beta}{\beta}\right)\dfrac{f_s(R_1)_s-R_s}{R_s}.\end{equation}
Equations \ref{A_s} and \ref{P_s} give formulas for independently estimating the maximum pool size and release probability for each stimulus given the first release and limiting release.

\subsection{Estimating pool size and release probability from data}

\noindent During pulse trains with steps to -19 mV and -39 mV we measure the amplitude of the first pulse as well as the limiting release. Limiting release is estimated by measuring the cumulative increase in amplitude 1--2 seconds into the pulse train \cite{mine2}. With $f_{-19}=0.76$, $f_{-39}=0.55$, $\Delta t=25$ms, $\tau_a=815$ms, and $\tau_r=5$ms, we can estimate the pool size for the two stimulus types. The results are recorded in Table \ref{poolsizetable}.  Note that although the amplitude of the first pulse varies significantly with stimulus strength, the predicted pool sizes are roughly the same. Using our formula to predict the release probabilities in the 5 mM EGTA cases, we see $P_{-19}\approx1$ for the strong stimuli and $P_{-39}\approx0.5$ for the weak stimuli.  This supports the hypothesis that changes in release probability alone cause voltage-dependent changes in release.  Note that the first release during strong pulses is nearly identical to the estimated pool size, consistent with a release probability of 1. Also recall that back-extrapolation predicted a pool size of 80 pA for a stimulus of -39 mV while our model estimates that the pool size is closer to 131 pA. 

\begin{table}\begin{center}
\begin{tabular}{|c|c|c|c|c|c|}
\hline
\footnotesize stimulus & \footnotesize EGTA & \footnotesize $T$ & \footnotesize PSC amplitude  & \footnotesize Predicted pool size, $A_s$ & \footnotesize Ratio \\
& & &  \footnotesize (first pulse, $R_1$) & \footnotesize (PSC amplitude) & \footnotesize $A_{-39}/A_{-19}$\\
\hline
-19 mV & 5 mM & 50 ms & 128.2$\pm$10.9 pA & 131.3 pA & 1.0\\
-39 mV & 5 mM & 50 ms & 70.9$\pm$7.4 pA & 131.2 pA& \\
-19 mV & 5 mM & 125 ms & 135.5$\pm$ 15.8 pA & 136.9 pA& 0.96 \\
-39 mV & 5 mM & 125 ms & 71.3$\pm$ 12.8 pA & 131.2 pA &\\
-19 mV & 0.05 mM & 50 ms & 91.1$\pm$ 18.2 pA & 110.9 pA &1.02\\
-39 mV & 0.05 mM & 50 ms & 38.5$\pm$ 9.5 pA & 113.6 pA &\\
\hline
\end{tabular}
\end{center}
\caption{Pool size predictions for several experimental conditions \cite{mine2}.}\label{poolsizetable}
\end{table}

Additional experiments were done with a weaker Ca$^{2+}$ buffer of 0.05mM EGTA compared to 5 mM EGTA. These results were similar to those with 5 mM EGTA (see Table \ref{poolsizetable}). The slightly smaller first responses in the experiments with 0.05 mM EGTA are likely due to the smaller number of ribbon contacts per postsynaptic HC (an average of 2.79 and 2.95 ribbon contacts for experiments with 5 mM EGTA versus an average of 1.98 ribbon contacts for experiments with 0.05 mM EGTA)\cite{mine2}.  This provides additional evidence that the increased spread of Ca$^{2+}$ does not increase the available pool size.

		\section{Generalization of release/replenishment model}\label{genmodel}
		In this section, we generalize the model from Section \ref{ourmodel}. We originally assumed that all pulses were of equal strength and duration. In the generalized model, both pulses and replenishment periods can vary in duration and pulses can also vary in stimulus strength. The results in this section make use of several lemmas and a proposition whose statements and proofs appear in Section \ref{computations}. 
		\subsection{Generalized pulse trains}
		\begin{figure}[h]
\begin{center}
                \includegraphics[width=150mm]{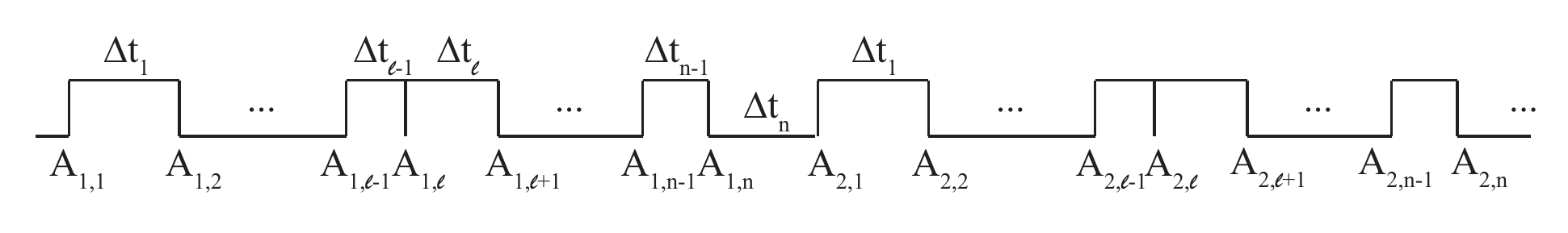}
           
                \caption{Generalized model setup: $A_{i,\ell}$ denotes available pool size at the beginning of the $\ell$th period of the $i$th cycle and $\Delta t_\ell$ is the duration of the $\ell$th period.}
                \label{fig:gensetup}
\end{center}
\end{figure}
		
We consider a generalized pulse train with $n$ periods of release and replenishment. The length of the $\ell$th period is $\Delta t_{\ell}$. We do not necessarily alternate between release and replenishment.  Once we have cycled through all $n$ periods we started at the beginning and repeat the periods in the same order. We denote the available pool size at the beginning of the $\ell$th period of the $i$th cycle by $A_{i,\ell}$ and assume the ribbon is full at $t=0$, so $A=A_{1,1}$ is the maximum available pool size. See Figure \ref{fig:gensetup}. Let $a_{i,\ell}(t)$ be the total change in pool size $t$ seconds into period $\ell$ of cycle $i$ with $a_{i,\ell}(0)=0$. If index $j$ denotes a release period then the dynamics is governed by \begin{equation*} \dfrac{da_{i,j}}{dt}=-\dfrac{1}{\tau_j}(A_{i,j}+a_{i,j}).\end{equation*}  If index $k$ denotes a replenishment period then the dynamics is governed by \begin{equation*}\dfrac{da_{i,k}}{dt}=\dfrac{1}{\tau_{k}}(A-(A_{i,k}+a_{i,k})).\end{equation*}
\subsection{Setting up and solving the recursion}
Note that this setup gives 
\begin{eqnarray}\label{gensetup}
A_{i,\ell} &=& A_{i,\ell-1}+a_{i,\ell-1}(\Delta t_{\ell-1})~~\text{for}~1<\ell\le n~\text{and } i\ge1\\ 
A_{i,1} &=& A_{i-1,n}+a_{i-1,n}(\Delta t_n) ~~~~~~~\text{for}~i>1\\\label{gensetup2}
A_{1,1} &=& A
\end{eqnarray}
Then we get \begin{equation}\label{recursion} \displaystyle A_{i,1}=A_{i-1,1}+\sum_{\ell=1}^{n} a_{i-1,\ell}(\Delta t_\ell).\end{equation}  In release periods $a_{i,j}(t)=-A_{i,j}(1-e^{-t/\tau_{j}})$ and in replenishment periods $a_{i,k}(t)=(A-A_{i,k})(1-e^{-t/\tau_{k}})$. Let $\alpha_\ell=e^{-\Delta t_\ell/\tau_\ell}$ so we can rewrite the terms in the sum using \begin{equation}\label{recursion2} a_{i,\ell}(\Delta t_\ell)=(\theta(\ell)A-A_{i,\ell})(1-\alpha_\ell)\end{equation} where \begin{equation*}
   \theta(\ell) = \left\{
     \begin{array}{ll}
       1 & \text{if } \ell\text{ indexes a replenishment period}\\
       0 & \text{if }\ell\text{ indexes a release period}
     \end{array}
   \right.
\end{equation*} Using Equations \ref{gensetup}-\ref{gensetup2} and simplifying we can write Equation \ref{recursion} in terms of $A_{i-1,1}$:
\begin{dmath}\label{genrecursion}
\displaystyle A_{i,1}=A_{i-1,1}\left(\prod_{\ell=1}^{n} \alpha_\ell\right)+A\left(\sum_{\ell\in L}(1-\alpha_\ell)\prod_{r=\ell+1}^n\alpha_r\right)
\end{dmath}
for $i\ge2$ where $\alpha_\ell=e^{-\Delta t_\ell/\tau_\ell}$, and $L=\{\ell:\theta(\ell)=1\}$. See the proof of Proposition \ref{prop1} in Section \ref{computations} for the details. Now we have a recurrence in $A_{i,1}$, which we can solve using ordinary generating functions. Note that the recurrence is of the form $A_{i}=bA_{i-1}+c$ where $b=\prod_{\ell=1}^{n} \alpha_\ell$ and $c=A\sum_{\ell\in L}(1-\alpha_\ell)\prod_{r=\ell+1}^n\alpha_r$. Then ordinary generating functions gives a solution of \begin{equation*} A_{i,1}=Ab^{i-1}+c\left(\dfrac{b^{i-1}-1}{b-1}\right).\end{equation*} Note that since $b=\prod_{\ell=1}^n \alpha_\ell<1$, then \begin{equation*} A_{\infty,1}=\displaystyle\lim_{i\rightarrow\infty} A_{i,1}=\dfrac{c}{1-b}=\displaystyle A\dfrac{\sum_{\ell\in L}(1-\alpha_\ell)\prod_{r=\ell+1}^n\alpha_r}{1-\prod_{\ell=1}^n \alpha_\ell}.\end{equation*} Thus we have a limit cycle.

\medskip

\noindent For $\ell$ s.t. $\theta(\ell)=0$, we have that the total release during that period is given by \begin{equation*}R_{i,\ell}=A_{i,\ell}(1-\alpha_\ell).\end{equation*} Lemma \ref{lemma1} gives a closed formula for $A_{i,\ell}$ in terms of $A_{i,1}$, so \begin{equation*} R_{i,\ell}=A_{i,1}(1-\alpha_\ell)\prod_{r=1}^{\ell-1}\alpha_r+A\sum_{r=1}^{\ell-1} \theta(r)(1-\alpha_r)(1-\alpha_\ell)\prod_{s=r+1}^{\ell-1}\alpha_s.\end{equation*}

\subsection{Special cases of the generalized model}

\paragraph{Release only.}
Consider the case where each period is a release period. Then $L=\emptyset$. So for the recursion we have $$A_{i,1}=\left(\prod_{\ell=1}^n\alpha_\ell\right) A_{i-1,1},$$ which has solution $$A_{i,1}=A\left(\prod_{\ell=1}^n\alpha_\ell\right)^{i-1}.$$ When we take the limit we get $$A_{\infty,1}=\displaystyle\lim_{i\rightarrow\infty} A_{i,1}=A\cdot0=0$$ as expected.

\paragraph{Replenishment only.}
Consider the case where each period is a replenishment period. Then $L=\{1,\ldots,n\}$. So for the recursion we have $$A_{i,1}=\left(\prod_{\ell=1}^n\alpha_\ell\right)A_{i-1,1}+A\sum_{\ell=1}^n (1-\alpha_\ell)\prod_{r=\ell+1}^n\alpha_r=\left(\prod_{\ell=1}^n\alpha_\ell\right)A_{i-1,1}+A\left(1-\prod_{\ell=1}^n\alpha_\ell\right),$$ which has solution $$A_{i,1}=A\left(\prod_{\ell=1}^n\alpha_\ell\right)^{i-1}+A\left(1-\left(\prod_{\ell=1}^n\alpha_\ell\right)^{i-1}\right).$$ When we take the limit we get $$A_{\infty,1}=\displaystyle\lim_{i\rightarrow\infty} A_{i,1}=A$$ as expected.

\paragraph{Alternating release and replenishment periods.}
Consider the case where we have alternating periods of release and replenishment, starting with release. Then $L=\{2,4,\ldots,n\}$. So for the recursion we have $$A_{i,1}= \left(\prod_{\ell=1}^n\alpha_\ell\right)A_{i-1,1}+A\sum_{\ell \text{ even}} (1-\alpha_\ell)\prod_{r=\ell+1}^n=\left(\prod_{\ell=1}^n\alpha_\ell\right)A_{i-1,1}+A\sum_{\ell=1}^n(-1)^\ell\prod_{r=\ell+1}^n\alpha_r,$$ which has solution $$A_{i,1}=A\left(\prod_{\ell=1}^n\alpha_\ell\right)^{i-1}+\left(A\sum_{\ell=1}^n(-1)^\ell\prod_{r=\ell+1}^n\alpha_r\right)\left(\dfrac{1-\left(\prod_{\ell=1}^n\alpha_\ell\right)^{i-1}}{1-\prod_{\ell=1}^n\alpha_\ell}\right).$$ When we take the limit we get $$A_{\infty,1}=\displaystyle\lim_{i\rightarrow\infty} A_{i,1}=\dfrac{A\sum_{\ell=1}^n (-1)^\ell\prod_{r=\ell+1}^n \alpha_r}{1-\prod_{\ell=1}^n \alpha_\ell}.$$ We can calculate the total release during release period $\ell$ by $$R_{i,\ell}=A_{i,1}(1-\alpha_\ell)\prod_{r=1}^{\ell-1}\alpha_r+A(1-\alpha_\ell)\sum_{r=1}^{\ell-1} (-1)^r \prod_{s=r+1}^{\ell-1} \alpha_s.$$

\paragraph{Alternating release and replenishment, original model. }
In the case of our experiment we have $\alpha_1=\alpha$, $\alpha_2=\beta$, $L=\{2\}$, and $n=2$. So for the recursion we have $$A_{i,1}=A_{i-1,1}\alpha\beta+A(1-\beta),$$ which has solution $$A_{i,1}=A(\alpha\beta)^{i-1}+A(1-\beta)\dfrac{1-(\alpha\beta)^{i-1}}{1-\alpha\beta}.$$ Taking the limit we get $$A_{\infty,1}=\displaystyle\lim_{i\rightarrow\infty} A_{i,1}=\dfrac{A(1-\beta)}{1-\alpha\beta}.$$ Note that $R_{i,1}=A_{i,1}(1-\alpha)$ for all $i$. So $$R_{i,1}=A(\alpha\beta)^{i-1}(1-\alpha)+A(1-\beta)(1-\alpha)\dfrac{1-(\alpha\beta)^{i-1}}{1-\alpha\beta}. $$ Thus $R=\lim_{i\rightarrow\infty} R_{i,1}=\dfrac{A(1-\alpha)(1-\beta)}{1-\alpha\beta}$, which matches our prediction from Section \ref{derivations} (without the $f$).

\paragraph{Two different release periods, but identical replenishment.}
Here we have alternating periods of release and replenishment with two different release periods, starting with release, so $L=\{2,4\}$. We also assume that all of the replenishment periods have the same dynamics. Let $\beta:=\alpha_2=\alpha_4$. So for the recursion we have $$A_{i,1}= \beta^2\alpha_1\alpha_3A_{i-1,1}+A(1-\beta)(1+\alpha_3\beta),$$ which has solution $$A_{i,1}=A(\beta^2\alpha_1\alpha_3)^{i-1}+A(1-\beta)(1+\alpha_3\beta)\dfrac{1-(\beta^2\alpha_1\alpha_3)^{i-1}}{1-\beta^2\alpha_1\alpha_3}.$$ Taking the limit we get $$A_{\infty,1}=\displaystyle\lim_{i\rightarrow\infty} A_{i,1}= \dfrac{A(1-\beta)(1+\beta\alpha_3)}{1-\beta^2\alpha_1\alpha_3}.$$

\subsection{Supporting lemmas}\label{computations}
In this section, we give the proof of the recursion in Equation \ref{recursion}. We also give proofs of the supporting lemmas used to prove Proposition \ref{prop1} (Equation \ref{genrecursion}).

\begin{lemma}\label{recursionproof}
The solution to the recursion $A_i=bA_{i-1}+c$ is $A_i = A_1b^{i-1}+c\dfrac{1-b^{i-1}}{1-b}$.
\end{lemma}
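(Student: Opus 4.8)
The plan is to prove this by induction on $i$, since the statement is just a closed form for a first-order linear recurrence and the inductive step reduces to a one-line algebraic simplification. For the base case $i=1$, the proposed formula gives $A_1 b^{0} + c\frac{1-b^{0}}{1-b} = A_1 + 0 = A_1$, which is immediate. For the inductive step, I would assume the formula holds for $i-1$, substitute $A_{i-1} = A_1 b^{i-2} + c\frac{1-b^{i-2}}{1-b}$ into $A_i = bA_{i-1} + c$, and collect terms: the $A_1$-term becomes $A_1 b^{i-1}$, while the rest simplifies via $\frac{cb(1-b^{i-2})}{1-b} + c = \frac{cb - cb^{i-1} + c - cb}{1-b} = \frac{c(1-b^{i-1})}{1-b}$, which is exactly the claimed coefficient of $c$.

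Alternatively --- and this is closer to the ``ordinary generating functions'' comment made after Equation \ref{genrecursion} --- one could unroll the recursion directly: iterating $A_i = bA_{i-1}+c$ a total of $i-1$ times yields $A_i = b^{i-1}A_1 + c\sum_{k=0}^{i-2} b^k$, and then the finite geometric series identity $\sum_{k=0}^{i-2} b^k = \frac{1-b^{i-1}}{1-b}$ finishes the argument. A genuine generating-function proof would set $F(x) = \sum_{i\ge 1} A_i x^i$, translate the recurrence into $F(x) = A_1 x + bxF(x) + \frac{cx^2}{1-x}$, solve for $F(x)$, and extract the coefficient of $x^i$ by partial fractions; this is heavier machinery than the statement needs but gives the same formula. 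I would present the induction version as the cleanest.

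There is essentially no real obstacle here; the only point deserving a remark is that dividing by $1-b$ is legitimate, i.e.\ that $b\neq 1$. In every application of this lemma in the text one has $b = \beta(1-P_s)$ or $b = \prod_{\ell=1}^n \alpha_\ell$, both strictly less than $1$, so the formula is well defined and, as noted, the limit $i\to\infty$ exists and equals $\frac{c}{1-b}$. For completeness one could observe that in the degenerate case $b=1$ the recurrence instead solves to $A_i = A_1 + (i-1)c$, but that case does not arise here.
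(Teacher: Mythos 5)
Your proof is correct, but your preferred route differs from the paper's. The paper proves Lemma \ref{recursionproof} by the generating-function method you sketch as your third alternative: it multiplies the recurrence by $x^i$, sums over $i\ge 2$, sets $A(x)=\sum_{i\ge 1}A_ix^i$, obtains $A(x)-A_1x=bxA(x)+c\left(\frac{1}{1-x}-1-x\right)$ (which is exactly your $F(x)=A_1x+bxF(x)+\frac{cx^2}{1-x}$), solves for $A(x)$, and expands the result as a product of geometric series to read off the coefficient of $x^i$ as $A_1b^{i-1}+c\sum_{k=0}^{i-2}b^k$. Your induction argument is more elementary and entirely adequate --- the inductive step you display, $\frac{cb(1-b^{i-2})}{1-b}+c=\frac{c(1-b^{i-1})}{1-b}$, is the whole content --- and your ``unrolling'' variant lands on precisely the same intermediate expression $A_1b^{i-1}+c\sum_{k=0}^{i-2}b^k$ that the paper reaches, just without the formal power series scaffolding. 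What the generating-function approach buys is consistency with the generalized model of Section \ref{genmodel}, where the same technique is invoked for the recurrence in Equation \ref{genrecursion}; what your induction buys is brevity and no need to worry about formal manipulations of series. Your remark that $b\neq 1$ is needed to divide by $1-b$ (and that $b<1$ in every application) is a point the paper leaves implicit, so it is a welcome addition rather than a discrepancy.
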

\noindent In the following proof we use a standard generating function technique for solving recursions found in \cite{genfunc}.
\begin{proof}
To solve using generating functions we first multiply the recursion by $x^i$ and sum over $i\ge 2$ to get:

\begin{equation*}
\sum_{i\ge2}A_ix^i=b\sum_{i\ge2}A_{i-1}x^i+c\sum_{i\ge2}x^i.
\end{equation*}
Let $A(x) = \sum_{i\ge1}A_ix^i$. Then
\begin{equation*}
A(x)-A_1x=bxA(x)+c\left(\dfrac{1}{1-x}-1-x\right).
\end{equation*}
Solving for $A(x)$ yields:

\begin{eqnarray*}
A(x)&=&A_1\dfrac{x}{1-bx}+c\dfrac{x^2}{(1-x)(1-bx)}= A_1x\sum_{i\ge0}b^ix^i+cx^2\sum_{i\ge0}x^i\sum_{i\ge0}b^ix^i\\
&=& \sum_{i\ge1}\left(A_1b^{i-1}+c\sum_{k=0}^{i-2}b^k\right)x^i= \sum_{i\ge1}\left(A_1b^{i-1}+c\dfrac{1-b^{i-1}}{1-b}\right)x^i.
\end{eqnarray*}
Thus, $A_i =  A_1b^{i-1}+c\left(\dfrac{1-b^{i-1}}{1-b}\right)$.

\end{proof}

\noindent Recall $\alpha_{\ell} = e^{-\Delta t_{\ell}/\tau_\ell} $ and  $ \theta(\ell) = \left\{
     \begin{array}{ll}
       1 & \text{if } \ell\text{ indexes a replenishment period}\\
       0 & \text{if }\ell\text{ indexes a release period}
     \end{array}
   \right.$. 

\begin{lemma}\label{lemma1} $\displaystyle A_{i,\ell}=A_{i,1}\prod_{r=1}^{\ell-1} \alpha_r+A\sum_{r=1}^{\ell-1}\theta(r)(1-\alpha_r)\prod_{s=r+1}^{\ell-1}\alpha_s$.
\end{lemma}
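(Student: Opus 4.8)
The plan is to fix the cycle index $i$ and induct on the period index $\ell$. The one ingredient needed is a clean one-step recursion in $\ell$, which comes from combining Equation \ref{gensetup} with Equation \ref{recursion2}: substituting $a_{i,\ell-1}(\Delta t_{\ell-1})=(\theta(\ell-1)A-A_{i,\ell-1})(1-\alpha_{\ell-1})$ into $A_{i,\ell}=A_{i,\ell-1}+a_{i,\ell-1}(\Delta t_{\ell-1})$ and collecting terms gives
\begin{equation*}
A_{i,\ell}=\alpha_{\ell-1}A_{i,\ell-1}+\theta(\ell-1)A(1-\alpha_{\ell-1}),\qquad 1<\ell\le n .
\end{equation*}
So the claimed formula is just the unrolling of this first-order linear recursion from $\ell$ back down to $1$, with $A_{i,1}$ playing the role of the initial condition.

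For the base case $\ell=1$, both products $\prod_{r=1}^{0}\alpha_r$ and the sum $\sum_{r=1}^{0}(\cdots)$ in the claimed formula are empty, so the right-hand side collapses to $A_{i,1}$, as required. For the inductive step, assume the formula holds at $\ell$ and apply the one-step recursion to pass to $\ell+1$: multiply the inductive hypothesis by $\alpha_\ell$ and add $\theta(\ell)A(1-\alpha_\ell)$. Multiplying through turns $\prod_{r=1}^{\ell-1}\alpha_r$ into $\prod_{r=1}^{\ell}\alpha_r$ and extends each $\prod_{s=r+1}^{\ell-1}\alpha_s$ to $\prod_{s=r+1}^{\ell}\alpha_s$; the added term $\theta(\ell)A(1-\alpha_\ell)$ is precisely the $r=\ell$ summand, since its (empty) product $\prod_{s=\ell+1}^{\ell}\alpha_s$ equals $1$. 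Hence it merges into the sum and we recover the formula with $\ell$ replaced by $\ell+1$.

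There is no analytic content here beyond solving a linear recursion, so the whole argument is routine; the only place an error is likely to creep in is the index bookkeeping on the empty products — one must make sure $\prod_{r=1}^{0}$ and $\prod_{s=\ell+1}^{\ell}$ are read as $1$ and that the freshly created $r=\ell$ term has exactly the shape of the generic summand so it slots cleanly into the running sum. As a sanity check one can specialize to the two-period experimental case ($n=2$, $\ell=2$, with period $1$ a release period so $\theta(1)=0$): the formula yields $A_{i,2}=\alpha_1 A_{i,1}$, which matches $A_{i,2}=A_{i,1}+a_{i,1}(\Delta t_1)=A_{i,1}-A_{i,1}(1-\alpha_1)$.
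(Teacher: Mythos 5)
Your proof is correct and is essentially the same as the paper's: both establish the one-step recursion $A_{i,\ell}=\alpha_{\ell-1}A_{i,\ell-1}+\theta(\ell-1)A(1-\alpha_{\ell-1})$ from Equations \ref{gensetup} and \ref{recursion2} and then induct on $\ell$, absorbing the new $\theta$-term as the final summand. The only difference is cosmetic (stepping from $\ell$ to $\ell+1$ rather than $\ell-1$ to $\ell$), and your attention to the empty-product conventions matches the paper's base case.
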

\begin{proof}
We induct on $\ell$. Note that for $\ell=1$ we have \begin{equation*} A_{i,1}\prod_{r=1}^{0} \alpha_r+A\sum_{r=1}^{0}\theta(r)(1-\alpha_r)\prod_{s=r+1}^{0}\alpha_s=A_{i,1}.\end{equation*}

So the result holds for $\ell=1$. Let $\ell>1$ and assume that the result holds for all smaller $\ell$. Then
\begin{eqnarray*}
A_{i,\ell}&=&A_{i,\ell-1}+a_{i,\ell-1}(\Delta t_{\ell-1})  \text{  by Equation \ref{gensetup}}\\
&=& A_{i-1,\ell}+(\theta(\ell-1)A-A_{i,\ell-1})(1-\alpha_{\ell-1}) \text{  by Equation \ref{recursion2}}\\
&=& A_{i,\ell-1}\alpha_{\ell-1}+\theta(\ell-1)A(1-\alpha_{\ell-1})\\
&=&\left(A_{i,1}\prod_{r=1}^{\ell-2} \alpha_r+A\sum_{r=1}^{\ell-2}\theta(r)(1-\alpha_r)\prod_{s=r+1}^{\ell-2}\alpha_s\right)\alpha_{\ell-1}+\theta(\ell-1)A(1-\alpha_{\ell-1})\\
&& \text{by the induction hypothesis}\\
&=& A_{i,1}\prod_{r=1}^{\ell-1} \alpha_r+A\sum_{r=1}^{\ell-2}\theta(r)(1-\alpha_r)\prod_{s=r+1}^{\ell-1}\alpha_s+\theta(\ell-1)A(1-\alpha_{\ell-1})\\
&=& A_{i,1}\prod_{r=1}^{\ell-1} \alpha_r+A\sum_{r=1}^{\ell-1}\theta(r)(1-\alpha_r)\prod_{s=r+1}^{\ell-1}\alpha_s.
\end{eqnarray*}
\end{proof}

\begin{lemma}\label{lemma2}
$\displaystyle\sum_{\ell=1}^n(1-\alpha_\ell)\sum_{r=1}^{\ell-1}\theta(r)(1-\alpha_r)\prod_{s=r+1}^{\ell-1}\alpha_s=\sum_{\ell=1}^{n-1}\theta(\ell)(1-\alpha_\ell)\left(1-\prod_{r=\ell+1}^n\alpha_r\right)$
\end{lemma}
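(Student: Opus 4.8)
The plan is to prove the identity by swapping the order of the double summation on the left-hand side and then recognizing the resulting inner sum as a telescoping sum.

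First I would observe that the left-hand side is a sum over all pairs $(\ell,r)$ with $1\le r\le \ell-1$ and $1\le\ell\le n$, i.e.\ over all pairs with $1\le r<\ell\le n$. Interchanging the order, the sum becomes
\begin{equation*}
\sum_{r=1}^{n-1}\theta(r)(1-\alpha_r)\sum_{\ell=r+1}^{n}(1-\alpha_\ell)\prod_{s=r+1}^{\ell-1}\alpha_s.
\end{equation*}
Next I would focus on the inner sum for fixed $r$. Writing $P_\ell=\prod_{s=r+1}^{\ell-1}\alpha_s$ (with $P_{r+1}=1$ the empty product and $P_{n+1}=\prod_{s=r+1}^{n}\alpha_s$), one has $(1-\alpha_\ell)P_\ell=P_\ell-\alpha_\ell P_\ell=P_\ell-P_{\ell+1}$, so
\begin{equation*}
\sum_{\ell=r+1}^{n}(1-\alpha_\ell)\prod_{s=r+1}^{\ell-1}\alpha_s=\sum_{\ell=r+1}^{n}(P_\ell-P_{\ell+1})=P_{r+1}-P_{n+1}=1-\prod_{s=r+1}^{n}\alpha_s.
\end{equation*}
Substituting this back and relabeling the index $r$ as $\ell$ (and $s$ as $r$) gives exactly the right-hand side.

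The only real subtlety — and the place I would be most careful — is the bookkeeping of the empty products and the index ranges: checking that the swap of summation order is exactly $\{(\ell,r):1\le r<\ell\le n\}$ in both orders, that $P_{r+1}$ is the empty product equal to $1$, and that the telescope terminates at $P_{n+1}=\prod_{s=r+1}^n\alpha_s$. There is no genuine analytic difficulty; it is a purely combinatorial/algebraic manipulation, so I would present it cleanly in the three displayed steps above rather than grinding through special cases.
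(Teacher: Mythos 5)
Your proof is correct and follows essentially the same route as the paper's: interchange the order of the double sum and then collapse the inner sum as a telescope using $(1-\alpha_\ell)\prod_{s=r+1}^{\ell-1}\alpha_s=\prod_{s=r+1}^{\ell-1}\alpha_s-\prod_{s=r+1}^{\ell}\alpha_s$. Your $P_\ell$ notation just makes the telescoping step slightly more explicit than the paper's term-by-term expansion, and your bookkeeping of the empty product and the upper limit $n-1$ is accurate.
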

\begin{proof}
\begin{eqnarray*}
\displaystyle\sum_{\ell=1}^n(1-\alpha_\ell)\sum_{r=1}^{\ell-1}\theta(r)(1-\alpha_r)\prod_{s=r+1}^{\ell-1}\alpha_s
&=& \theta(1)(1-\alpha_1)\left(\sum_{r=2}^n(1-\alpha_r)\prod_{s=2}^{r-1}\alpha_s\right)\\
&&+\theta(2)(1-\alpha_2)\left(\sum_{r=3}^n(1-\alpha_r)\prod_{s=3}^{r-1}\alpha_s\right)\\
&&+\cdots+\theta(n-1)(1-\alpha_{n-1})(1-\alpha_n) \\
&=& \sum_{\ell=1}^n \theta(\ell)(1-\alpha_\ell)\sum_{r=\ell+1}^n(1-\alpha_r)\prod_{s=\ell+1}^{r-1}\alpha_s\\
&=& \sum_{\ell=1}^n \theta(\ell)(1-\alpha_\ell)\sum_{r=\ell+1}^n\left(\prod_{s=\ell+1}^{r-1}\alpha_s-\prod_{s=\ell+1}^{r}\alpha_s\right)\\
&=& \sum_{\ell=1}^n \theta(\ell)(1-\alpha_\ell)\left(1-\prod_{r=\ell+1}^n\alpha_r\right)
\end{eqnarray*}
\end{proof}
\begin{proposition}\label{prop1}
$\displaystyle A_{i,1}=A_{i-1,1}\left(\prod_{\ell=1}^{n} \alpha_\ell\right)+A\left(\sum_{\ell\in L}(1-\alpha_\ell)\prod_{r=\ell+1}^n\alpha_r\right)$
\end{proposition}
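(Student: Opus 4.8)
The plan is to start from the exact recursion in Equation~\ref{recursion}, namely $A_{i,1}=A_{i-1,1}+\sum_{\ell=1}^{n}a_{i-1,\ell}(\Delta t_\ell)$, and substitute the closed form for $a_{i-1,\ell}(\Delta t_\ell)$ from Equation~\ref{recursion2}, which gives $a_{i-1,\ell}(\Delta t_\ell)=(\theta(\ell)A-A_{i-1,\ell})(1-\alpha_\ell)$. The key idea is then to eliminate all the intermediate pool sizes $A_{i-1,\ell}$ for $\ell\ge 2$ in favor of $A_{i-1,1}$ using Lemma~\ref{lemma1}, after which only algebraic rearrangement remains.

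Concretely, I would first write
\[
A_{i,1}=A_{i-1,1}+A\sum_{\ell=1}^n\theta(\ell)(1-\alpha_\ell)-\sum_{\ell=1}^n A_{i-1,\ell}(1-\alpha_\ell).
\]
Then I substitute $A_{i-1,\ell}=A_{i-1,1}\prod_{r=1}^{\ell-1}\alpha_r+A\sum_{r=1}^{\ell-1}\theta(r)(1-\alpha_r)\prod_{s=r+1}^{\ell-1}\alpha_s$ from Lemma~\ref{lemma1} into the last sum, splitting it into a term proportional to $A_{i-1,1}$ and a term proportional to $A$. The coefficient of $A_{i-1,1}$ becomes $1-\sum_{\ell=1}^n(1-\alpha_\ell)\prod_{r=1}^{\ell-1}\alpha_r$, and this telescopes: $\sum_{\ell=1}^n(1-\alpha_\ell)\prod_{r=1}^{\ell-1}\alpha_r=\sum_{\ell=1}^n\bigl(\prod_{r=1}^{\ell-1}\alpha_r-\prod_{r=1}^{\ell}\alpha_r\bigr)=1-\prod_{\ell=1}^n\alpha_\ell$, so the coefficient collapses to exactly $\prod_{\ell=1}^n\alpha_\ell$, matching the first term in Proposition~\ref{prop1}.

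For the terms proportional to $A$, I have $A\sum_{\ell=1}^n\theta(\ell)(1-\alpha_\ell)$ minus $A\sum_{\ell=1}^n(1-\alpha_\ell)\sum_{r=1}^{\ell-1}\theta(r)(1-\alpha_r)\prod_{s=r+1}^{\ell-1}\alpha_s$. Here Lemma~\ref{lemma2} applies directly to the second double sum, rewriting it as $\sum_{\ell=1}^{n-1}\theta(\ell)(1-\alpha_\ell)\bigl(1-\prod_{r=\ell+1}^n\alpha_r\bigr)$. Subtracting this from $A\sum_{\ell=1}^n\theta(\ell)(1-\alpha_\ell)$, the ``$1$'' parts cancel (noting $\theta(n)(1-\alpha_n)$ survives because the $\ell=n$ term has empty product $\prod_{r=n+1}^n\alpha_r=1$, so it contributes $0$ to the Lemma~\ref{lemma2} expression and is fully present in the first sum), leaving exactly $A\sum_{\ell=1}^n\theta(\ell)(1-\alpha_\ell)\prod_{r=\ell+1}^n\alpha_r=A\sum_{\ell\in L}(1-\alpha_\ell)\prod_{r=\ell+1}^n\alpha_r$, which is the second term of the Proposition. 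The main obstacle is purely bookkeeping: keeping the index ranges and empty-product conventions straight through the splitting, and making sure the boundary term $\ell=n$ is handled consistently between the single sum and the Lemma~\ref{lemma2} output; once that is done the two lemmas do all the real work and the rest is telescoping.
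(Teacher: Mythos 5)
Your proposal is correct and follows essentially the same route as the paper's proof: substitute Equation~\ref{recursion2} into the recursion, eliminate the intermediate $A_{i-1,\ell}$ via Lemma~\ref{lemma1}, telescope the coefficient of $A_{i-1,1}$ down to $\prod_{\ell=1}^n\alpha_\ell$, and apply Lemma~\ref{lemma2} to the remaining double sum. Your explicit handling of the $\ell=n$ boundary term and the sign of the double sum is if anything slightly more careful than the paper's write-up.
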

\begin{proof}
We know that $$A_{i,1}=A_{i-1,1}+\sum_{\ell=1}^n a_{i-1,\ell}(\Delta t_\ell)=A_{i-1,1}+\sum_{\ell=1}^n (\theta(\ell)A-A_{i-1,\ell})(1-\alpha_\ell).$$ Then we have \begin{eqnarray*}
A_{i,1}&=& A_{i-1,1}+\sum_{\ell=1}^n\theta(\ell)A(1-\alpha_\ell)\\
&&-\sum_{\ell=1}^n \left(A_{i-1,1}\prod_{r=1}^{\ell-1} \alpha_r+A\sum_{r=1}^{\ell-1}\theta(r)(1-\alpha_r)\prod_{s=r+1}^{\ell-1}\alpha_s\right)(1-\alpha_\ell) \text{ by Lemma \ref{lemma1}}\\
&=& A_{i-1,1}+\sum_{\ell=1}^n\theta(\ell)A(1-\alpha_\ell)-A_{i-1,1}\sum_{\ell=1}^n \prod_{r=1}^{\ell-1} \alpha_r(1-\alpha_\ell)\\&&+A\sum_{\ell=1}^n(1-\alpha_\ell)\sum_{r=1}^{\ell-1}\theta(r)(1-\alpha_r)\prod_{s=r+1}^{\ell-1}\alpha_s\\
&=& A_{i-1,1}\prod_{\ell=1}^n\alpha_\ell+\sum_{\ell=1}^n\theta(\ell)A(1-\alpha_\ell)-A\sum_{\ell=1}^{n-1}\theta(\ell)(1-\alpha_\ell)\left(1-\prod_{r=\ell+1}^n\alpha_r\right)\\ && \text{ by Lemma \ref{lemma2}}\\
&=& A_{i-1,1}\left(\prod_{\ell=1}^{n} \alpha_\ell\right)+A\left(\sum_{\ell\in L}(1-\alpha_{\ell})-\sum_{\ell\in L}(1-\alpha_\ell)\left(1-\prod_{r=\ell+1}^n\alpha_r\right)\right)\\
&=& A_{i-1,1}\left(\prod_{\ell=1}^{n} \alpha_\ell\right)+A\left(\sum_{\ell\in L}(1-\alpha_\ell)\prod_{r=\ell+1}^n\alpha_r\right)
\end{eqnarray*}
\end{proof}

	\chapter{Random walk model of vesicle replenishment}\label{randomwalkmodel}
Previous work indicates that vesicle replenishment is the rate-limiting step in sustained vesicle release \cite{cone3}, so in this chapter we take a closer look at the replenishment process. What part of the replenishment process limits release? In Section \ref{replenishment}, we design a simple random walk model to theoretically predict the time constant of replenishment, $\tau_a$, initially discussed in Section \ref{dynamics}. Using the model we can determine which parameters affect replenishment. We discover that $\tau_a$ relies on four fundamental parameters: vesicle diffusion, vesicle concentration, vesicle size, and the probability of attachment to the ribbon. The model predicts an exponential replenishment curve with time constant $$\tau_a = \dfrac{1}{D\rho\delta s}$$ where $D$ is the vesicle diffusion coefficient, $\rho$ is the vesicle concentration, $\delta$ is the diameter of a single vesicle, and $s$ is the attachment probability. The nature of vesicle movement within the synapse leads us to ask if the random diffusion of vesicles is rate-limiting.  We compare experimental data with our model results and conclude that diffusion is not, in fact, rate-limiting.

Further exploring replenishment in Section \ref{calcium}, we introduce two variations of the original model to investigate the role of Ca$^{2+}$ on replenishment.  The results in Sections \ref{replenishment} and \ref{calcium} are published in \cite{mine1}.

We can also use the model to calculate several other quantities of interest: how many vesicles hit the ribbon per second (hit rate) and how long it takes to fill up the ribbon (expected waiting time). The derivations in Section \ref{otherquantities} are unpublished.

		\section{Replenishment timescale}\label{replenishment}
			
			In this section we wish to answer the question: Is vesicle diffusion a rate-limiting step for replenishment? We first discuss how to measure the replenishment curve experimentally. Then since vesicles move randomly in the cell terminal without a directed movement toward the ribbon or active zone \cite{cones}, we create a random walk model of vesicle movement and replenishment. 

			\subsection{Paired pulse recordings}\label{pairedpulse}
		
To experimentally measure the replenishment of vesicles onto the synaptic ribbon, we use paired pulse recordings. First, when the ribbon is full, a large voltage jump, or pulse, is applied to the cell (similar to the pulse trains in Section \ref{pulsetrain}) and vesicle release is measured. The size of the voltage jump is such that all vesicles from the ribbon are released.  After $t$ seconds a second pulse of the same amplitude is applied and vesicle release is again measured (see Figure \ref{fig:pairedpulse}). This is repeated for multiple values of $t$ to approximate the replenishment curve.  For each inter-pulse interval, $t$, we plot the ratio $R_2/R_1$ where $R_1$ is the release on the first pulse and $R_2$ is the release on the second pulse (see Figure \ref{fig:pairedpulse}). The ratio, $R_2/R_1$, can be thought of as the percentage of vesicles replenished. In the case of salamader cones, the replenishment curve can be fit with a double exponential with time constants $\tau_{\text{fast}}$=815 ms (76\%)
 and $\tau_{\text{slow}}$=13 s \cite{mine1}.

\begin{figure}[h]
\begin{center}
                \includegraphics[width=160mm]{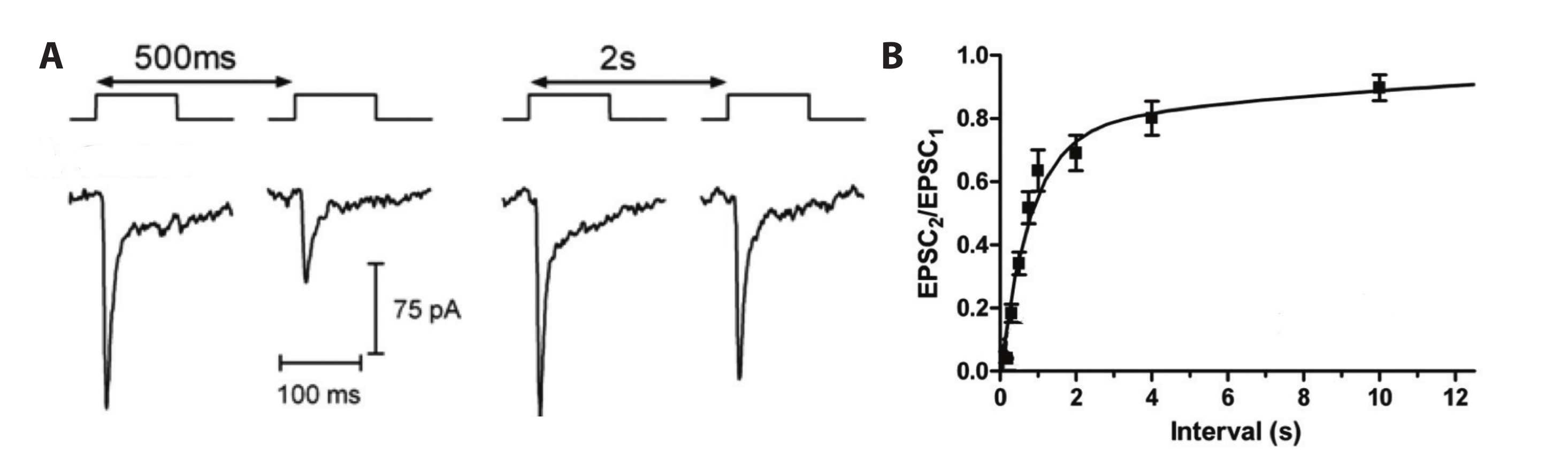}
      \caption{Panel A shows an example of two paired pulse recordings with interpulse intervals of 500ms and 2s. Note that the longer interpulse interval gives more time for the ribbon to replenish and hence the the second pulse is larger in the 2s trial.  Panel B shows the replenishment curve. The horizontal axis gives the interpulse interval and the vertical axis gives the ratio of the two responses. This ratio can be thought of as the percentage replenished. Adapted from \cite{mine1}.}
                \label{fig:pairedpulse}
\end{center}
\end{figure}

\noindent In the next section our goal is to predict the time constant of replenishment theoretically using a random walk model.

			\subsection{Derivation of the replenishment time constant}\label{randwalksetup}
  To answer this question, we developed a three-dimensional random walk model. We modeled the vesicle motion in the synapse by spherical vesicles undergoing random walks on a rectangular lattice of spacing $\delta$.  During each time step, $\Delta t$, every vesicle moves to an adjacent lattice site in each dimension. We update each of the three dimensions simultaneously, resulting in a diagonal move overall.  The (macroscopic) diffusion coefficient, $$D = \dfrac{\delta^2}{2\Delta t},$$ 
relates $\delta$ and $\Delta t$ in the (microscopic) random walk model, so these quantities cannot be chosen independently \cite{berg}.  Moreover, we would like to assume that each lattice site can be occupied by at most one vesicle, and that the occupation probabilities for distinct lattice sites are independent.  These two assumptions can only be satisfied if we choose $\delta$ to be equal to the vesicle diameter.  

\begin{figure}[h]
\begin{center}
                \includegraphics[width=100mm]{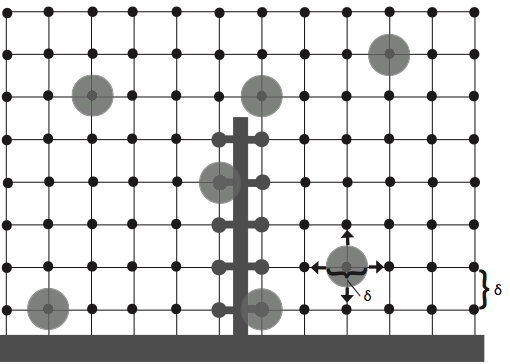}
                \caption{Random walk model of ribbon replenishment: the vesicles undergo a random walk on a rectangular lattice of spacing $\delta$ \cite{mine1}.}\label{fig:randomwalk}
\end{center}
\end{figure}

We use $p$ to denote the probability that a given lattice site (or tethering site) on the ribbon will become occupied in a given time step. If we assume the vesicles are distributed randomly and uniformly within the cell, the probability of a given lattice site being occupied is independent from one time step to the next.  For a lattice site far from the ribbon, this probability is simply given by the vesicle density per lattice site, $\rho \delta^3$, where $\rho$ is the overall density of vesicles inside the cell.  Since the ribbon sites can only be accessed from one side, for ribbon sites we have collision probability
$$p = \dfrac{1}{2}\rho\delta^3.$$
Let $s$ be the attachment probability, the probability that a vesicle that comes into contact with the ribbon will ``stick.''\footnote{i.e. become tethered to the ribbon until release, not drifting away at a future time step.} 
Then $sp$ is the probability of a vesicle actually sticking to a ribbon site in a given time step. Thus the probability of having to wait at least $t$ seconds before a ribbon site is ``permanently'' 
occupied is:
$$P(t) = (1-sp)^{t/\Delta t},$$
with $t/\Delta t$ giving the total number of time steps that have elapsed in $t$ seconds. Note that $1-sp$ is the probability
that a given lattice site on the ribbon is {\textit{not}} occupied permanently in a given time step.

Now we make a crucial approximation for $P(t)$, which is valid for $sp << 1.$\footnote{In fact, the approximation is still quite good up to values of $sp \sim 0.1$.}  The approximation stems from the fact that $\ln(1+x) \approx x$ for 
$|x| << 1$.  To use it, we first take the natural log of the $P(t)$ equation, and then plug in $\Delta t = \delta^2/2D$ and $p=\rho \delta^3/2$:
\begin{eqnarray*}
\ln P(t) &=& \dfrac{t}{\Delta t} \ln (1-sp) \approx \dfrac{t}{\Delta t} (-sp)\\
&=& -\dfrac{2Dps}{\delta^2}t =  -\dfrac{2D\rho \delta^3 s}{2\delta^2}t =-(D\rho \delta s)t.
\end{eqnarray*}
Exponentiating both sides we obtain
$$P(t) \approx e^{-t/\tau_a}, \quad \text{where } \;\;\; \tau_a = \dfrac{1}{D\rho\delta s}.$$

\noindent Solving for $P(t)$ without making the approximation we get $$P(t)=e^{-t/\tau_{\text{exact}}}, \quad \text{where } \;\;\; \tau_{\text{exact}}=\frac{-\delta^2}{2D\ln(1-\frac{1}{2}\rho\delta^3s)}.$$

Next, observe that the expected number of ribbon sites that are filled at time $t$, assuming all sites are empty at $t=0$, is given by
$$a(t) = \sum_{m=1}^n m {n \choose m} (1-P(t))^m P(t)^{n-m} = n(1-P(t)).$$
The second equality is obtained using a familiar variant of the Binomial Theorem. {Recall that 
$\sum_{m=0}^n {n \choose m} x^m y^{n-m} = (x+y)^n,$ by the Binomial Theorem.  Differentiating
with respect to $x$:
$$\sum_{m=1}^n m{n \choose m} x^{m-1} y^{n-m} = n(x+y)^{n-1}.$$
Now, letting $x = 1-P(t)$ and $y = P(t),$ we obtain
$$\sum_{m=1}^n m{n \choose m} (1-P(t))^{m-1} P(t)^{n-m} = n.$$
Finally, multiplying both sides by $1-P(t)$ we obtain the desired result.}
Note that each term in the sum corresponds to the probability that exactly $m$ sites are 
``permanently'' occupied at time $t$, weighted by $m$.  On the other hand, given that each of the $n$
ribbon sites has an occupation probability of $1-P(t)$ at time $t$, it is intuitive that the expected number of occupied sites at this time is $a(t) = n(1-P(t))$.  

Using the approximate expression for $P(t)$ we derived above, we obtain, $a(t)$, the expected number of vesicles on the ribbon at time $t$, in terms of our fundamental constants:
\begin{equation}a(t) = n(1-e^{-t/\tau_a}), \quad \text{where } \;\;\; \tau_a = \dfrac{1}{D\rho\delta s}.\end{equation}

			\subsection{Comparison of model predictions with data}
			
We wish to determine whether diffusion is rate-limiting for replenishment and thus sustained release.  We answer this question by comparing the experimentally measured time constant with the model predictions. Recall that in Section \ref{pairedpulse}, we fit an exponential replenishment curve with two time constants to the data with $\tau_{\text{fast}}$=815 ms (76\%)
 and $\tau_{\text{slow}}$=13 s.  

.  

Table \ref{fundconstants} shows the experimental values for all of the fundamental constants for salamander cones. Since we are interested in knowing whether diffusion is rate-limiting, we use our model to calculate the fastest possible timescale of vesicle replenishment due to vesicle diffusion. To do this we set the attachment probability $s$ equal to 1. Hence if all vesicles that collide with the ribbon due to diffusion attach to it with probability $s = 1$ then the predicted time constant is 
$$\tau_a = \dfrac{1}{(.11)(2210)(.045)} \; \text{seconds} = 91 \;\text{ms}.$$
Thus the model predicts that the fastest replenishment time constant for salamander cone ribbons is 91 ms, which is about an order of magnitude faster than the experimentally measured $\tau_{\text{fast}}$ of 815 ms.  This suggests that other factors beyond the rate of vesicle collisions with the ribbon, such as an attachment probability $s<1$, time of descent down the ribbon, and/or vesicle priming must play a role in slowing down the rate of vesicle accretion.

\begin{table}
\begin{center}
\begin{tabular}{|c|c|c|}
\hline
constant & meaning & measured value\\
\hline
$n$ & max no. of vesicles on the ribbon & 110 vesicles \cite{poolsize}\\
$D$ & vesicle diffusion coefficient & $0.11\; \mu m^2/s$ \cite{cones} \\
$\rho$ & (mobile) vesicle density & 2210 vesicles/$\mu m^3$ \cite{thoreson2004}\\
$\delta$ & vesicle diameter & 45 nm = 0.045 $\mu m$\cite{vesiclediam} \\
$s$ & attachment probability & $0 < s \leq 1$\\ \hline
\end{tabular}
\caption{Experimentally measured parameters for model of replenishment in salamader cones.}\label{fundconstants}
\end{center}
\end{table}

Since our theoretical model does not take into account ribbon geometry aside from assuming that the sites are only accessible from one side, it is reasonable to use this model to predict the replenishment time constant for other ribbon and conventional synapses provided their vesicles also exhibit random motion. The vesicles in rod bipolar cells, goldfish bipolar cells, and hippocampal cells all  appear to exhibit random motion \cite{diamond,holt,hippocampalvesicles}. Rod bipolar cells and goldfish bipolar cells both contain ribbons \cite{diamond,holt}, but hippocampal cells do not \cite{hippocampalvesicles}. Table \ref{othersynapses} gives the parameters for rod bipolar cells, goldfish bipolar cells, and hippocampal cells. In these cells, note that our model predicts a replenishment time constant that is slower than the measured replenishment time constant, indicating that the motion of vesicles may be rate-limiting for replenishment in these synapses.

\begin{table}[h]
\begin{center}
{\footnotesize{
\noindent\begin{tabular}{|c|c|c|c|}
\hline
    & Rod Bipolar Cells & Goldfish Bipolar Cells & Hippocampal Cells \\
    \hline
Diffusion coefficient, $D$ & $0.015\; \mu m^2/s$ \cite{holt} & $0.015\; \mu m^2/s$ \cite{holt} & $0.0042\; \mu m^2/s$ \cite{hippocampalvesicles} \\
\hline
Vesicle diameter, $\delta$  & 38 nm \cite{diamond}  & 30 nm \cite{goldfishdelta} & 38 nm \cite{hippocampaldelta}\\
\hline
Vesicle Concentration, $\rho$  & 1933 v/$\mu m^3$ \cite{diamond} & 445 v/$\mu m^3$ \cite{holt,cones}  & 270-465 v/$\mu m^3$ \cite{hippocampalrho, hippocampalrho2} \\
\hline
Measured $\tau_a$  & 400 ms \cite{RBC} & 4 s \cite{goldfishtau} & 7 s\cite{hippocampaltau, hippocampaltau2, hippocampaltau3} \\
\hline
Predicted $\tau_a$  & 908 ms &  5 s  & 13-23 s  \\
\hline 
\end{tabular}}}
\caption{Model predictions for the fastest possible timescale of replenishment for other synapses based on experimentally measured $D$, $\delta$, and $\rho$.}\label{othersynapses}
\end{center}
\end{table}

		\section{Role of calcium in replenishment}\label{calcium}
			
			In this section we use variations on our model to test two different mechanisms by which calcium (Ca$^{2+}$) and calmodulin (CaM), a calcium-binding messenger protein, might govern the attachment probability, $s$. It is known that Ca$^{2+}$ speeds replenishment \cite{cone2}. 
			 However, the mechanism by which this occurs is unknown. Data suggests that Ca$^{2+}$/CaM do not accelerate vesicles from the top of the ribbon to the release sites, nor do they increase the fusion rate at the membrane \cite{mine1}. Increased intracellular Ca$^{2+}$ does not affect the mobility of vesicles in the terminal \cite{cones,holt}. Hence $D$ and $\rho$ would not be affected by calcium. It appears that vesicle size (quantal amplitude) is also not affected by calcium \cite{mine1}, so we posit that Ca$^{2+}$/CaM increases the attachment probability, $s$.  
			 In this section, we will use two variations of the model to test two hypotheses regarding the role of  Ca$^{2+}$/CaM. 						
The first variant, which we call Model 1, assumes that Ca$^{2+}$/CaM acts as a switch making some vesicles more ``sticky'' than others.  In the second variant, Model 2, we assume Ca$^{2+}$/CaM again acts as a switch, but this time on the ribbon tethering sites, making some ribbon sites more ``sticky'' than others but leaving the vesicles unchanged.  Perhaps surprisingly, these two models produce qualitatively different results.  This may enable us to distinguish between the two possible functions of Ca$^{2+}$/CaM embedded into each model, by comparing the model predictions to experimental observations.

\begin{figure}[h]
\begin{center}
                \includegraphics[width=150mm]{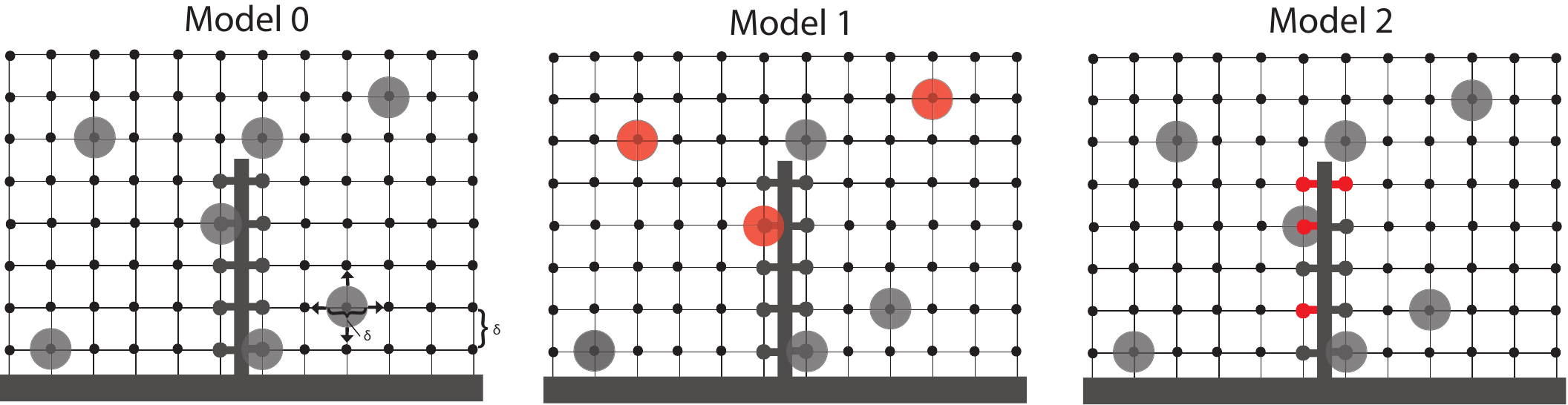}
                \caption{Calcium Hypotheses: Model 0 is the initial setup from Section \ref{randwalksetup}, Model 1 assumes changes in $s$ occur at the vesicles, and Model 2 assumes changes in $s$ occur at the ribbon sites \cite{mine1}. }
                \label{fig:threemodels-row}
\end{center}
\end{figure}

\subsection{Calcium affects vesicles}
For Model 1, suppose we have two populations of vesicles, A and B.  Vesicles in population A have higher attachment probability when they collide with the ribbon, given by attachment probability $s_A$.  Vesicles in population B have a lower attachment probability, given by attachment probability $s_B$.  Then, $0 < s_B \leq s_A \leq 1.$
Let $f$ be the fraction of vesicles in population A, with $1-f$ the fraction in population B.  Since the ribbon
sites are identical, the probability of a vesicle collision resulting in attachment is simply given by the weighted average of these attachment probabilities:
$$s = f s_A + (1-f) s_B.$$
The rest of the model remains unchanged. In particular, we still have 
$$a(t) = n(1-e^{-t/\tau_a}), \quad \text{where } \;\;\; \tau_a = \dfrac{1}{D\rho\delta s},$$
and $s$ is the ``average'' attachment probability computed above.  If the effect of Ca$^{2+}$/CaM is to change
the fraction $f$ of vesicles in the stickier population, then this effect will manifest itself as a change in 
the vesicle accretion timescale, $\tau_a$.  Inhibition of Ca$^{2+}$/CaM should cause a decrease in $f$, and hence an increase in $\tau_a$.  Note that this model does {\textit{not}} predict the existence of a second timescale, even
though there are two populations of vesicles.

\subsection{Calcium affects the ribbon}
For Model 2, suppose all vesicles are identical, but we have two populations, A and B, of tethering sites on the ribbon.
The ribbon sites in population A are more sticky, modeled by a higher attachment probability $s_A$, while the ribbon sites in population B are less sticky, with $s_B < s_A$.  We let $n_A$ and $n_B$ denote the number of sites in each population, with $n = n_A +n_B$.  If $f$ is the fraction of ribbon sites in population A, then
$n_A = fn$ and $n_B = (1-f)n.$

Since attachment probabilities are different for different ribbon sites, we must use different expressions for $P(t)$:  $P_A(t) = (1-s_Ap)^{t/\Delta t}$ for the sites in population A, while $P_B(t) = (1-s_Bp)^{t/\Delta t}$ for population B.  The result is that expected number of vesicles on the ribbon at time $t$ is given by the sum of two terms:
$$a(t) = n_A(1-e^{-t/\tau_A}) +  n_B(1-e^{-t/\tau_B}) , \quad \text{where} \;\; 
\tau_A = \dfrac{1}{D\rho\delta s_A}, \; \text{and} \;\; \tau_B = \dfrac{1}{D\rho\delta s_B}.$$
Note that since $s_A > s_B$, the population A timescale is faster, $\tau_A < \tau_B$.  The presence of 
two timescales makes this model qualitatively different from Model 1.  This difference is also seen in thinking about the effect of Ca$^{2+}$/CaM in this model.  If Ca$^{2+}$/CaM changes the fraction of ribbon sites $f$ that belong to the stickier population, then this will manifest itself as a change in the amplitudes $n_A$ and $n_B$ for each component of $a(t)$.  Inhibition of Ca$^{2+}$/CaM should cause a decrease in $f$, and hence a decrease in $n_A$ and an increase in $n_B$.   This model predicts no Ca$^{2+}$/CaM effect on the time constants, in contrast to Model 1.

			\subsection{Comparison to experimental results}

		Recall that paired pulse experiments showed that the replenishment curve can be fit with a double exponential. 
Table \ref{calciumtable} shows the results of experimentally decreasing Ca$^{2+}$/CaM using BAPTA, nifedipine, Calmidazolium, and MLCK \cite{mine1}.  When the fast timescale is constrained to match the control, we see that the fit is comparable to the unconstrained case, but the percentage of fast-replenishing sites is much lower. Thus, inhibiting Ca$^{2+}$/CaM in these experiments caused slight changes in the fast timescale, but more substantial changes to the amplitude of the fast component.  This is consistent with the predictions seen in Model 2, where we have two time constants and inhibition of Ca$^{2+}$/CaM causes the amplitude of the fast component to decrease.   We conclude that Ca$^{2+}$/CaM more likely acts on ribbon sites rather than vesicles.  It is possible that vesicles are affected by Ca$^{2+}$/CaM as well, but changes in the fast timescale were not consistent across trials.

\begin{sidewaystable}
\begin{center}
{{
\begin{tabular}{|c|c|c|c|c|c|c|}
\hline
Parameter & 5 mM EGTA & 1 mM BAPTA & 3 $\mu$M nifedipine & 20 $\mu$M Calm. & 20 $\mu$M MLCK-& 20 $\mu$M MLCK\\
&  (control) & &  &  & control  & \\\hline
\textbf{Unconstrained} & & & & & & \\
$\tau_{\text{fast}}$ (s) & 0.816 & 2.1 & 1.39 & 1.98 & 0.951 & 0.917\\
\% fast & 75.7 & 56.2 & 37.1 & 33.1 & 68.1 & 23.9\\
$\tau_{\text{slow}}$ (s) & 12.9 & 28.8 & 17.6 & 25.9 & 17.6 & 13.8\\
$R^2$ & 0.97 & 0.99 & 0.99 & 0.99 & 0.98 & 1.0\\
 \textbf{Constrained} & & & & & & \\ 
 $\tau_{\text{fast}}$ (s) &  & 0.816$^a$ & 0.816$^a$  & 0.816$^a$ &  & 0.951$^b$\\
\% fast &  & 22.1 & 26.9 & 17.5 &  & 25.3\\
$\tau_{\text{slow}}$ (s) &  & 9.56 & 12.2 & 17.4 &  & 14.1\\
$R^2$ &  & 0.97 & 0.99 & 0.99 &  & 1.0\\
$n$ & 10 & 7 & 5 & 4 & 6 & 17\\

\hline

\end{tabular}}}
\end{center}
\caption{Effects of Ca$^{2+}$/CaM: The replenishment time constant is fit with a double exponential function. Note that the fits are similar in the unconstrained and constrained cases. Aside from the two control cases, each other experimental condition decreases calcium in some way. BAPTA restricts the spread of intracellular Ca$^{2+}$, nifedipine reduces Ca$^{2+}$ influx, and Calmidazolium (Calm.) and MLCK are calmodulin inhibitors. \hspace{\textwidth}$^a$Fits constrained to $\tau_{\text{fast}}$ for 5 mM EGTA control condition (0.816 s).
 $^b$Fits constrained to $\tau_{\text{fast}}$ for MLCK-control condition (0.951 s). Adapted from \cite{mine1}.
}\label{calciumtable}

\end{sidewaystable}
			
	\section{Other quantities of interest}\label{otherquantities}	
	
	\subsection{Hit rate}\label{hitrate}
		In this section we compute the hit rate, i.e. the number of vesicles coming in contact with the ribbon per second.  One way to compute the hit rate is to do so macroscopically by first calculating the flux near the ribbon and then multiplying by the surface area of the ribbon.  We consider the concentration of vesicles to be zero on the ribbon.  The concentration of mobile vesicles not attached to the ribbon is $\rho$ as before. Thus over the distance $\delta$ (one lattice step from a ribbon site to a nonribbon site) we have a change in concentration from $\rho$ to $0$ giving us a flux of $J=D\dfrac{\rho-0}{\delta}=\dfrac{D\rho}{\delta}\dfrac{\text{vesicles}}{\mu m^2\cdot s}$ \cite{berg}. The total surface area of the ribbon is $n\delta^2~ \mu m^2$.  Thus computing the hit rate using this method yields $H=\dfrac{D\rho}{\delta}n\delta^2=D\rho\delta n $ = $\dfrac{n}{\tau_a}$ vesicles/s where $\tau_a=\dfrac{1}{D\rho\delta}$ is the replenishment time constant from our previous calculations.

We can also microscopically compute the hit rate. First, find the expected number of sites filled in a single time step.  We know that the probability of a ribbon site becoming occupied in the next time step is $\dfrac{1}{2}\rho\delta^3$ and there are $n$ sites on the ribbon. Thus the expected number of sites filled in a single time step is given by $\dfrac{1}{2}\rho\delta^3n$. We know that each time step $\Delta t$ is $\dfrac{\delta^2}{2D}$ seconds. Thus the formula for hit rate in our model is $$H=\dfrac{\dfrac{1}{2}\rho\delta^3n~\text{vesicles}}{1~\text{time step}}\cdot\dfrac{1~ \text{time step}}{\dfrac{\delta^2}{2D}~\text{seconds}}=D\rho\delta n ~\text{vesicles/s},$$ in agreement with the flux-based calculations above.

Lastly, we can again confirm this result by approximating the hit rate from $\dfrac{da}{dt}$, which is the formula for the rate of accumulation of vesicles onto the ribbon from previous calculations. The hit rate computed above ignores the decrease in available surface area due to vesicles already on the ribbon and thus corresponds to the accumulation rate only for small values of $t$.  Recall that $a(t)=n(1-e^{-t/\tau_a})$ is the  expected number of vesicles on the ribbon at time $t$ where $\tau_a=\dfrac{1}{D\rho\delta s}$. Here we assume $s=1$ since we are just finding how many vesicles come in contact with the ribbon per second and ignoring how many stick. For $t=0$ the rate of accumulation also corresponds to the hit rate since the form of $a(t)$ assumes that the ribbon is empty at $t=0$. Then $\dfrac{da}{dt}=\dfrac{n}{\tau_a}e^{-t/\tau_a}$, so at $t=0$ we have that the rate of accumulation is $\dfrac{n}{\tau_a}=D\rho\delta n$ vesicles/s. So again the hit rate is given by $H=D\rho\delta n$ vesicles/s.

In summary, when the ribbon is empty, the hit rate is $$H=\dfrac{n}{\tau_a}=D\rho\delta n ~\text{vesicles/s},$$ but as the ribbon becomes filled the hit rate decreases as $$H(t)=\dfrac{n}{\tau_a}e^{-t/\tau_a}.$$ Note that $H=H(0)$ and $$\displaystyle\int_0^\infty H(t) dt=\dfrac{n}{\tau_a}\int_0^\infty e^{-t/\tau_a} dt=\dfrac{n}{\tau_a}(-\tau_a e^{-t/\tau_a})\bigg|_0^\infty=\dfrac{n}{\tau_a}\tau_a=n.$$

		\subsection{Expected waiting time}\label{expectedwaitingtime}
		To calculate the expected waiting time, $T_{\text{wait}}$, to fill all $n$ lattice sites on the ribbon, we first consider an individual lattice site.  Let $P(t)$ be the probability that we wait at least $t$ seconds to fill the given lattice site. Then $1-P(t)$ is the probability that the given lattice site fills before $t$ seconds have passed and $r(t)=(1-P(t))^n$ is the probability that all $n$ sites have filled before $t$ seconds have passed. Hence the probability we wait {\textit{exactly}} $t$ seconds is $r'(t)dt$.

\noindent Thus the expected waiting time $T_{\text{wait}}$ is given by
$$T_{\text{wait}}=E[t]=\int_0^\infty tr'(t)dt=\tau_aH_n$$ 

\noindent where $\tau_a$ is the vesicle accretion timescale and $H_n=\displaystyle\sum_{k=1}^{n} \dfrac{1}{k}$ is the $n$th harmonic number. This result is proven in Lemma \ref{ewt}.

\begin{lemma}\label{ewt} $\displaystyle{\int_0^\infty tr'(t)dt= \tau_aH_n}$ where $r(t)=(1-e^{-t/\tau_a})^n$, and $\displaystyle{H_n=\sum_{k=1}^n \frac{1}{k}}$ is the $n$th Harmonic number.
\end{lemma}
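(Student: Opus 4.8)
The plan is to recognize $r(t)=(1-e^{-t/\tau_a})^n$ as a cumulative distribution function, so that $\int_0^\infty t\,r'(t)\,dt$ is an expected value, and then evaluate it via the tail formula $\int_0^\infty t\,r'(t)\,dt=\int_0^\infty(1-r(t))\,dt$ followed by a single change of variables that collapses everything to a finite geometric sum.

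First I would integrate by parts, writing $r'(t)=\tfrac{d}{dt}(r(t)-1)$ to obtain $\int_0^\infty t\,r'(t)\,dt=\bigl[\,t(r(t)-1)\,\bigr]_0^\infty-\int_0^\infty(r(t)-1)\,dt$. The boundary term vanishes at $t=0$ trivially, and as $t\to\infty$ because $r(t)-1=(1-e^{-t/\tau_a})^n-1$ decays exponentially, which annihilates the factor $t$. This leaves $\int_0^\infty t\,r'(t)\,dt=\int_0^\infty\bigl(1-(1-e^{-t/\tau_a})^n\bigr)\,dt$, whose convergence follows from the same exponential decay estimate.

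Next I would substitute $w=1-e^{-t/\tau_a}$, so that $dw=\tfrac{1}{\tau_a}e^{-t/\tau_a}\,dt=\tfrac{1-w}{\tau_a}\,dt$, with $w$ running from $0$ to $1$ as $t$ runs from $0$ to $\infty$. The integral becomes $\tau_a\int_0^1\frac{1-w^n}{1-w}\,dw$. Using $\frac{1-w^n}{1-w}=1+w+\cdots+w^{n-1}$ for $w\in[0,1)$ and integrating term by term yields $\tau_a\sum_{j=0}^{n-1}\frac{1}{j+1}=\tau_a\sum_{k=1}^n\frac{1}{k}=\tau_a H_n$, which is the claim.

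The only genuinely delicate point is justifying that the boundary term $t(r(t)-1)$ vanishes as $t\to\infty$ (equivalently, that the tail integral converges); this is immediate from the bound $0\le 1-(1-e^{-t/\tau_a})^n\le n\,e^{-t/\tau_a}$, which follows from the elementary inequality $1-x^n\le n(1-x)$ for $x\in[0,1]$ applied with $x=1-e^{-t/\tau_a}$. Everything else is a routine change of variables together with the closed form for a finite geometric series.
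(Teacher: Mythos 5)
Your proof is correct, and it takes a genuinely different route from the one in the paper. The paper expands $r'(t)=\frac{n}{\tau_a}e^{-t/\tau_a}(1-e^{-t/\tau_a})^{n-1}$ by the Binomial Theorem, integrates term by term to get $\tau_a\sum_{k=0}^{n-1}n\binom{n-1}{k}(-1)^k\frac{1}{(k+1)^2}$, applies the identity $(k+1)\binom{n}{k+1}=n\binom{n-1}{k}$ to reduce this to the alternating sum $\tau_a\sum_{k=1}^{n}\binom{n}{k}(-1)^{k-1}\frac{1}{k}$, and only then shows that this sum equals $H_n$ by evaluating $\int_0^1\frac{1-x^n}{1-x}\,dx$ in two ways. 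You instead pass immediately to the tail formula $\int_0^\infty t\,r'(t)\,dt=\int_0^\infty\bigl(1-r(t)\bigr)\,dt$ via integration by parts, and the single substitution $w=1-e^{-t/\tau_a}$ lands you directly on $\tau_a\int_0^1\frac{1-w^n}{1-w}\,dw=\tau_a H_n$. Amusingly, both arguments ultimately hinge on the same integral $\int_0^1\frac{1-x^n}{1-x}\,dx$; you reach it in one step, while the paper reaches it only after a detour through binomial expansions, an interchange of sum and integral, and a binomial-coefficient identity. Your route is shorter and avoids the alternating sum entirely; its only extra obligation is the convergence/boundary-term justification, which your bound $1-(1-e^{-t/\tau_a})^n\le n\,e^{-t/\tau_a}$ (from $1-x^n\le n(1-x)$ on $[0,1]$) handles cleanly. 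A minor bonus of your probabilistic framing is that it makes transparent why the answer is $\tau_a H_n$: the waiting time is the maximum of $n$ i.i.d.\ exponentials, whose expectation is classically $\tau_a H_n$.
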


\begin{proof} We have $r'(t)dt=\frac{n}{\tau_a}e^{-t/\tau_a}(1-e^{-t/\tau_a})^{n-1}dt$, so 
\begin{eqnarray*}
\int_0^\infty tr'(t)dt &=&\frac{n}{\tau_a}\int_0^\infty te^{-t/\tau_a}(1-e^{-t/\tau_a})^{n-1}dt\\
&=& \frac{n}{\tau_a}\int_0^\infty te^{-t/\tau_a}\sum_{k=0}^{n-1}\binom{n-1}{k}(-1)^ke^{-kt/\tau_a} dt \\
&=& \frac{n}{\tau_a}\sum_{k=0}^{n-1}\binom{n-1}{k}(-1)^k\int_0^\infty te^{-(k+1)t/\tau_a} dt\\
&=& \frac{n}{\tau_a}\sum_{k=0}^{n-1}\binom{n-1}{k}(-1)^k\dfrac{\tau_a^2}{(k+1)^2} \\
&=& \tau_a \sum_{k=0}^{n-1}n\binom{n-1}{k}(-1)^k \frac{1}{(k+1)^2}.
\end{eqnarray*}

\noindent Using the identity $\displaystyle(k+1)\binom{n}{k+1}=n\binom{n-1}{k}$ and reindexing, we can rewrite $$\tau_a \sum_{k=0}^{n-1}n\binom{n-1}{k}(-1)^k \frac{1}{(k+1)^2}=\tau_a \sum_{k=1}^{n}\binom{n}{k}(-1)^{k-1} \frac{1}{k}.$$

\noindent Now, we claim that $\displaystyle{\sum_{k=1}^{n}\binom{n}{k}(-1)^{k-1} \frac{1}{k}=H_n}$.  First note that $$\int_0^1{\frac{1-x^n}{1-x} dx}=\int_0^1{\sum_{k=0}^{n-1}x^kdx}=\sum_{k=0}^{n-1}\int_0^1{x^k}=\sum_{k=1}^{n}\frac{x^{k-1}}{k}\bigg|_0^1=\sum_{k=1}^{n}\frac{1}{k}= H_n.$$

\noindent Then, letting $u=1-x$, we have,
\begin{eqnarray*}
\int_0^1{\frac{1-x^n}{1-x} dx}&=&-\int_1^0{\frac{1-(1-u)^n}{u} du}\\&=& \int_0^1{\frac{1-\sum_{k=0}^n{\binom{n}{k}(-1)^ku^k}}{u} du}\\
&=& \int_0^1{\frac{1-{\binom{n}{0}(-1)^0u^0}-\sum_{k=1}^n{\binom{n}{k}(-1)^ku^k}}{u} du}\\ &=& -\sum_{k=1}^n\binom{n}{k}(-1)^k\int_0^1{u^{k-1}du}\\
&=&  \sum_{k=1}^n\binom{n}{k}(-1)^{k-1}\left[\frac{u^k}{k}\bigg|_0^1\right]= \sum_{k=1}^n\binom{n}{k}(-1)^{k-1}\frac{1}{k}.\\
\end{eqnarray*}

\noindent Thus, $\displaystyle{\sum_{k=1}^{n}\binom{n}{k}(-1)^{k-1} \frac{1}{k}=H_n}$ and therefore $\displaystyle{\int_0^\infty tr'(t)dt=\tau_aH_n}$. 
\end{proof}

\noindent Recall that we predicted $\tau_a=91$ ms and $H_{110}$ is approximately 5.2882, so the expected waiting time $T_{\text{wait}}$ is 481 ms. Note that this calculation is useful to set the duration for computer simulations in the computational model of replenishment (see Chapter \ref{compmodel}).

\paragraph{Expected waiting time in Model 1} 
Recall that in Model 1, we have two populations of vesicles where $s_A$ and $s_B$ are the attachment probabilities for population $A$ and population $B$, respectively. Then the expected waiting time for Model 1 is $T_{\text{wait}}=\tau_aH_n=\dfrac{H_n}{D\rho\delta s}$ where $s = f s_A + (1-f) s_B$ and $f$ is the fraction of vesicles in population $A$. 

\paragraph{Expected waiting time in Model 2}
Recall that in Model 2, the stickiness occurs in the ribbon sites instead. We have $n_A$ ribbon sites with attachment probability $s_A$ and $n_B$ sites with attachment probability $s_B$. We know that the probability of having to wait at least $t$ seconds before a ribbon site in population $A$ is occupied is given by $P_A(t)=e^{-t/\tau_A}$ where $\tau_A=\frac{1}{D\rho\delta s_A}$ and the probability of having to wait at least $t$ seconds before a ribbon site in population $B$ is occupied is given by $P_B(t)=e^{-t/\tau_B}$ where $\tau_B=\frac{1}{D\rho\delta s_B}$. Then $r(t)=(1-P_A(t))^{n_A}(1-P_B(t))^{n_B}=(1-e^{-t/\tau_A})^{n_A}(1-e^{-t/\tau_B})^{n_B}$ is the probability that all $n=n_A+n_B$ sites have filled before $t$ seconds have passed. Hence the probability we wait {\textit{exactly}} $t$ seconds is $r'(t)dt$. 

Now, we have that the expected waiting time for the ribbon to fill is 

\begin{eqnarray*}
T_{\text{wait}}&=&\int_0^\infty tr'(t)dt\\
&=&\int_0^\infty \biggl(t(1-e^{-t/\tau_A})^{n_A}(1-e^{-t/\tau_B})^{n_B-1}e^{-t/\tau_B}\frac{n_B}{\tau_B}\\
&&+ t(1-e^{-t/\tau_B})^{n_B}(1-e^{-t/\tau_A})^{n_A-1}e^{-t/\tau_A}\frac{n_A}{\tau_A}\biggr)dt.\\
&=&\frac{n_B}{s_B}\sum_{i=0}^{n_A}\sum_{j=0}^{n_B-1}\binom{n_A}{i} \binom{n_B-1}{j}(-1)^{i+j}\frac{1}{(j+\frac{\tau_B}{\tau_A}i+1)^2}\\
&&+\frac{n_A}{s_A}\sum_{k=0}^{n_B}\sum_{l=0}^{n_A-1} \binom{n_B}{k}\binom{n_A-1}{l}(-1)^{k+l}\frac{1}{(l+\frac{\tau_A}{\tau_B}k+1)^2}
\end{eqnarray*}

	\chapter{Computational model}\label{compmodel}
	The random walk model of vesicle replenishment described in Chapter \ref{randomwalkmodel} does not take into account the geometry of the ribbon. What effect does ribbon geometry have on replenishment? In this chapter we discuss a computational model of replenishment including ribbon geometry that was designed to complement the theoretical model. This model is currently unpublished.  
In Section \ref{overview} we describe the setup of the model. Then in Section \ref{geom}, we compare the results of the computational model with the results of the theoretical model to determine the role geometry plays in replenishment. The Matlab code for the computational model of replenishment can be found in Appendix \ref{compmodelcode}.
		
		\section{Description of the computational model}
		
		\label{overview}
The cell space is modeled by a 3-dimensional array with entries in $\{0,1\}$ where 1s indicate locations of vesicles within the cell. The array is randomly generated with a given concentration of 1s computed from the vesicle concentration $\rho$. The total number of 1s denoted by $N$. The matrix $S$ is a $N\times3$ matrix where the $i$th row records the current position of the $i$th vesicle within the array. In each time step the matrix is updated by adding a random $N\times3$ matrix with entries in $\{-1,1\}$. This ensures that each vesicle moves one lattice space per dimension in each time step, a diagonal move overall. A set of $n$ coordinates, where $n$ is the maximum number of vesicles that fit on the ribbon, are designated as ``ribbon sites.'' The coordinates of these sites are recorded in the matrix SiteMat and during each time step the coordinates of all $N$ vesicles are checked against SiteMat to determine how many vesicles are occupying ribbon sites.  Then with probability $s$, the attachment probability discussed in Section \ref{randwalksetup}, a vesicle occupying a ribbon site becomes permanently stuck and does not update in subsequent time steps.  This is done by zeroing out the corresponding row in the update matrix. At each time step we record how many vesicles are permanently stuck to the ribbon. This gives us the computational replenishment curve, which we can compare with our theoretical prediction. 

Figure \ref{fig:ribbonshapes} shows the arrangement of ribbon sites in the rectangular ribbon case and the ``nonribbon'' case.  The placement of sites in the nonribbon case allows us to study the effects of ribbon geometry. For cases with the rectangular ribbon, we also make sure that the ribbon is solid by not allowing any updates that would represent a vesicle passing through the ribbon. This is achieved by returning any vesicles that pass through the ribbon in the current time step to their original position before moving to the next time step. 

		\begin{figure}[h]
\centering
\includegraphics[width=160mm]{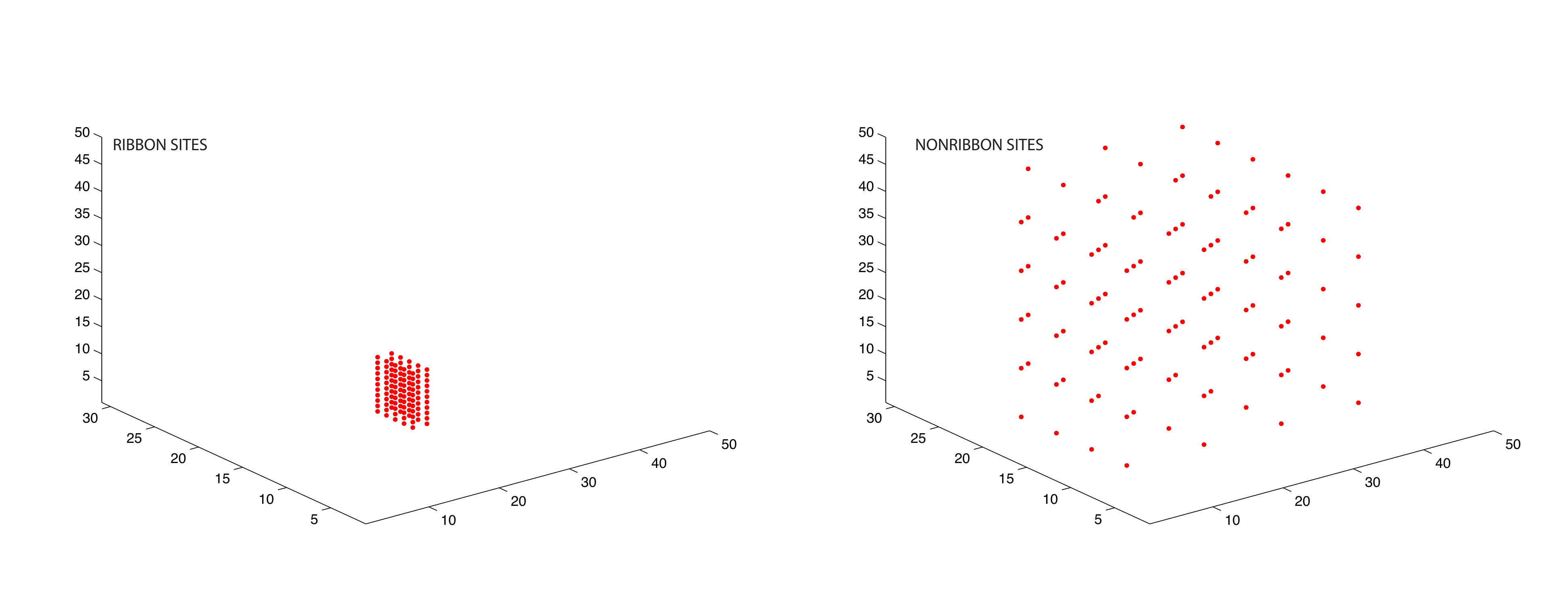}
                \caption{Ribbon and nonribbon attachment sites:  The left panel shows the ribbon sites arranged in a flat rectangular plate based off the structure seen in cone photoreceptors. The right panel shows the sites spread out in the cell space to act as a control when studying the effects of ribbon geometry.}
                \label{fig:ribbonshapes}
\end{figure}

Note that because of the way the vesicles update it is possible for vesicles to occupy the same lattice site during the same time step. For small concentrations (around 300 v/$\mu$m$^3$), less than 1\% of the vesicles are occupying the same site as another vesicle and for larger concentrations (around 2300 v/$\mu$m$^3$), less than 10\% of the vesicles are occupying the same site as another vesicle.

Since the theoretical model does not take into account geometry, the computational model and the theoretical model should be close in the nonribbon case\footnote{Recall that in the analytical model we have a factor of 1/2 that represents the fact that the ribbon sites are only accessible from one side. When using the analytical model to predict replenishment in nonribbon cases, we leave out the factor of 1/2 since these sites are accessible from all sides.}. Figure \ref{fig:sanitycheck} shows the comparison between the two models in the nonribbon case for several different concentrations and attachment probabilities. Note that the models closely match across a wide range of parameters.  

		\begin{figure}[h]

\includegraphics[width=160mm]{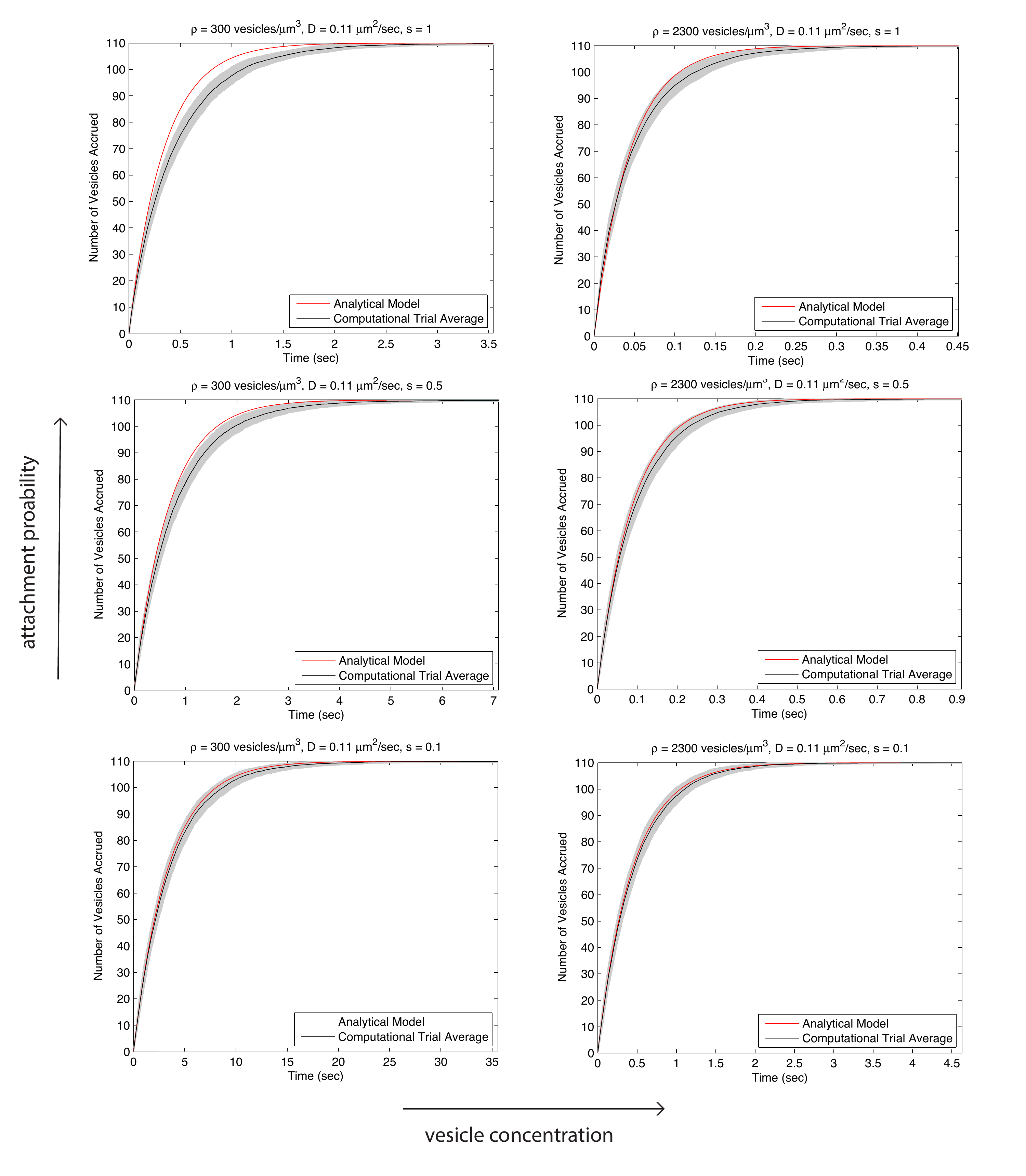}
                \caption{Comparison to the theoretical model: The computational model results for the nonribbon case (see Figure \ref{fig:sanitycheck}) are averaged over 100 trials and the gray area represents one standard deviation from the mean. We show trials for a low (300 vesicles/ $\mu m^3$) and a high concentration (2300 vesicles/ $\mu m^3$) as well as three different attachment probabilities (0.1, 0.5, and 1). Note that the theoretical and computational models appear to closely match across a variety of parameters, as expected in the nonribbon case.}
                \label{fig:sanitycheck}
\end{figure}

		\section{Effect of ribbon geometry on replenishment}\label{geom}
		Since the theoretical model does not take into account the geometry of the ribbon we use our computational model to approximate the replenishment curve in the case where we have a rectangular ribbon attached to the edge of the cell space. 

		\begin{figure}[h]

\includegraphics[width=150mm]{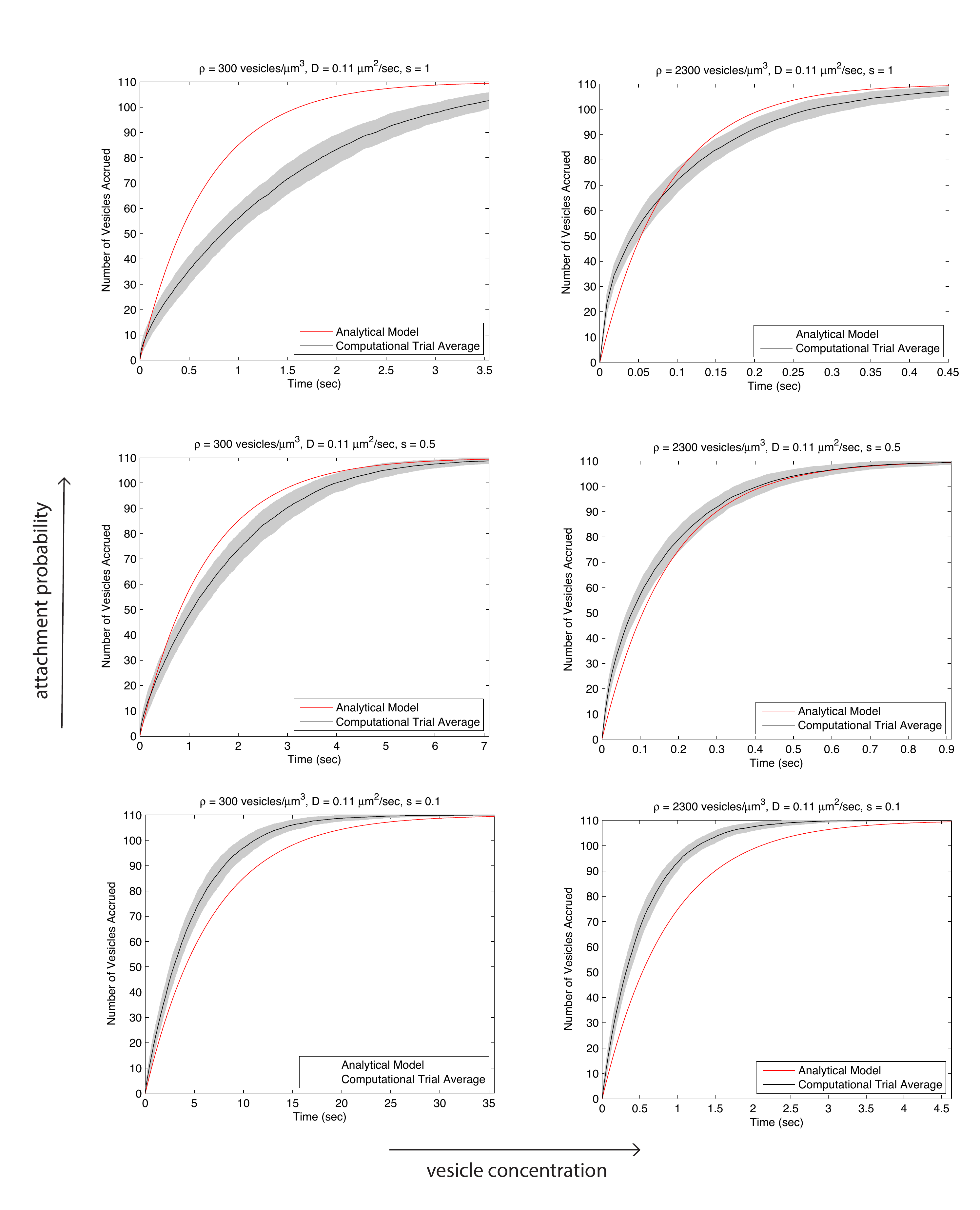}
                \caption{Effect of ribbon geometry on replenishment: The computational model results for the rectangular ribbon case (see Figure \ref{fig:sanitycheck}) are averaged over 100 trials and the gray area represents one standard deviation from the mean. We show trials for a low (300 vesicles/ $\mu m^3$) and a high concentration (2300 vesicles/ $\mu m^3$) as well as three different attachment probabilities (0.1, 0.5, and 1). Note that the computational model shows the greatest deviation from the theoretical prediction in the low concentration/high attachment probability case.  }
                \label{fig:ribbongeometry}
\end{figure}

Figure \ref{fig:ribbongeometry} indicates that the geometry of the ribbon does in fact play a role in replenishment. The ribbon sites in this case have a rectangular shape based on the ribbons in cone photoreceptors and the trials are run for varying vesicle concentrations and attachment probabilities.  For low attachment probability and high vesicle concentration, the computational model trial average shows faster replenishment than predicted by the analytical model. For high attachment probability and low vesicle concentration, the computational model trial average shows slower replenishment than predicted by the analytical model. 

Recall that cone photoreceptor synapses have a high vesicle concentration and based on our random walk model of replenishment in Chapter \ref{randomwalkmodel} are also likely have a low attachment probability. Studying the case of high vesicle concentration and low attachment probability in our computational model, we note that the computational trial average is faster than the theoretical model prediction. This indicates that having a synaptic ribbon for this parameter regime actually speeds replenishment compared to having no ribbon where the vesicles dock directly with the cell membrane. This may provide evidence for why photoreceptor cones contain ribbons, but exactly how the ribbon accelerates replenishment in this case is still unclear.

In the case of high attachment probability and low concentration, we hypothesize that once the ribbon starts to fill up, the local concentration near the ribbon decreases causing the ribbon to fill slower than predicted.
To test this we calculate the concentration of vesicles close to the ribbon and far away from the ribbon. Figure \ref{fig:localconc} shows the results of this calculation. The concentration near the ribbon drops steeply as the ribbon fills up and the concentration further away stays relatively constant. This drop in local concentration is most pronounced in the high $s$/low $\rho$ cases. This may account for the slower replenishment we see in these cases. 

		\begin{figure}[h]
\centering
\includegraphics[width=220mm]{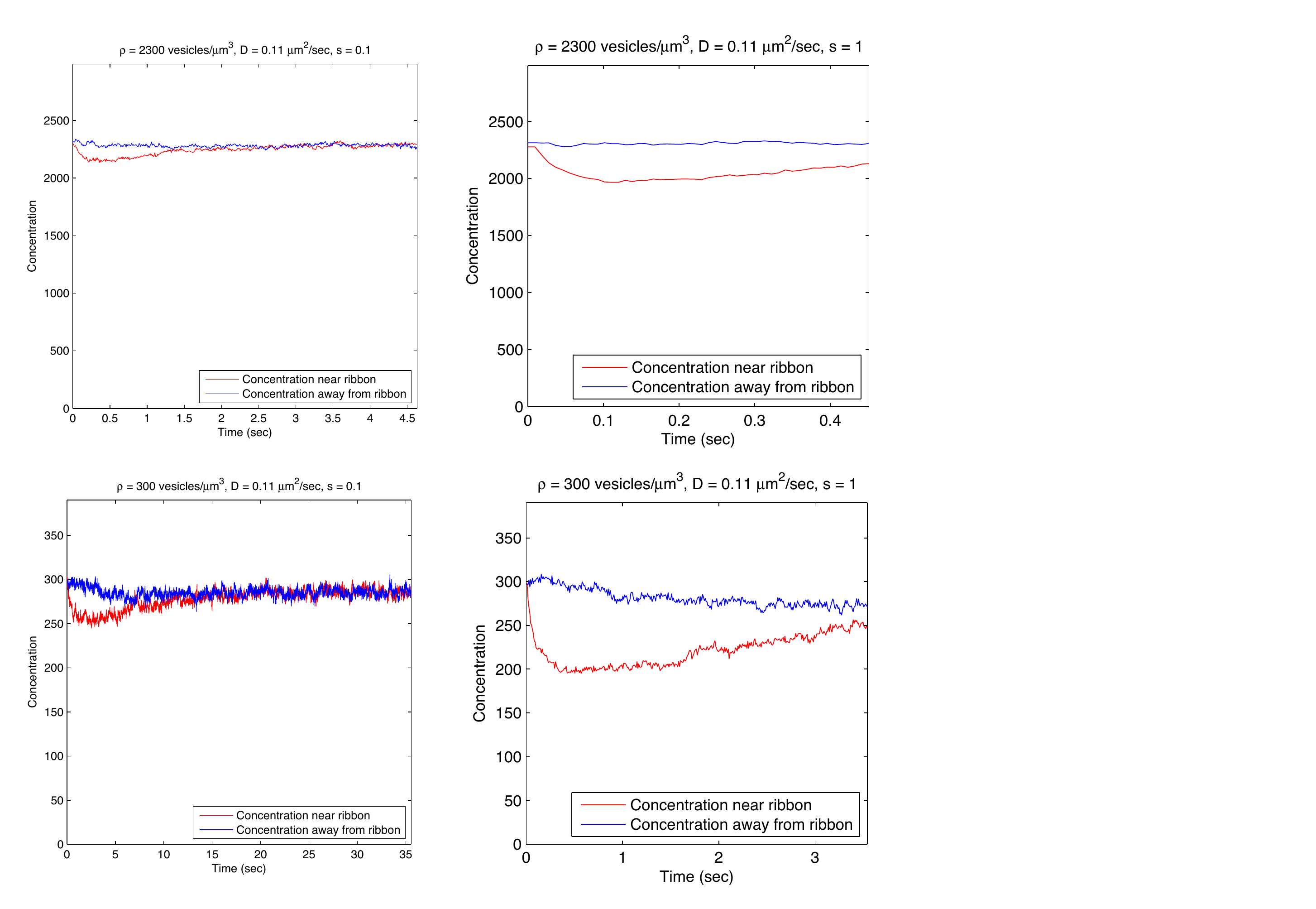}
                \caption{Local concentration: The above plots show the difference between the local concentration near the ribbon (in red) and away from the ribbon (in blue) for two different overall vesicle concentrations (300 vesicles/$\mu m^3$ and 2300 vesicles/$\mu m^3$) and two different attachment probabilities ($s=0.1$ and 1) in the case where the ribbon sites are arranged in a rectangular plate. Notice the sharp drop in the local concentration near the ribbon when the ribbon first begins to fill. The percentage drop is largest for the low concentration/high attachment probability case.   }
                
                           \label{fig:localconc}
\end{figure}

Since the theoretical model incorporates the factor of 1/2 indicating that the sites are only accessible from one side, but not specific ribbon geometry, the theoretical prediction gives a reasonable approximation for time constant of vesicles reaching the cell membrane in a terminal without a ribbon.  Figure \ref{fig:ribbongeometry} indicates that having a ribbon actually may slow replenishment in synapses with low vesicles concentration.  This suggests that having a ribbon would not be advantageous in synapses with low vesicle concentration and random motion of vesicles. This is consistent with the case of hippocampal synapses which have a low vesicle concentration and random motion, but do not contain ribbons \cite{hippocampalvesicles}.

\section{Future work}

\paragraph{Local concentration.}  The computational model discussed in Chapter \ref{compmodel} revealed that the local concentration near the ribbon drops sharply near the ribbon as the ribbon fills up.  This contradicts our assumption that the vesicle concentration is constant. To improve our random walk model, we would like to find a formula to describe the change in concentration as the ribbon fills up.

\paragraph{Movement on ribbon and vesicle fusion.} The random walk model does not take into account the movement of vesicles along the ribbon.  As more becomes known about this process, we would like to incorporate this step into the model. This model also does not take into account vesicle release.  Adding these features will allow us to explore more questions regarding the function of the ribbon.

	\part{Neural Sequences in Threshold-Linear Networks}\label{Seq_part}
\chapter{Introduction to Part \ref{Seq_part}}
	
Part \ref{Seq_part} focuses on neural networks and the interplay between network connectivity and neural activity.  In particular, we are interested in studying how network structure shapes the behavior of the network. 

To do this, we study the dynamics of a combinatorial family of competitive threshold-linear networks constructed from simple directed graphs (the CTLN model) as defined in \cite{CTLN}.  This family of networks is particularly well suited for our study because the network construction guarantees that differences in dynamics arise solely from differences in the connectivity of the underlying graph.  This allows us to focus on the properties of the graphs themselves when trying to predict the behavior of the corresponding network. This robust family of dynamical systems exhibits several different nonlinear behaviors including limit cycles, quasiperiodic attractors, and chaos.  Figure \ref{fig:multistability} shows an example of a network that exhibits multiple behaviors depending on the choice of initial conditions. 

In this part, we begin by giving some background about competitive threshold-linear graphs and the CTLN model.  We then use the CTLN model to study how the graph structure affects the resulting dynamics.  Computational experiments show that most CTLN networks yield limit cycles. We present an algorithm that uses the structure of the underlying graph to predict the sequence of firing of neurons in the limit cycle.  Our algorithm predicts the sequence correctly for most small graphs, but sometimes fails for certain classes of larger graphs. To gain further insight into how the structure of the underlying graph shapes the dynamics, we classify the behavior we see for small networks ($n\le 5$ nodes) arising from oriented graphs. 
Both of these results work towards the larger goal of better understanding high-dimensional nonlinear dynamics.

\begin{figure}[h]
\centering
\includegraphics[width=155mm]{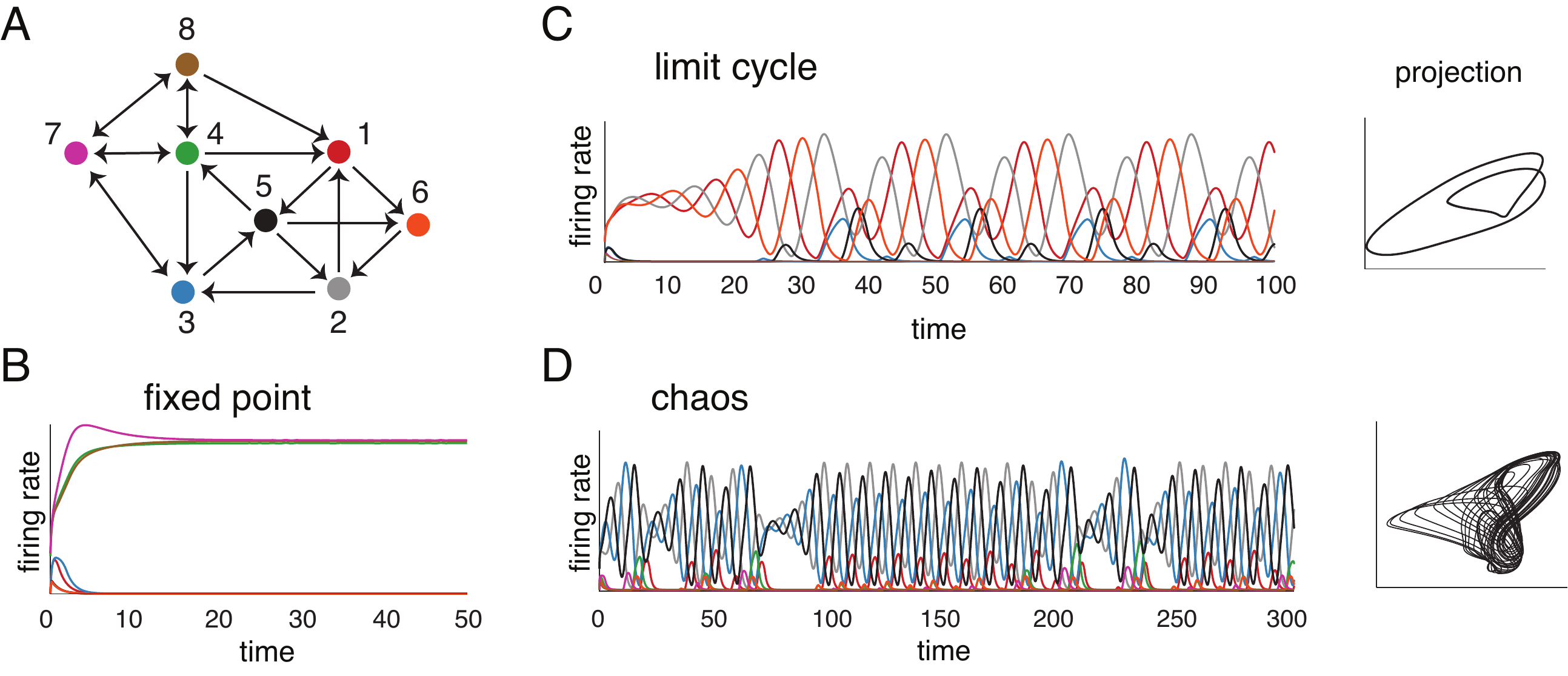}

                \caption{An example on $n=8$ nodes having several different behaviors based on initial conditions: Panel A shows the graph. Panel B shows one of two stable fixed points, Panel C shows a limit cycle, and Panel D shows a chaotic attractor. The traces of activity are color-coded to match the colors of the nodes in the graph. The plots on the far right show random two-dimensional projections of the 8-dimensional trajectories corresponding to the limit cycle and the chaotic attractor. Adapted from \cite{CTLN}.}
                \label{fig:multistability}
\end{figure}

\section{Threshold-linear networks}

The CTLN model is a specific type of threshold-linear network. Neuroscientists use threshold-linear networks to model recurrent neural networks \cite{neuralnetworktext}. These networks are thought to be involved in perception and memory processes \cite{seung}.  Development of the mathematical theory behind threshold-linear networks is ongoing \cite{thr-lin,seung,CTLN}.   

\begin{defint} A \textit{threshold-linear network} on $n$ neurons is defined by the following system of differential equations:
\begin{equation}\label{thrlin}\dfrac{dx_i}{dt} = -x_i + \left[\sum_{j=1}^{n}W_{ij}x_j+\theta\right]_+,\hspace{5mm} i\in [n].\end{equation}
\noindent where $x_i$ is the firing rate of the $i$th neuron, $W$ is the matrix of connection strengths, $\theta\in\R$ is the external drive to the network, and $[y]_+ =$ max$\{0,y\}$ is the threshold nonlinearity.\end{defint}

In our neural network context, the $-x_i$ represents the leak term and guarantees that the activity of neuron $i$ will die out in the absence of other inputs. Inside the nonlinearity we have a sum of inputs from all other neurons weighted by the connection strengths. In inhibitory networks ($W_{ij}\le0$), the parameter $\theta$ must be positive in order for the nonlinear term to be nonzero. 

We study the behavior of threshold-linear networks of $n$ neurons as they are one of the simplest examples of a nonlinear system of ordinary differential equations. In particular, we are interested in studying the dynamics of a competitive threshold-linear network defined from a simple directed graph.  \begin{defint} A \textit{competitive} threshold-linear network is governed by Equation \ref{thrlin} with the added restriction that $W_{ij}\le 0$ and $W_{ii}=0$ for all $i,j=1,\ldots,n$ and $\theta>0$.\end{defint}

In the next section we will describe the CTLN model, which is a particular type of competitive threshold-linear network.

\section{Description of the CTLN model}

The Combinatorial Threshold-Linear Network model (CTLN model) was first introduced by Curto et al.\ in \cite{patterncompletion} and further explored by Morrison et al.\ in \cite{CTLN}.  This model was designed as a way to study high-dimensional nonlinear dynamics without using a linear approximation \cite{CTLN}.  Linear models are limited as tools for approximation as they do not demonstrate complex behaviors such as limit cycles, multistability, and chaos. The nonlinearity in the CTLN model captures the full range of nonlinear behaviors, but is still simple enough that it is possible to develop a corresponding mathematical theory.  In this chapter we describe the CTLN model and necessary background.

\begin{defint}The \textit{Combinatorial Threshold-Linear Network (CTLN) model} refers to the competitive threshold linear network constructed from a simple directed graph with only two values for the inhibitory connection strengths. For any $\delta>0$ and $0<\varepsilon<1$, the $n\times n$ connectivity matrix $W$ is given by \begin{equation}\label{Wrule} W_{ij}=\begin{cases}
\hspace{0.2in} 0 \hspace{0.36in} \text{if}~ i=j\\
-1+\varepsilon~~~ \text{if} ~i\leftarrow j ~\text{in}~ G\\
-1-\delta~~~\text{if} ~i\nleftarrow j ~\text{in}~ G\\
\end{cases}\end{equation}
where $i \leftarrow j$ represents a directed edge from node $j$ to node $i$ in the graph $G$ and $i \nleftarrow j$ means that no such edge exists in $G$ \cite{CTLN}.
\end{defint}

The CTLN networks are therefore fully inhibitory, with each node acting on its neighbors by quieting their activity.  A biological motivation for the model is shown in Figure \ref{fig:seaofinhibition}.  Inhibitory interneurons (gray circles) inhibit all neighboring excitatory pyramidal cells (colored triangles) equally \cite{patterncompletion}. Connections between excitatory neurons therefore have two strengths. A directed edge represents an overall connection strength of $-1-\varepsilon$, i.e. inhibition has been weakened by an excitatory connection. Lack of an edge represents an overall connection strength of $-1-\delta$. The following theorem from \cite{CTLN} gives constraints on the graph and the relationship between $\delta$ and $\varepsilon$ that guarantees bounded activity, but disallows stable fixed points. 

\begin{figure}[h]
\centering
\includegraphics[width=120mm]{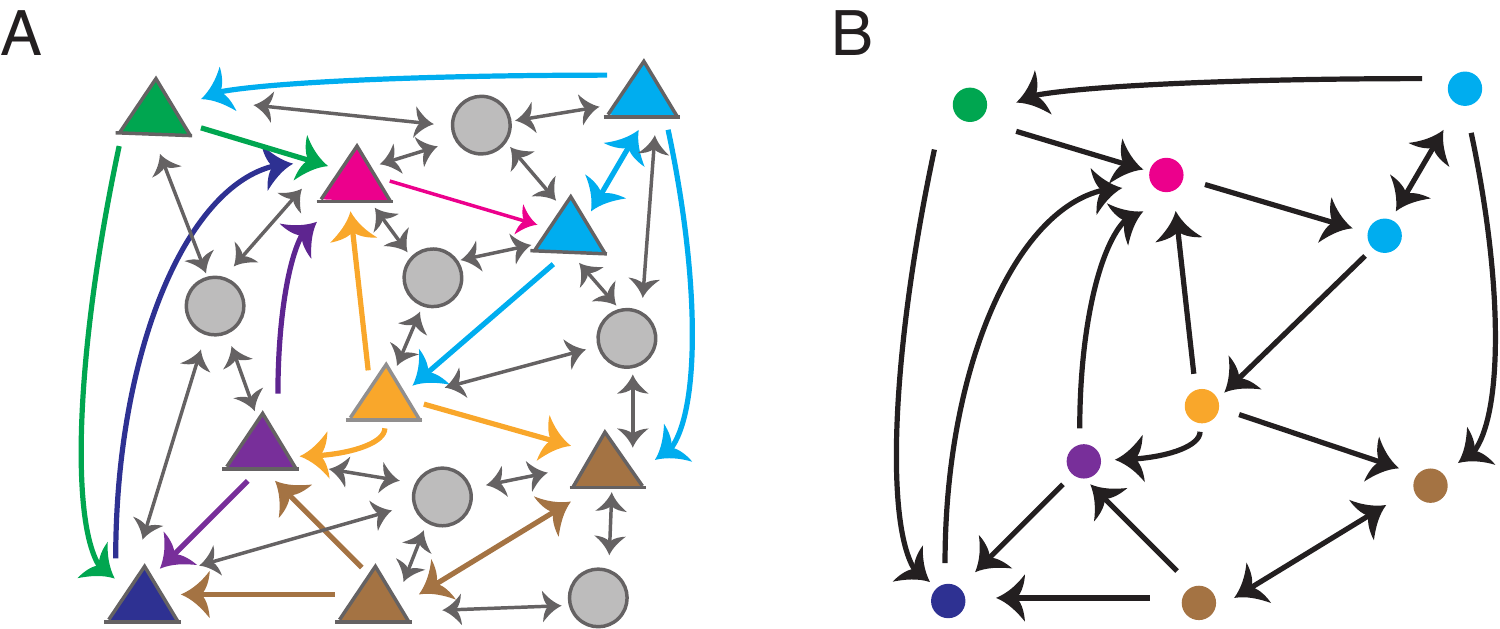}
                \caption{Diagram of excitatory and inhibitory connections:  The left panel shows inhibitory interneurons (gray circles) and excitatory pyramidal cells (colored triangles). Arrows indicate connections between neurons.  The right panel shows just the excitatory neurons and their connections. Adapted from \cite{patterncompletion}.} 
                \label{fig:seaofinhibition}
\end{figure}

\begin{theorm}\label{nosinks}\text{\cite{CTLN}}
Let $G$ be an oriented\footnote{An oriented graph is a directed graph with no bi-directional connections.} graph with no sinks (i.e. every vertex has outdegree at least 1), and consider the associated CTLN model with $W=W(G,\varepsilon,\delta)$. If $\varepsilon<\dfrac{\delta}{1+\delta}$, then the network has bounded activity and no stable fixed points.
\end{theorm}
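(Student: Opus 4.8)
First I would show the nonnegative orthant is forward invariant: if $x_i = 0$ then $\dot x_i = \bigl[\sum_j W_{ij}x_j + \theta\bigr]_+ \ge 0$, so a trajectory starting with all $x_i \ge 0$ stays there. On that orthant every term $W_{ij}x_j$ is $\le 0$ (all off-diagonal $W_{ij} < 0$ and $W_{ii}=0$), so $\dot x_i \le -x_i + \theta$ for each $i$, and a standard comparison argument gives $x_i(t) \le \max\{x_i(0),\theta\}$ for all $t$, with $\limsup_{t\to\infty} x_i(t) \le \theta$. Hence the activity is bounded; this step uses neither the ``no sinks'' hypothesis nor the relation between $\varepsilon$ and $\delta$.

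\textbf{Reduction for ``no stable fixed points''.} Let $x^*$ be a fixed point and $\sigma = \operatorname{supp}(x^*)$. On $\sigma$ the system is linear, $\dot x = (-I+W_\sigma)x + \theta\mathbf 1$, and off $\sigma$ the threshold is (generically) inactive at $x^*$, so the Jacobian at $x^*$ is block triangular with diagonal blocks $-I+W_\sigma$ and $-I$. Thus $x^*$ is stable only if $-I+W_\sigma$ is Hurwitz, and it suffices to show this fails for every support $\sigma$ of a fixed point; the degenerate case in which $\sum_j W_{kj}x_j^* + \theta = 0$ for some $k\notin\sigma$ can be handled by a separate perturbation argument. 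A convenient reformulation: $M := I - W_\sigma$ is an \emph{entrywise positive} matrix (off-diagonal entries $1-\varepsilon$ or $1+\delta$, diagonal $1$), so by Perron--Frobenius it has a simple dominant positive eigenvalue, and $-I+W_\sigma = -M$ is Hurwitz exactly when \emph{every} eigenvalue of $M$ has positive real part. The goal becomes: for each admissible $\sigma$, exhibit an eigenvalue of $M$ with nonpositive real part.

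\textbf{Constraining the support.} Writing the fixed-point equations on $\sigma$ as $x_i^* = \bigl(c - (\varepsilon+\delta)P_i\bigr)/\delta$, where $c = (1+\delta)\sum_{j\in\sigma}x_j^* - \theta$ and $P_i = \sum_{j\in\sigma,\ j\to i} x_j^*$, I would argue: if $|\sigma| = 1$ then $\sigma = \{i\}$ forces $i$ to be a sink of $G$, contradicting the hypothesis; and if $|\sigma| \ge 2$ then $G|_\sigma$ has no vertex of in-degree $0$ --- such a ``proper source'' $i$ has $P_i = 0$, hence $x_i^* = c/\delta$ is maximal among the $x_j^*$ (so $c>0$), and any out-neighbor $j$ of $i$ in $\sigma$ (one exists, else $i$ is a sink of $G$ via the off-support inequalities) has $x_j^* \le \bigl(c - (\varepsilon+\delta)c/\delta\bigr)/\delta = -\varepsilon c/\delta^2 < 0$, impossible. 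Hence $G|_\sigma$ has minimum in-degree $\ge 1$ and therefore contains a directed cycle; a further, more delicate use of the fixed-point inequalities (or the characterization of CTLN fixed-point supports in \cite{CTLN}) pins down $G|_\sigma$ more precisely, for instance ruling out ``dominated'' vertices.

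\textbf{The spectral step --- the main obstacle.} It remains to turn ``$G|_\sigma$ contains a directed cycle'' into ``$M$ has an eigenvalue off the open right half-plane,'' and this is the part I expect to be hard and the place where the exact bound $\varepsilon < \delta/(1+\delta)$ is used. For the clean representatives --- $G|_\sigma$ of uniform in-degree, in particular a single cycle --- one restricts $M$ (equivalently $M^\top$) to $\mathbf 1^{\perp}$, where it acts as $-(\delta I + (\varepsilon+\delta)A_\sigma)$, reducing the question to the spectrum of the adjacency matrix $A_\sigma$; the cycle forces an eigenvalue of $A_\sigma$ whose real part is large enough that the corresponding eigenvalue of $-I+W_\sigma$ has nonnegative real part, the threshold being governed by the elementary identity $(1+\delta)(1-\varepsilon) > 1 \iff \varepsilon < \delta/(1+\delta)$ that already controls the smallest cyclic configurations. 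For the remaining, less symmetric supports I would either invoke the \cite{CTLN} classification to reduce to such a normal form, or build a destabilizing test direction concentrated on the cycle $C$ and control the contribution of the coordinates off $C$ that are slaved to it. Carrying this out uniformly over all admissible $\sigma$, with $\varepsilon < \delta/(1+\delta)$ supplying the quantitative margin, completes the proof.
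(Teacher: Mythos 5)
The thesis itself does not prove this theorem; it is quoted from \cite{CTLN}, so your proposal has to be measured against the proof in that reference. Your first three steps are essentially sound: the invariance-plus-comparison argument for boundedness is correct (and indeed uses neither hypothesis), the reduction of stability to the matrix $-I+W_\sigma$ on the support is the right framing, and your computation $x_i^*=\bigl(c-(\varepsilon+\delta)P_i\bigr)/\delta$ correctly rules out singleton supports at non-sinks and sources in $G|_\sigma$. The genuine gap is exactly where you flag it: the ``spectral step'' is not a proof but a plan. The reduction of $M=I-W_\sigma$ to $-(\delta I+(\varepsilon+\delta)A_\sigma)$ on $\mathbf{1}^{\perp}$ is only legitimate when $\mathbf{1}^{\perp}$ is $M$-invariant, i.e.\ for in-/out-regular $G|_\sigma$, and for all other supports you offer only ``invoke the classification'' or ``build a destabilizing test direction,'' neither of which is carried out. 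Since the non-regular supports are the generic case, the theorem is not proved.

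The missing idea that makes the argument close is to test stability on \emph{pairs} rather than on cycles. For an oriented graph any $i,j\in\sigma$ are joined by at most one directed edge, so the $2\times 2$ principal submatrix of $-I+W$ on $\{i,j\}$ has determinant $1-(1-\varepsilon)(1+\delta)$ (one edge) or $1-(1+\delta)^2$ (no edge); both are negative precisely under $\varepsilon<\delta/(1+\delta)$ --- the same identity you noticed --- so every pair carries a positive eigenvalue and is a ``forbidden'' set. Combined with the fact, from the theory of permitted sets for competitive threshold-linear networks used in \cite{CTLN}, that the support of a stable fixed point cannot contain a forbidden subset, this kills every $\sigma$ with $|\sigma|\ge 2$ in one line, and your own singleton argument finishes the proof. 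Note that stability of a general nonsymmetric matrix is \emph{not} inherited by principal submatrices, so this subset-closure property is itself a substantive input you would need to quote or prove; but it replaces your entire cycle-detection and adjacency-spectrum analysis, which is both harder than necessary and, as written, incomplete.
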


By forbidding stable fixed points, Theorem \ref{nosinks} guarantees that the activity of the network is either oscillatory or chaotic.  Computational experiments show that most of the small networks for which Theorem \ref{nosinks} holds exhibit limit cycles where the neurons often appear to fire in sequence. The construction of the CTLN model guarantees that any differences in dynamics arise solely from differences in the underlying graph. Our goal is to use the structure of the graphs to predict the resulting sequences.  In the following sections we will assume $\theta=1$, $\varepsilon = 0.25$, $\delta = 0.5$, and we will focus on oriented graphs with no sinks. Theorem \ref{nosinks} holds in these cases. The next section shows some examples of these networks and their behaviors.  

\section{Examples and behaviors}\label{examples}

In this section, we will explore examples of CTLN networks with a small number of nodes. We will start with the simplest example of a network satisfying the conditions of Theorem \ref{nosinks}, a three-cycle as seen in Figure \ref{fig:3-cycle}.  Since a directed edge represents a less inhibited connection, the activity of such networks often follows the direction of the arrows, though not always. Note that in this case the dynamics are a limit cycle where the peak firing of the three nodes happens in the same order as the three-cycle in the graph.  

\begin{figure}[h]
\centering
\includegraphics[width=130mm]{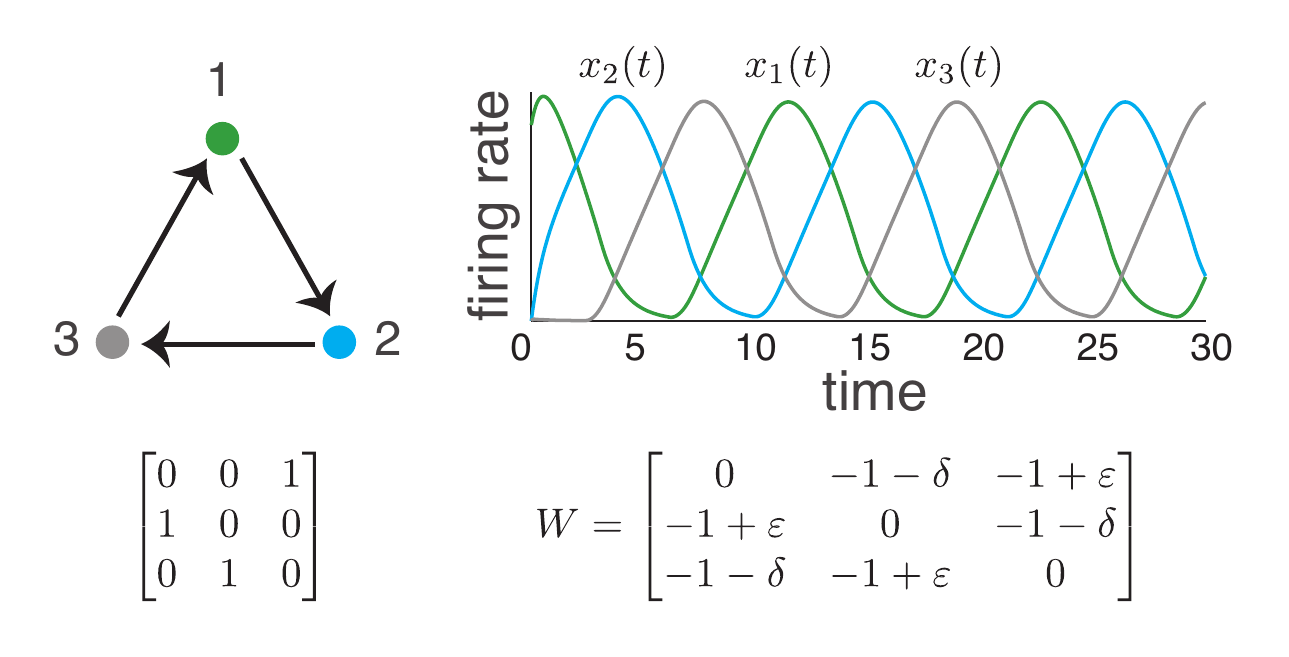}
                \caption{Example on $n=3$ nodes: In the top left we show the only oriented graph on $n=3$ vertices with no sinks. The top right panel shows the dynamics of this network: a limit cycle where the nodes fire in the order 123. The bottom left shows the transposed adjacency matrix (which is used for the CTLN model construction). The bottom right shows the matrix of connection strengths constructed using Equation \ref{Wrule}.  Adapted from \cite{CTLN}.}
                \label{fig:3-cycle}
\end{figure}

This three-cycle is the only graph on $n=3$ nodes that meets the criteria of Theorem \ref{nosinks}, i.e. is an oriented graph with no sinks. On $n=4$ nodes there are seven such graphs and on $n=5$ there are 152 such graphs. See Appendix \ref{catalogue} for the full catalogue of oriented graphs with no sinks on $n\le5$ nodes.  The number of graphs explodes when looking at oriented graphs with no sinks on $n>5$ nodes.  

Figure \ref{fig:n=5} shows several examples of networks and their dynamics on $n=5$ nodes. Panels A, B, and C show networks with limit cycles and the Panel D shows a chaotic attractor. Note that some of the nodes have different peak firing rates. In many cases on $n=5$ nodes we see three nodes with a relatively high peak firing rate and the remaining two nodes have a much smaller firing rate as seen in Figure \ref{fig:n=5} Panel A. We sometimes see synchronous firing of nodes, where the nodes fire at exactly the same rate, often resulting from a graph automorphism, as in Panel C. Chaotic attractors occur in networks as small as $n=5$ nodes. See Figure \ref{fig:n=5} Panel D for an example.  Note that it is possible for a network to have multiple limit cycles or chaotic attractors. For example, the network in Panel D has four different chaotic attractors, only one of which is shown.
 
\begin{figure}[h]
\centering
\includegraphics[width=155mm]{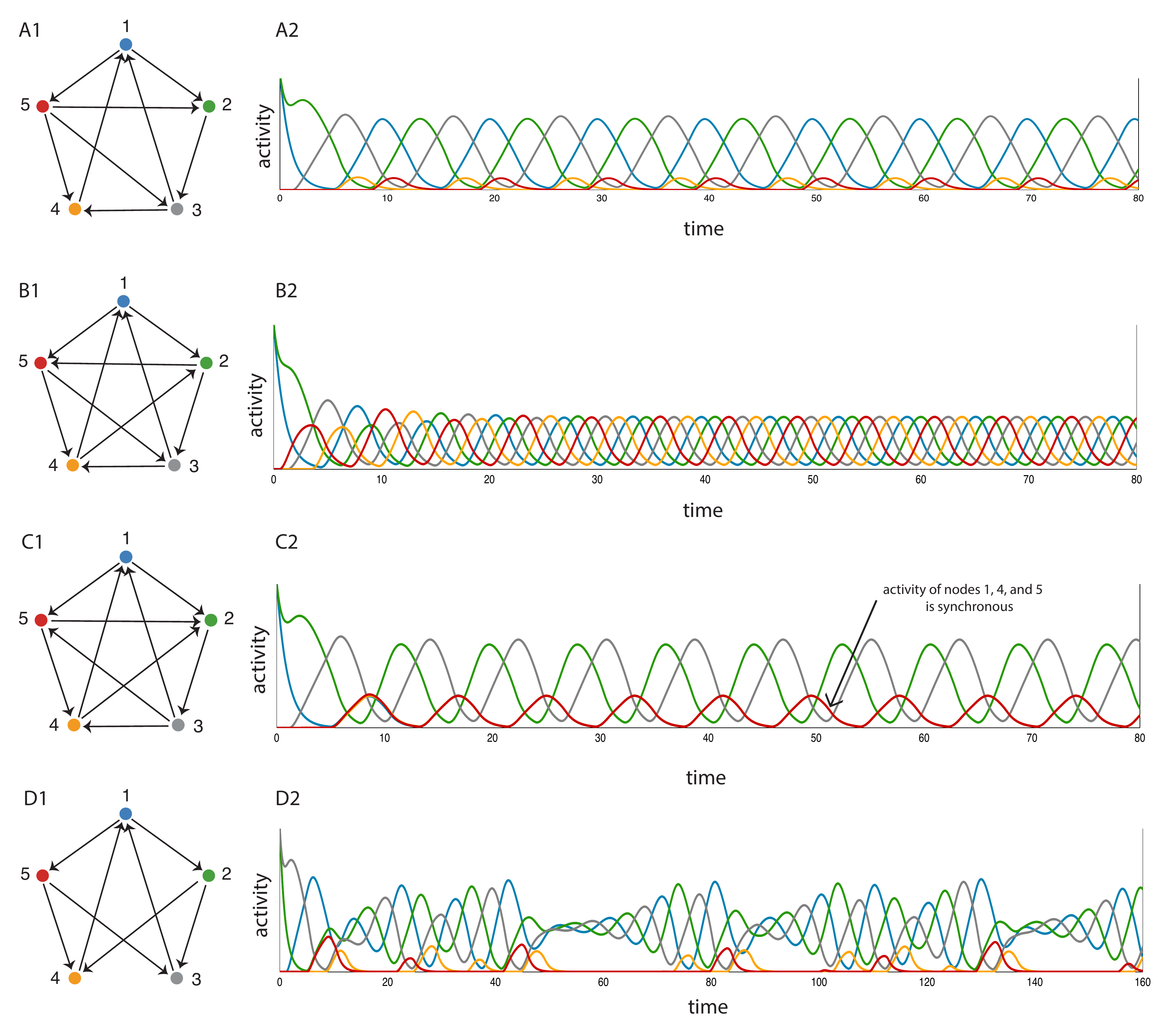}
                \caption{Examples on $n=5$ nodes:  Panels A1-D1 show some examples of oriented graphs on $n=5$ nodes. Panels A2-D2 show the dynamics of the networks corresponding to the graphs. Panel A2 shows a typical limit cycle. Note that nodes 4 and 5 fire at a much lower rate than nodes 1, 2, and 3. Panel B1 shows an example of a balanced subgraph. Note that each node has indegree 2 and outdegree 2. Panels C1-C2 show an example with synchronous firing: nodes 1,4, and 5 fire at the same rate. This is caused by the graph automorphism in C1. Panels D1-D2 show an example of a network with a chaotic attractor. Note that we show a longer trial in D2 to show the chaotic behavior. }
                
                \label{fig:n=5}
\end{figure}

 Could we have predicted these dynamics by looking at the graphs? The limit cycles in panels A and B, each follow a cycle in the corresponding graph. This is common in smaller graphs, but for larger $n$ we have seen examples where this is not the case. Also, what happens when there is more than one 5-cycle in the graph? Note that the graph in Panel B has two 5-cycles, 12534 and 15423, but the network has only one limit cycle.  Why does the network preferentially choose one 5-cycle over the other?  Figure \ref{fig:n=7} shows an example on $n=7$ nodes. Note that the limit cycle shown in the activity trace corresponds to a 6-cycle in the graph. Why does node 2 stop firing? It receives input from three other nodes while nodes 3 and 5 only receive two inputs each. Additionally, there are multiple 7-cycles in the graph. Why doesn't the limit cycle correspond to one of these 7-cycles? These are all questions that our algorithm must address. The algorithm must be able to discern between limit cycles and chaos, predict the firing sequence for limit cycles, and must also identify which nodes stop firing, if any.

\begin{figure}[h]
\centering
\includegraphics[width=150mm]{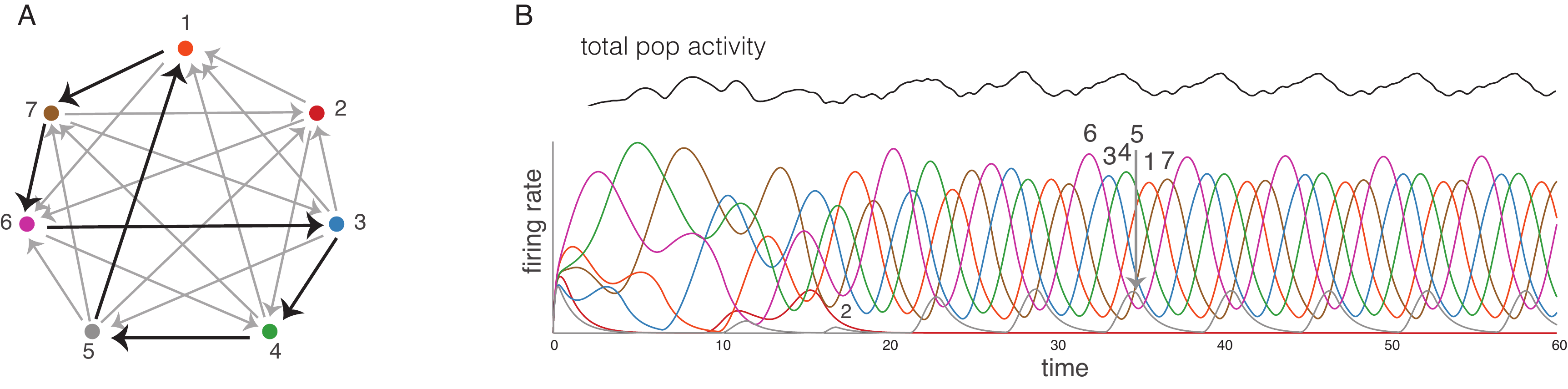}

                \caption{Example on $n=7$ nodes: The limit cycle has a sequence of 634517 as indicated above the dynamics. Note that node 2 (red) stops firing. Adapted from \cite{CTLN}.}
                \label{fig:n=7}
\end{figure}

\chapter{Sequence prediction algorithm}\label{algorithm}
In this section, we will discuss an algorithm we designed to predict the neural sequence from the graph for CTLN networks.  The basic premise of this algorithm comes from the idea of removing the ``weakest'' node and looking at the dynamics of the remaining network. Once we know how the smaller network behaves we work to figure out a way to tell where the deleted node fits in the sequence. We start with a description of the algorithm, discuss some examples, and then state some conjectures about when the algorithm is successful. We will look first at the case of tournaments\footnote{A tournament is a complete simple directed graph, where complete means that there is an edge between each pair of vertices.} without sinks and then examine oriented graphs without sinks.

\section{Description of the algorithm }
 The algorithm has two separate phases. In the deconstruction phase, we will first deconstruct the graph by deleting one vertex at a time, keeping track of the order of deletion. Then in the reconstruction phase, we start with a base sequence based on our deconstruction and rebuild the neural sequence by adding back in the deleted vertices in reverse order. Let $\mathcal{G}$ be a tournament on $n$ vertices with no sinks.
\paragraph{Deconstruction phase.} At each step of the algorithm we delete one of the vertices of $\mathcal{G}$ with smallest indegree such that the resulting reduced tournament has no sinks. We continue to delete vertices until we can no longer do so. This occurs when the resulting tournament is a three-cycle (see Proposition \ref{3cycle}). We refer to this three-cycle as the \textit{core cycle}. At each step we record the current tournament and the vertex we deleted. See Figure \ref{fig:algorithm} for an example. Note that the choice of vertex to delete is not necessarily unique, so it is possible for the algorithm to output multiple sequences.

\begin{figure}
\centering
\includegraphics[width=160mm]{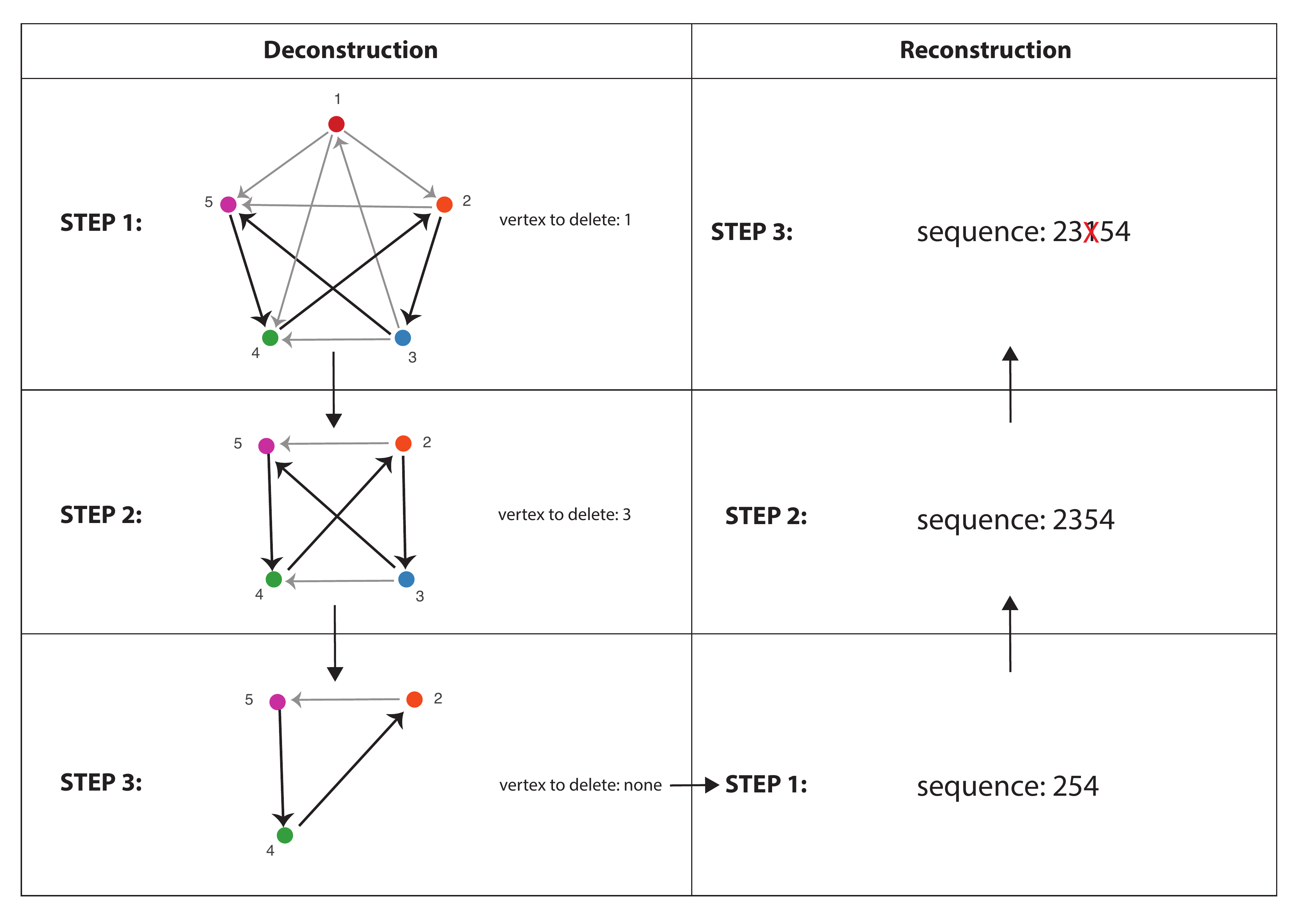}
                \caption{Example of algorithm on $n=5$}
                \label{fig:algorithm}
\end{figure}

\paragraph{Reconstruction phase.}
To reconstruct the sequence, we start with the three-cycle from the last step of the deconstruction phase. Recording the vertices in order of the three-cycle we insert the other vertices into this sequence in reverse order of deletion. Proceeding backwards through the list of deleted vertices, we add each vertex back into the sequence following the vertex that feeds into it in the graph from the preceding step in the deconstruction. If more than one vertex feeds into the vertex to be added, we look at the subgraph induced by these possibilities and if one of these possibilities is a sink in the induced subgraph, we place the vertex to be added after the sink in the sequence. See Figure \ref{fig:algorithm} for an example. Note that it is possible for the algorithm to fail if there are two or more edges feeding into the node we are adding back in.

\paragraph{Node death.}
When reconstructing the sequence, there are rules to predict the death of a node. If the vertex we delete at a given step has indegree zero then we do not add that node back in during reconstruction. If the vertex we delete at a given step has indegree one with the one edge coming from a vertex not in the core cycle then we do not add that node back in during reconstruction.

\paragraph{Final sequence.} 
To predict the final sequence, we first consider the full list of possibilities.  If there are two possibilities that are identical except one is missing a node, then we choose the shorter sequence. 
 If the possibilities are different but the same length, some neurons might fire synchronously. We predict the synchronous firing of a subset of neurons if that subset appears in the same cyclic order in each of the possibilities but with a different starting point. The neurons not in the subset appear in the same order in all possibilities. For example, we would predict that 2, 3, and 4 fire synchronously if the algorithm output sequences 12345, 13425, and 14235.

\paragraph{Implementation of the algorithm in Matlab.} We have developed a Matlab code to automate the prediction algorithm. The code can be found in Appendix \ref{algorithmcode}.

\begin{proposition}\label{3cycle}
For a tournament $\mathcal{G}$ with no sinks, the deconstruction phase of the algorithm will terminate if and only if the graph corresponding to the current step is a three-cycle. 
\end{proposition}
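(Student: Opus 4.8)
The plan is to pin down exactly when the deconstruction step is blocked, namely when removing \emph{any} vertex of the current tournament creates a sink, and to show this happens if and only if the current tournament is a three-cycle.

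For the ``if'' direction I would observe that removing any vertex of a three-cycle leaves a single directed edge $a\to b$, in which $b$ has outdegree $0$ and hence is a sink; so no deletion preserves the no-sinks property and deconstruction must stop. For the ``only if'' direction I would argue the contrapositive: if $\mathcal{G}$ is a tournament with no sinks that is \emph{not} a three-cycle, I will produce a vertex whose removal keeps the tournament sink-free, so deconstruction continues.

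The first step there is to note that $\mathcal{G}$ must have $n\ge 4$ vertices, since a sink-free tournament has $n\ge 3$ and the only sink-free tournament on $3$ vertices is the three-cycle (the three outdegrees are positive and sum to $\binom{3}{2}=3$, so each equals $1$). Then, assuming for contradiction that $\mathcal{G}-v$ has a sink for every vertex $v$, I would observe that such a sink $w$ satisfies $w\ne v$ and has all of its out-edges pointing to $v$ (otherwise $w$ would keep an out-edge in $\mathcal{G}-v$); since $\mathcal{G}$ has no sinks, this forces $w$ to have outdegree exactly $1$ in $\mathcal{G}$ with $w\to v$. Letting $S$ be the set of outdegree-$1$ vertices and $f\colon S\to V(\mathcal{G})$ the map sending a vertex to its unique out-neighbor, the previous observation shows $f$ is surjective, so $|S|\ge n$ and hence $S=V(\mathcal{G})$. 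But then all outdegrees equal $1$, so they sum to $n$ while also summing to $\binom{n}{2}=\tfrac{n(n-1)}{2}$, forcing $n=3$ and contradicting $n\ge 4$.

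The main obstacle is really just isolating the right invariant: once one recognizes that the obstruction to a legal deletion is the presence of an outdegree-$1$ vertex, the surjectivity-plus-edge-counting argument closes everything quickly; the rest is bookkeeping about small tournaments. I would also record as a preliminary that every step of the deconstruction stays within the class ``tournament with no sinks'' (an induced subdigraph of a tournament is a tournament, and the no-sinks condition is exactly what the deletion rule enforces), so the proposition's hypothesis applies at each step, and that ``terminates'' is read as ``no admissible vertex to delete remains,'' which is the intended meaning of ``until we can no longer do so.''
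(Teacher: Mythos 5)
Your proposal is correct and follows essentially the same route as the paper: a blocked deletion at $v$ forces an in-neighbor of $v$ with outdegree exactly $1$, which forces every vertex to have outdegree $1$, and then the edge count $m=\binom{m}{2}$ yields $m=3$. You merely make explicit two steps the paper leaves implicit (the injectivity/surjectivity argument behind ``every vertex has outdegree $1$'' and the preservation of the sink-free tournament class under deletion), which is a welcome tightening rather than a different approach.
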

\begin{proof}
Let $\mathcal{G}$ be a tournament on $n$ vertices with no sinks. Assume you reach a step in the algorithm where there are currently $m$ nodes remaining and deleting any vertex results in an illegal graph. Then each node must have at least one incoming edge from a vertex with outdegree exactly 1. This implies that every vertex has outdegree exactly 1. Thus the graph has a total outdegree of $m$, i.e. we have $m$ total edges. The number of edges can also be given by $\binom{m}{2}$ since it is a tournament, so $m=\binom{m}{2}$. Solving for $m$ gives $m=3$. Since we have outdegree 1 at every node, the current graph is a 3-cycle. Also note that if the graph at the current step is a three-cycle, then deleting any vertex, will result in a graph with two vertices and a directed edge between them. Thus one of the vertices is a sink, so the deconstruction phase terminates at at step $n-3$. 
\end{proof}

\section{Performance of the algorithm}

\begin{proposition}\label{algworks}
For $\varepsilon = 0.5$, $\delta = 0.25$, and $\theta = 1$, the algorithm correctly predicts the neural sequence for tournaments without sinks on $n\le5$ nodes.
\end{proposition}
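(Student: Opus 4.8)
The plan is to establish the claim by a complete finite case check, since up to isomorphism there are only finitely many tournaments without sinks on $n \le 5$ nodes, and for each one both the algorithm's output and the network's attractor can be computed explicitly. First I would enumerate the relevant tournaments. On $n = 3$ the only tournament without a sink is the three-cycle of Figure \ref{fig:3-cycle}. On $n = 4$ there are exactly two: up to isomorphism the tournaments on four vertices have out-degree sequences $(0,1,2,3)$, $(0,2,2,2)$, $(1,1,2,2)$, $(1,1,1,3)$, and discarding those with a vertex of out-degree $0$ leaves $(1,1,2,2)$ and $(1,1,1,3)$ (the latter being a source joined to a three-cycle). On $n = 5$, deleting a vertex of out-degree $0$ leaves an arbitrary tournament on four vertices, so only a small number of the twelve tournaments on five vertices are sink-free, and all of them appear among the graphs catalogued in Appendix \ref{catalogue}. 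For each representative graph I would run the deconstruction/reconstruction procedure of Section \ref{algorithm} (by hand, or via the Matlab code in Appendix \ref{algorithmcode}), recording \emph{every} sequence it emits, including the case where it returns several sequences signalling synchronous firing.

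Next, for each of these graphs I would integrate the CTLN system --- Equation \ref{thrlin} with $W = W(G,\varepsilon,\delta)$ at the stated parameters --- from a range of initial conditions, identify the attractor, and read off the cyclic order in which the neurons reach peak firing. Theorem \ref{nosinks} rules out stable fixed points for these networks, so each attractor is a limit cycle or chaotic; for these small tournaments the simulations exhibit limit cycles in every instance. The verification is then completed graph by graph by comparing the firing order extracted from the simulation against the algorithm's prediction, and, in the synchronous-firing cases, by checking that the predicted subset indeed fires at a common rate, as happens in the presence of a graph automorphism (cf. Figure \ref{fig:n=5}, Panel C).

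A handful of structural remarks reduce the bookkeeping. Since the dynamics depend only on $G$ up to isomorphism, one representative per class suffices. By Proposition \ref{3cycle} the deconstruction always terminates at a three-cycle core, so the reconstruction always has a well-defined starting point. Because tournaments are ``dense,'' the node-death rules are seldom triggered on $n \le 5$: a deleted vertex has in-degree $0$ only when it is a source, and in these small tournaments one checks directly that the full cycle is still recovered; and whenever the vertex being reinserted has a unique in-neighbour in the relevant reduced graph its placement is forced, while the induced-subgraph sink rule disambiguates the remaining cases. One then confirms that, in the cases at hand, node death either does not occur or occurs exactly as the algorithm predicts, so the predicted sequences have the expected length.

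The principal obstacle is not mathematical depth but methodology: the argument rests on numerically simulating an ODE, so one must ensure the integration runs well past transients, that the detected peak-firing order is stable under perturbation of the initial condition, and --- should a graph support more than one attractor --- that every coexisting attractor is examined (as in the multistable examples of Figure \ref{fig:multistability} and the multiple chaotic attractors mentioned for $n=5$). This is handled by sampling many initial conditions and checking that the cyclic peak order converges; with the parameters and the range of $n$ fixed and small, the resulting case check is exhaustive and rigorous, albeit computer-assisted.
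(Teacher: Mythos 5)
Your proposal is correct and matches the paper's approach: the paper proves this proposition exactly by an exhaustive computational case check over the (eleven) sink-free tournaments on $n\le 5$ nodes, comparing the algorithm's output with simulated dynamics for each. Your write-up simply makes explicit the enumeration and verification steps that the paper leaves implicit.
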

The proof of this proposition is done by checking each case computationally.
 We also note that for oriented graphs, the algorithm appears to predict which neurons will have high firing rates. For small examples, we often see three neurons with higher firing rates than the remaining neurons. These three high-firing neurons appear to correspond to the neurons in the core cycle predicted by the algorithm.  Further exploration of  tournaments lacking sinks on $n>5$ nodes indicates that there are networks where the algorithm does not correctly predict the behavior, often in the form of spurious predictions or incorrectly predicting neuron death. Analysis of these graphs indicates that the networks for which the algorithm fails appear to have the common property of having a balanced subgraph on $n\ge5$ nodes or are an outerneuron construction.

\begin{defint}
A \textit{balanced subgraph} is a complete induced subgraph of an oriented graph $\mathcal{G}$ where all nodes have the same outdegree. Note that for a balanced subgraph of size $m$, where $m$ is odd, the outdegree of each vertex is $\frac{m-1}{2}$. There are no balanced subgraphs of even size. See Panel B in Figure \ref{fig:n=5} for an example of a balanced graph on 5 vertices. Each vertex has indegree 2 and outdegree 2. \end{defint}

\begin{defint}
The \textit{outerneuron construction} is the process of taking a simple directed graph on $n$ vertices and adding two vertices to the graph: one vertex with edges directed to all vertices in the original graph (a pseudo-source) and one vertex who receives directed edges from all the vertices in the original graph (a pseudo-sink). We then add a directed edge from the pseudo-sink to the pseudo-source to guarantee that the new graph on $n+2$ vertices has no sinks. \end{defint}

We have looked at all tournaments having no sinks on up to $n=7$ vertices. Proposition \ref{algworks} gives that the algorithm works for the 11 such graphs on $n\le5$ vertices. 
On $n=6$ nodes there are 44 graphs and on $n=7$ nodes there are 400 such graphs. The algorithm fails for only two of the $n=6$ graphs, one with a spurious prediction and one with a spurious deletion. Both of these graphs have a balanced subgraph on 5 vertices. For $n=7$, if we ignore graphs with an outerneuron construction and graphs with balanced subgraphs with $n\ge5$ vertices (153 total), the algorithm only fails for 5 out of the remaining 247 graphs. Additionally, of the 153 graphs having an outerneuron construction or a balanced subgraph with $n\ge5$, 62 of these are still correctly predicted by the algorithm.

\section{Extending the algorithm to oriented graphs}

We have also explored the success of the algorithm on oriented graphs without sinks on $n\le 5$ nodes. Recall that unlike tournaments, oriented graphs do not require an edge between every pair of vertices. As a result we are not guaranteed that the algorithm will terminate at a three-cycle for oriented graphs. In fact, we have seen examples on $n=5$ where the algorithm terminates in a 4- or 5-cycle. A comprehensive list of oriented graphs without sinks on $n\le5$ nodes appears in Appendix \ref{catalogue}. The algorithm correctly predicts the behavior in all but 6 of the 160 total networks on $n\le5$ nodes. Using the algorithm on the Graph \#147, 148, 149, 152, and 158 predicts the correct sequence, but also produces a spurious prediction. For example, the network corresponding to Graph \#147 has a limit cycle with sequence 12(45)3 where the parentheses indicate that neurons 4 and 5 fire synchronously. The algorithm makes three predictions: 12453, 12543, and 12534.  The first two predictions result in a correct final sequence of 12(45)3, but 12534 is a spurious prediction. Using the algorithm on Graph \#153, a node is deleted that does not die, which keeps the algorithm from predicting the synchronous activity.

One weakness that arises when using the algorithm on oriented graphs rather than tournaments is that it is possible to disconnect the graph during deconstruction. Since the reconstruction rules will not necessarily make sense in this case, additional rules will be necessary for the deconstruction of oriented graphs to avoid breaking the graph into multiple components. We also still sometimes make incorrect predictions in the case of graphs with the outerneuron construction and/or a balanced subgraph. The next section shows a comprehensive study of all the oriented graphs on $n\le5$ vertices in order to investigate ways to adjust the algorithm for oriented graphs.

\section{An application: classification of oriented graphs on $n\le5$}

To investigate why the algorithm fails in some cases, we perform an exhaustive study and classification of oriented graphs on $n\le5$ nodes. See Appendix \ref{catalogue} for a complete list of graphs and their classification.  If we look at the graphs for which the algorithm failed, we see that they all fall into a category where there are two or more different $n=4$ subgraphs possible in the first step in the algorithm. Using this classification we hope to start refining the algorithm to work in more generality.

In the next section we also sort the graphs by dynamics.  Note that the graphs in the same entry in the dictionary often also appear in the same category of the classification in Appendix \ref{catalogue}.

\section{Dictionary of attractors for $n\le5$}\label{dictionary}
In this chapter we create a dictionary of graph behaviors. We sort the graphs into groups based on the dynamics of the corresponding network. Each entry in the dictionary corresponds to a specific limit cycle or chaotic attractor.  AT denotes ``attractor type.''  For each entry, we show a representative graph for that particular attractor type along with its dynamics.  To make the dynamics plots we use $\varepsilon = 0.5$, $\delta = 0.25$, and $\theta=1$ as before. Initial conditions used are listed above the dynamics plots. We use an asterisk (*) to indicate which initial condition we used to create the plot. The sequence listed is for the representative graph only. We underline low-firing neurons and synchronous neurons are in parentheses. All graphs listed in that entry exhibit the same behavior (up to permutation) as the representative graph for some initial condition. If a graph has more than one attractor we annotate the graph number with \_ic1, \_ic2, etc. The labelling of the graphs corresponds to the numbering in the catalogue found in Appendix \ref{catalogue}. The compilation of the dictionary was carried out in collaboration with Katherine Morrison. 

\newpage
\includegraphics[width=155mm]{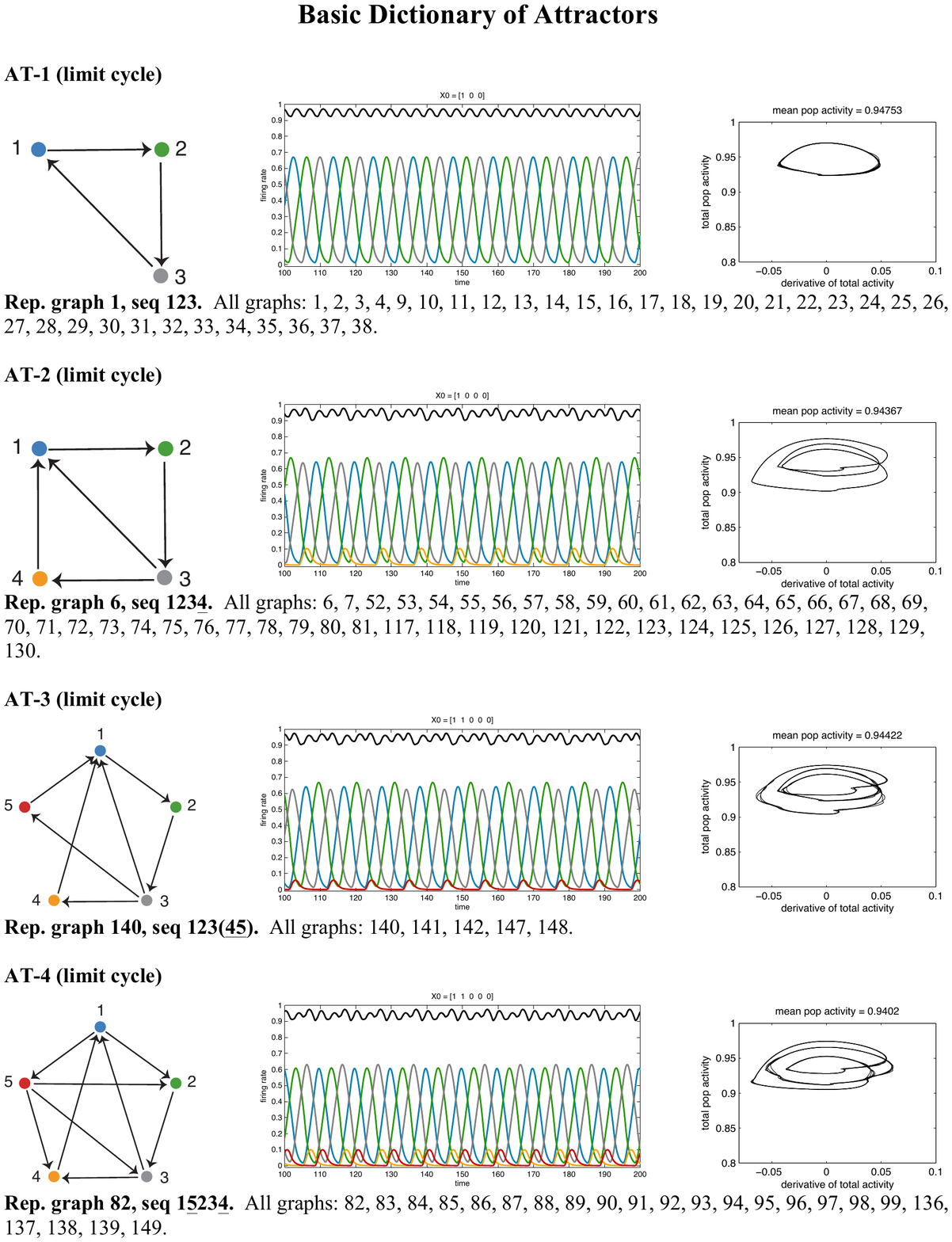}
\newpage
\includegraphics[width=155mm]{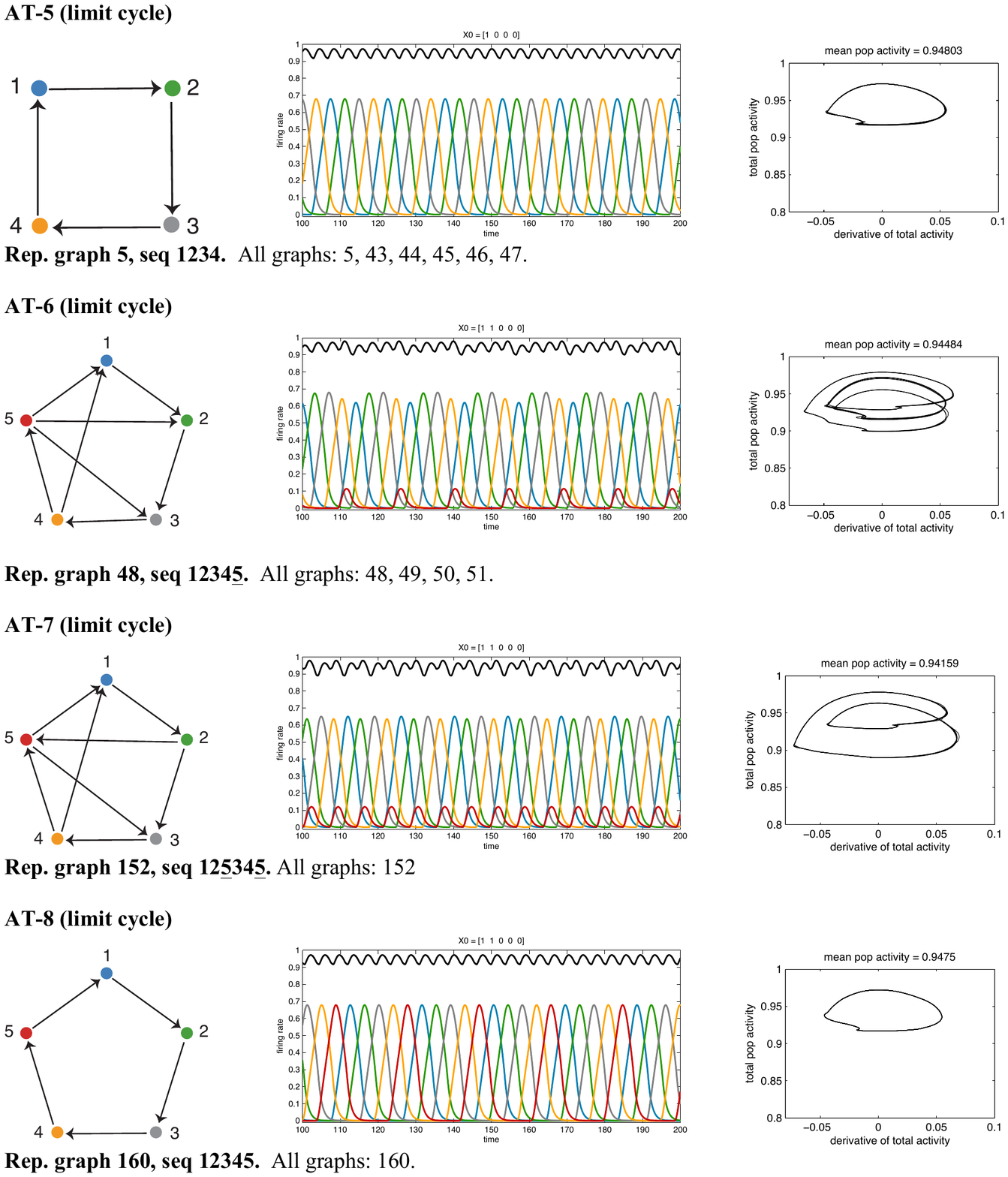}
\newpage
\includegraphics[width=155mm]{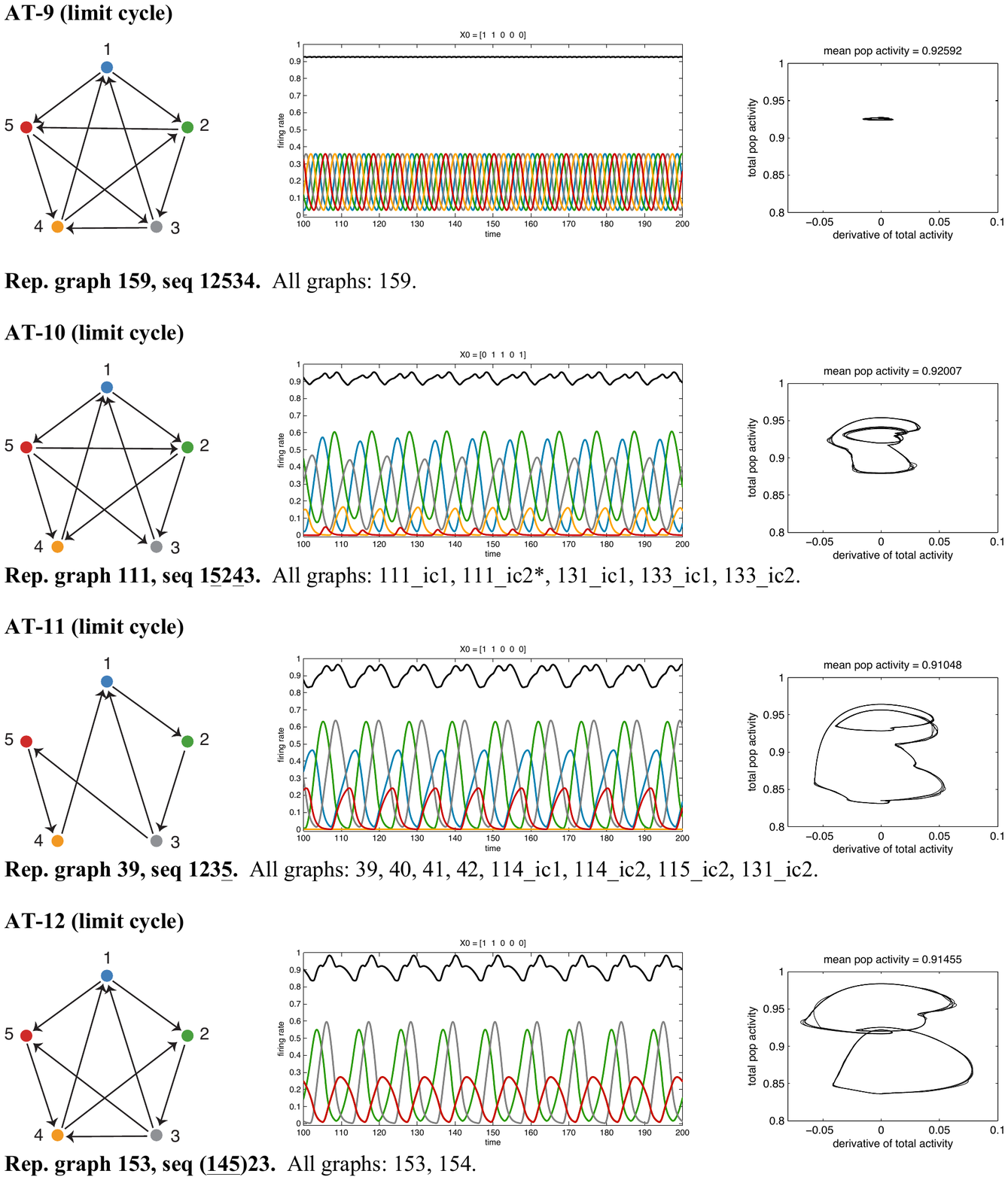}
\newpage
\includegraphics[width=155mm]{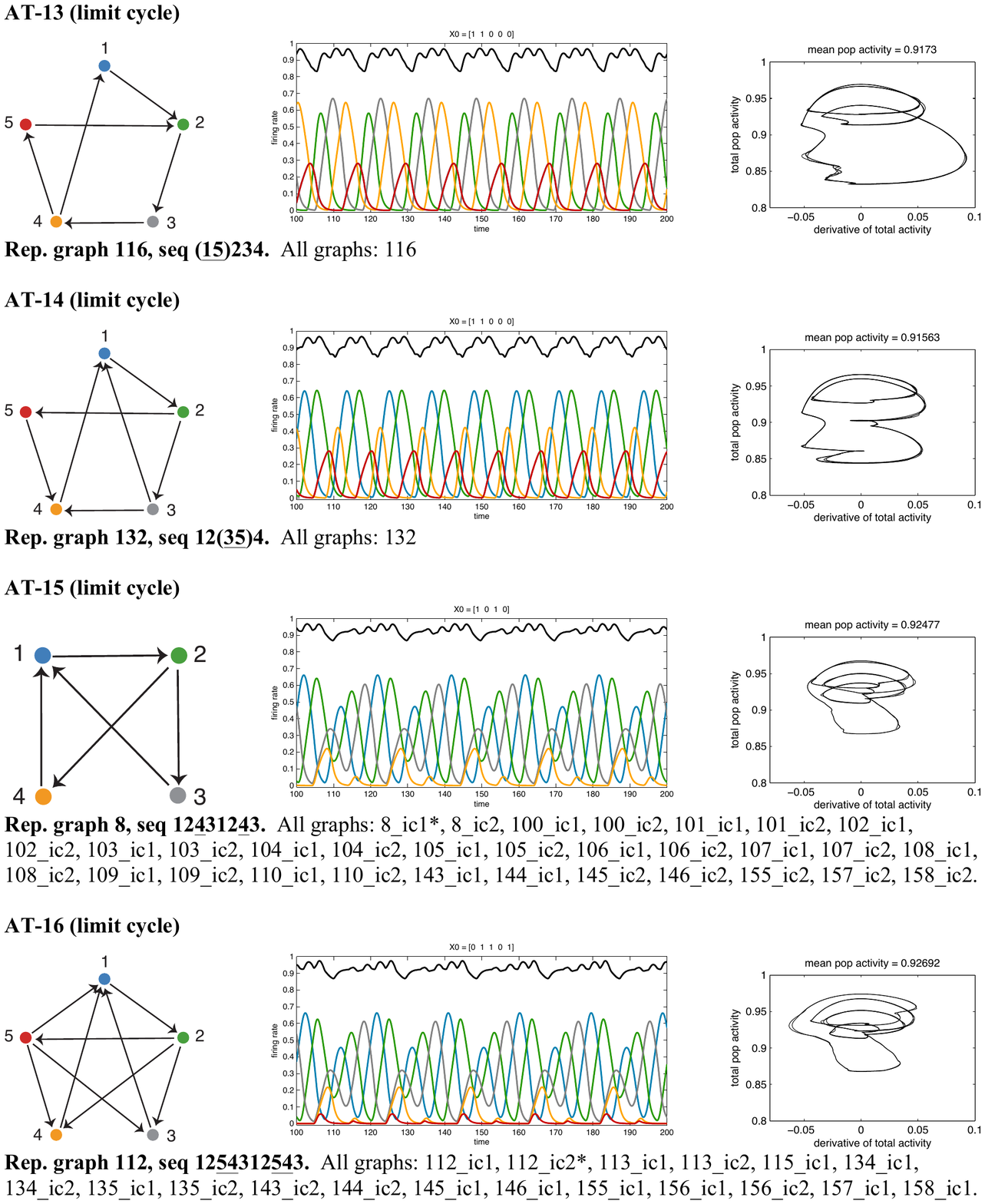}
\newpage
\includegraphics[width=155mm]{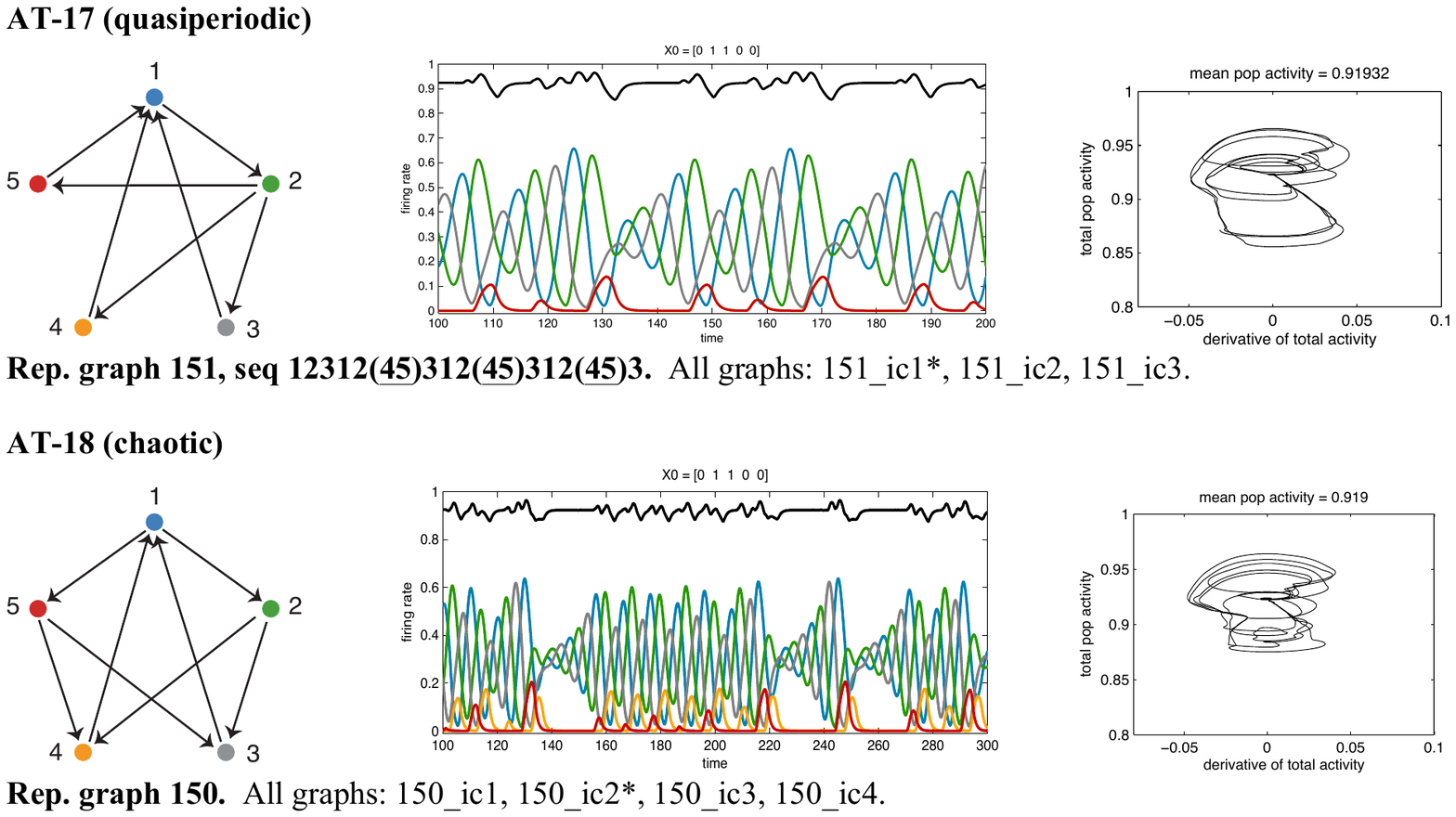}

\section{Future work}

\paragraph{Balanced subgraphs and outerneuron constructions.} For the subset of graphs whose sequences cannot be predicted by our current algorithm, we need to adjust the current algorithm or design a new algorithm to handle these cases.

\paragraph{Refining the algorithm for oriented graphs.} Using the classification of oriented graphs on $n\le5$ nodes, we can look for commonalities in the graphs where the algorithm fails and adjust the algorithm accordingly.

\paragraph{Exploring larger networks.} After classifying the networks and behaviors for oriented graphs on $n\le 5$ one of the next steps is to take a closer look at graphs with $n>5$.

\paragraph{Proving conjectures about algorithm.} We have many conjectures about the prediction algorithm, so we would like to look for proofs or counterexamples.

\begin{conjecture}
In an oriented graph with no sinks on $n=5$ nodes, two nodes fire synchronously if the algorithm predicts two sequences that are identical except for with the two nodes switched and neither node is part of the core cycle.
\end{conjecture}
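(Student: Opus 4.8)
The plan is to combine an exhaustive finite check with a structural symmetry argument. Since the statement concerns only oriented graphs with no sinks on $n=5$ nodes, there are finitely many graphs to examine — those catalogued in Appendix \ref{catalogue}. First I would run the deconstruction/reconstruction algorithm on each and isolate the short sublist of graphs whose output contains two sequences $w,w'$ agreeing outside two positions, with $w'$ obtained from $w$ by transposing nodes $i,j$, neither of which lies in the core cycle. A useful preliminary observation is that the hypothesis already forces the core cycle to be a three-cycle: the core is a cycle on $3$, $4$, or $5$ nodes, but there must be at least two nodes outside it, so on $n=5$ it has exactly $3$ nodes. For each graph on the sublist it then suffices to show that the CTLN limit cycle satisfies $x_i(t)\equiv x_j(t)$, and, as part of the check, that the relevant limit cycle is unique or at least the one whose sequence the algorithm is predicting. (Graph \#147, with outputs $12453$ and $12543$, is a model case; additional, possibly spurious, predictions such as $12534$ are irrelevant, since the conjecture asserts only an implication whose hypothesis is the existence of one transposed pair.)

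The conceptual engine of the proof is to extract a genuine graph automorphism from the algorithmic ambiguity. The reconstruction step producing two sequences differing by the transposition $(i\,j)$ is exactly the step where two candidate predecessors both feed into a node and neither is a sink in the subgraph they induce; since $G$ is oriented this already forces the absence of an edge between $i$ and $j$. I would then verify — case by case on the sublist, or via a general lemma about how the ambiguity propagates — that in these $n=5$ examples $i$ and $j$ must have identical in- and out-neighborhoods in $G\setminus\{i,j\}$, so that $\sigma=(i\,j)$ is an automorphism of $G$. Equivalently, the permutation matrix $P_\sigma$ commutes with $W=W(G,\varepsilon,\delta)$, hence the threshold-linear vector field of Equation \ref{thrlin} is $P_\sigma$-equivariant: $F(P_\sigma x)=P_\sigma F(x)$. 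Therefore $P_\sigma$ permutes attractors, and by the uniqueness established in the finite check the limit cycle $\gamma$ is $\sigma$-invariant as a set. Here the hypothesis that $i,j$ lie outside the core is what makes this work: a transposition of two of the five nodes fixes the other three pointwise, and no such transposition can be an automorphism when $i$ or $j$ is a core node, since the core $3$-cycle $a\to b\to c\to a$ is not symmetric under swapping two of its vertices; core nodes traverse the core with distinct phases and are never synchronous.

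The remaining — and genuinely delicate — step is upgrading set-invariance to pointwise synchrony. A $\sigma$-invariant periodic orbit satisfies $P_\sigma\gamma(t)=\gamma(t+c)$ for some phase $c$ with $2c\equiv 0\pmod T$, where $T$ is the (minimal) period; hence either $c=0$, which gives $x_i(t)\equiv x_j(t)$ and we are done, or $c=T/2$, an anti-phase (``standing wave'') solution with $x_i(t)=x_j(t+T/2)$ in which every fixed (core) node $a$ satisfies $x_a(t)=x_a(t+T/2)$. Ruling out this anti-phase alternative is the main obstacle. The honest route, consistent with the computational verifications elsewhere in this chapter, is to integrate the system numerically for each graph on the $n=5$ sublist and observe $c=0$. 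In parallel I would pursue a structural obstruction: in all the observed limit cycles each high-firing neuron produces a single activity bump per period, so a core node cannot have a trace of period $T/2$, contradicting $c=T/2$; making the single-bump property rigorous for the relevant graphs is where most of the real work would lie.

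In summary, the skeleton is: (1) reduce to the finite list of $n=5$ graphs triggering the hypothesis, forcing a $3$-cycle core; (2) show the algorithmic transposition forces $\sigma=(i\,j)$ to be a graph automorphism, so the dynamics are $\sigma$-equivariant and the (unique) limit cycle is $\sigma$-invariant; (3) rule out the half-period anti-phase case to conclude $x_i\equiv x_j$. Step (3) — and, to a lesser extent, the lemma in step (2) — is the hard part; if a clean argument resists, the finite numerical verification over the catalogue still yields a complete proof for $n=5$.
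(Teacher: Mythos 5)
First, a point of calibration: the paper does not prove this statement. It appears in the ``Future work'' section as an open conjecture, explicitly flagged with ``we would like to look for proofs or counterexamples,'' so there is no proof of record to compare yours against; your proposal must be judged as a standalone attempt. The skeleton you choose --- extract a graph automorphism $\sigma=(i\,j)$ from the algorithmic ambiguity, use $P_\sigma W=WP_\sigma$ to get equivariance of the vector field in Equation \ref{thrlin}, conclude the limit cycle is $\sigma$-invariant as a set, and then resolve the in-phase/anti-phase dichotomy --- is the right way to prove \emph{exact} synchrony $x_i(t)\equiv x_j(t)$, and it is consistent with the paper's own observation (Figure \ref{fig:n=5}, Panel C) that synchrony arises from graph automorphisms.

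However, there are genuine gaps. The central one is your step (2): you assert, but do not prove, that whenever the algorithm outputs two sequences differing by the transposition of two non-core nodes $i,j$, those nodes have identical in- and out-neighborhoods in $G\setminus\{i,j\}$. The algorithm's multiple outputs can arise from two distinct mechanisms --- ties among minimal-indegree vertices in the deconstruction phase, or the reconstruction rule that inserts a node after \emph{each} of several mutually non-adjacent feeders --- and neither obviously forces an automorphism; the paper's own hedged language (synchrony ``often'' results from a graph automorphism) is a warning that this implication is not automatic. Without this lemma the equivariance argument has no input. Second, even granting the automorphism, $P_\sigma$ only permutes the set of attractors; for graphs whose networks have several attractors (which do occur at $n\le 5$) you must show the particular limit cycle in question is fixed by $\sigma$ rather than swapped with a partner orbit, in which case synchrony would fail. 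Third, your closing fallback --- that ``the finite numerical verification over the catalogue still yields a complete proof'' --- overreaches: numerics can legitimately decide the discrete dichotomy $c=0$ versus $c=T/2$ \emph{once set-invariance has been established}, but it cannot by itself certify the exact identity $x_i\equiv x_j$, so it cannot substitute for step (2). As written, the proposal is a credible research plan rather than a proof; the conjecture remains open until the automorphism lemma and the attractor-invariance check are actually carried out over the catalogue in Appendix \ref{catalogue}.
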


\begin{conjecture} During the reconstruction phase, if node $u$ has indegree zero in the corresponding subgraph, then this node dies.
\end{conjecture}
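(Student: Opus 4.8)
\medskip
\noindent\textbf{Proof proposal.}
The conjecture is really a statement about the CTLN dynamics: ``node $u$ has indegree zero in the corresponding subgraph'' says exactly that, at the relevant stage, $u$ receives the strongest inhibitory weight $-1-\delta$ from every other vertex, so its equation reduces to $\dot x_u = -x_u + \left[\,\theta-(1+\delta)\Sigma(t)\,\right]_+$ with $\Sigma(t)=\sum_{j\ne u}x_j(t)$. The plan is to isolate the following lemma and then feed it through the recursion of the algorithm. \emph{Lemma:} let $H$ be an oriented graph with no sinks, $u\in V(H)$ of indegree $0$, and $\mathcal A'$ the attractor of the CTLN on $H-u$ (which again has no sinks, since deleting a source removes no out-edges, so Theorem \ref{nosinks} applies); then in the CTLN on $H$ every trajectory satisfies $x_u(t)\to 0$, and the induced dynamics on $V(H)\setminus\{u\}$ converge to $\mathcal A'$.

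First I would record the a priori bounds: $W_{ij}\le 0$ gives $\dot x_i\le -x_i+\theta$, hence $\limsup_t x_i(t)\le\theta$ for every $i$ and a global attractor exists. The substance of the proof is the estimate that on $\mathcal A'$ the surviving population is never too quiet: one needs $\Sigma(t)\ge \theta/(1+\delta)+c$ for some $c>0$ and all large $t$ (at least in a per-period averaged sense along a limit cycle). I would obtain this from the population-activity estimates for CTLNs in \cite{CTLN} together with the observation that the ``all survivors quiet'' state cannot be attracting: $H-u$ has no sink, hence no stable fixed point, so the origin of that subsystem is repelling; and whenever $x_u$ is large while the others are small the (out-degree $\ge 1$) node that $u$ feeds is pinned near $\varepsilon\theta>0$, which reignites the core. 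Granting this estimate, the bracket in the $x_u$ equation vanishes for large $t$, so $\dot x_u\le -x_u$ and $x_u\to 0$ exponentially; then $u$'s outgoing contributions $W_{iu}x_u\to 0$, the dynamics on $V(H)\setminus\{u\}$ decouple from $u$ and converge to $\mathcal A'$, and the set $F=\{x\ge 0:\ x_u=0,\ x|_{V(H)\setminus\{u\}}\in\mathcal A'\}$ is forward-invariant and attracting --- which is precisely where the lower bound on $\Sigma$ is used.

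To return to the algorithm I would apply the lemma inside the deconstruction/reconstruction recursion. When $u=v_k$ is the vertex deleted at step $k$ and has indegree $0$ in $H_k=G-\{v_1,\dots,v_{k-1}\}$, all of $u$'s in-neighbors in $G$ were deleted earlier; assuming inductively that those earlier-deleted vertices themselves die (the recursion bottoms out at the core cycle via Proposition \ref{3cycle}), $u$ has effective indegree $0$ in the live subnetwork, so the lemma gives $x_u\to 0$ on the attractor of the full network --- i.e. $u$ dies --- and the reduced subnetwork ($H_k-u$, still sink-free) carries the attractor that the reconstruction continues to build the sequence from. This matches the node-death rule exactly.

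The main obstacle is the quantitative lower bound on $\Sigma$ --- ruling out that the indegree-$0$ vertex $u$ ``hijacks'' the network and suppresses the core rather than the other way around. Heuristically $u$, receiving maximal inhibition from everyone, should be damped whenever the core is active; but upgrading this to a clean, parameter-uniform bound $\Sigma\ge\theta/(1+\delta)+c$ --- as opposed to merely $\limsup_t\Sigma(t)>0$ --- is delicate, and I expect it to require both the hypothesis $\varepsilon<\delta/(1+\delta)$ and a Floquet-type per-period averaging argument rather than a pointwise estimate. Should even the averaged bound fail on some exotic attractor, the natural fix is to add the hypothesis (automatic in the way the reconstruction rule is invoked) that $u$ neither lies on nor feeds the core cycle.
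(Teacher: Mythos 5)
First, a point of reference: the paper does not prove this statement. It is listed as an open conjecture in the ``Future work'' section (``we would like to look for proofs or counterexamples''), so there is no argument of record to compare yours against; the proposal has to stand on its own. As written it does not close the conjecture, for two reasons --- one you acknowledge and one you do not.

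The acknowledged gap is the lower bound $\Sigma(t)\ge\theta/(1+\delta)+c$ on the surviving population's activity; this is the entire analytic content of your lemma, and ``I would obtain this from the population-activity estimates in \cite{CTLN}'' is a placeholder, not a proof. The unacknowledged and more serious gap is in the reduction of the conjecture to that lemma. The hypothesis is that $u=v_k$ has indegree zero in the \emph{corresponding subgraph} $H_k=G-\{v_1,\dots,v_{k-1}\}$, not in $G$: all of $u$'s in-neighbors in $G$ lie among the earlier-deleted vertices. Your induction assumes ``those earlier-deleted vertices themselves die,'' but that is false in general --- the whole point of the reconstruction phase is that most deleted vertices are reinserted into the sequence and fire in the limit cycle (e.g.\ in the worked $n=5$ example of Figure \ref{fig:algorithm}, the deleted vertices reappear in the final sequence). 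Deletion order is governed by smallest indegree subject to the no-sink constraint, not by node death. Consequently $u$ may receive the weaker inhibition $-1+\varepsilon$ from nodes that are active on the attractor, its equation is not $\dot x_u=-x_u+\left[\theta-(1+\delta)\Sigma\right]_+$, and the lemma about genuine sources does not apply. Any correct proof must explain why $u$ dies \emph{despite} having firing in-neighbors in $G$ --- presumably because those in-neighbors are themselves low-firing or mistimed relative to the core cycle --- which is a genuinely different and harder statement than the source-node lemma you isolate. Your closing suggestion to add hypotheses about $u$ not feeding the core would change the statement being proved rather than prove the conjecture as posed.
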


\begin{conjecture}
During the reconstruction phase, if node $u$, with indegree 1 in the corresponding subgraph, is added to the sequence following a node that was not part of the base sequence then $u$ dies, unless $u$ is part of a balanced subgraph.
\end{conjecture}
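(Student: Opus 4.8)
The plan is to convert this combinatorial claim into a statement about the support of the attractor and then argue directly with the threshold-linear dynamics. Write $A$ for the limit cycle (or chaotic attractor) selected by the full network and let $\operatorname{supp}(A)\subseteq[n]$ be the set of neurons firing with nonvanishing rate on $A$; ``$u$ dies'' means $u\notin\operatorname{supp}(A)$. Let $k$ be the deconstruction step at which $u$ is removed, let $G_k$ be the graph present just before that removal, and let $v$ be the unique in-neighbor of $u$ in $G_k$ (it exists because $u$ has in-degree $1$ there, which is the hypothesis); by hypothesis $v$ is not one of the three core-cycle vertices. First I would establish, by induction along the deconstruction, the structural fact that $\operatorname{supp}(A)$ contains the core three-cycle and is contained in the union of the core cycle with the set of re-inserted vertices — the base case being the core itself. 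Granting this, the only in-neighbor of $u$ in the full graph $G$ that can possibly lie in $\operatorname{supp}(A)$ is $v$: every other edge into $u$ in $G$ comes from a vertex removed at an earlier step, and such a vertex is either re-inserted — hence covered by the induction — or absent from $\operatorname{supp}(A)$.

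Next, handle the main case $v\notin\operatorname{supp}(A)$. Then for every $j\in\operatorname{supp}(A)\setminus\{u\}$ there is no edge $j\to u$ in $G$, so $W_{uj}=-1-\delta$, while $W_{uu}=0$. Hence along any trajectory on $A$,
\begin{equation*}
\sum_{j} W_{uj}\,x_j(t)+\theta \;\le\; -(1+\delta)\!\!\sum_{j\in\operatorname{supp}(A)\setminus\{u\}}\!\! x_j(t)+1 .
\end{equation*}
Now I would invoke the standard CTLN a priori bounds from \cite{CTLN}: on a live attractor the total activity is bounded below by a positive constant $c=c(\varepsilon,\delta,\theta)$. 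With the chosen parameters $\delta=0.5$, $\theta=1$ one checks that $(1+\delta)\bigl(c-\sup_{A}x_u\bigr)>1$ provided $x_u$ stays small, so arguing by contradiction: if $x_u$ did not decay then at the phase where $x_u$ is maximal the right-hand side above would already be negative, forcing $\dot x_u=-x_u<0$ there, a contradiction. Thus $x_u(t)\to 0$ and $u\notin\operatorname{supp}(A)$; combined with the reduction above, whenever $v$ dies, $u$ dies.

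It remains to treat the possibility that $v$, although not a core vertex, nonetheless survives in $\operatorname{supp}(A)$, and to show that this forces $u$ into a balanced subgraph. When $u$ is not contained in a balanced subgraph, the idea is that $u$'s single excitatory input $W_{uv}=-1+\varepsilon=-0.75$ is too weak to lift $u$ above threshold against the inhibition of the rest of $\operatorname{supp}(A)$ at the unique phase when $v$ is near its peak: the estimate $-0.75\,x_v(t)-(1+\delta)\sum_{j\in\operatorname{supp}(A)\setminus\{u,v\}}x_j(t)+1<0$ should again give $x_u\to 0$, using that $x_v$ itself is bounded above by the same a priori bounds. If instead $u$ lies in a balanced subgraph $B$, the hypothesis ``in-degree $1$'' can only hold in $G_k$, not in $G$ — in $B$ every vertex has in-degree $(m-1)/2\ge 2$ — and the high symmetry of $B$ (often a graph automorphism, as in Panel~C of Figure~\ref{fig:n=5}) can pin $u$ into synchronous firing with its orbit, so it cannot die; this is exactly the exception the conjecture records.

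The hard part is this last step together with the activity lower bound used in the main case: the CTLN theory of \cite{CTLN} does not pin down the support of a general attractor, so the clean dichotomy ``$v$ dies $\Rightarrow u$ dies, otherwise $u$ is in a balanced subgraph'' has to be extracted by hand rather than read off from an existing theorem, and the inequalities only close for the right relation between the weak-edge weight $-1+\varepsilon$, the strong-edge weight $-1-\delta$, and the minimum attractor activity. I would first verify the statement exhaustively on all oriented graphs with no sinks on $n\le 7$ nodes — both to confirm that ``balanced subgraph'' is the \emph{only} missing exception (recall Graph~\#153 in the dictionary, where a deleted node fails to die) and to catalogue which configurations of surviving in-neighbors actually arise — and only then attempt the dynamical argument, with the balanced-subgraph case most likely requiring its own symmetry lemma.
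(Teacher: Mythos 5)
First, be aware that the paper does not prove this statement: it appears in the ``Future work'' section as an explicit conjecture, listed among items for which the author ``would like to look for proofs or counterexamples.'' There is therefore no proof in the paper to compare against, and your proposal has to stand on its own; as written it does not close. The first gap is that your opening induction --- that $\operatorname{supp}(A)$ contains the core three-cycle and is contained in the union of the core cycle with the re-inserted vertices --- is essentially the full correctness statement for the algorithm, i.e.\ the very thing the chapter leaves open, so assuming it as a ``structural fact'' begs the question. The second gap is the reduction to ``the only live in-neighbor of $u$ is $v$,'' which is false as argued: an in-neighbor $w$ of $u$ in $G$ that was deleted at an \emph{earlier} step than $u$ is indeed absent from $G_k$ (which is exactly why $u$ has in-degree $1$ there), but $w$ is re-inserted during reconstruction and may perfectly well lie in $\operatorname{supp}(A)$; in that case $u$ receives weak ($-1+\varepsilon$) input from a live neuron other than $v$ and your pure-inhibition estimate no longer applies. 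Saying such a $w$ is ``covered by the induction'' does not decide whether it is live or dead.

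The quantitative step and the exceptional case are also unsupported. The ``standard a priori lower bound'' $c(\varepsilon,\delta,\theta)$ on total activity along an attractor is neither stated in this thesis nor quoted from \cite{CTLN} (Theorem \ref{nosinks} gives boundedness and absence of stable fixed points, not a positive lower bound on $\sum_j x_j$ over an attractor), and your inequality $(1+\delta)\bigl(c-\sup_{A}x_u\bigr)>1$ with $\delta=0.5$ requires $c$ to exceed $2/3$ plus the peak of $x_u$ --- a bound you would have to prove, not cite. Most importantly, the dichotomy at the heart of the conjecture --- either the inhibition estimate closes and $u$ dies, or $u$ sits in a balanced subgraph --- is never argued: you show (modulo the above) that a certain inequality forces death, and you observe that balanced subgraphs are symmetric, but you give no reason why \emph{failure} of that inequality should imply that $u$ lies in a balanced subgraph rather than, say, simply having a sufficiently strong surviving in-neighbor $v$. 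Your closing plan (exhaustive verification on $n\le 7$, then a symmetry lemma for the balanced case) is a reasonable research program consistent with the paper's own stated intentions, but it is a plan, not a proof.
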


\begin{conjecture}
If the node we are placing back in the sequence has indegree $>$ 1, then consider the subgraph induced by the vertices which contribute to the indegree. If the induced subgraph has a sink then place the vertex after the sink in the sequence. 
\end{conjecture}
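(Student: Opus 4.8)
The plan is to recast the statement as a claim about the \emph{local firing order} within the limit cycle: to show that the reinserted neuron $u$ turns on precisely in the time window that opens immediately after the last of its in-neighbors peaks, and that this last in-neighbor is exactly the sink of the induced subgraph on those in-neighbors.

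First I would fix the local dynamical picture. In the CTLN model with $\theta=1$ and weights as in Equation~\ref{Wrule}, neuron $u$ leaves the quiescent region $x_u\approx 0$ exactly when its total drive $\sum_j W_{uj}x_j+1$ becomes positive. Since every off-diagonal weight is negative, with the less-inhibited value $-1+\varepsilon$ on edges $u\leftarrow j$ and the strongly inhibited value $-1-\delta$ otherwise, this drive is largest exactly when the currently high-firing neurons are in-neighbors of $u$ while the out-neighbors and non-neighbors of $u$ are quiet. Hence the onset of $u$'s activity is controlled entirely by the activity of its in-neighbors, and $u$ can only fire inside the cyclic interval during which those in-neighbors dominate. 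Using the inductive structure of the algorithm, I would take the limit cycle of the graph with $u$ removed as given from the previous reconstruction step and locate this interval within it.

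Second --- and this is the heart of the argument --- I would show that among the in-neighbors of $u$ the order in which they peak around the cycle is compatible with the edge directions of the induced subgraph: if $a\to b$ is an edge between two in-neighbors of $u$, then $a$ peaks before $b$, because $b$ receives the weakened inhibition from $a$ and is released only as $a$ wanes. A sink $w$ of the induced subgraph (a vertex with no outgoing edge inside it) therefore peaks after every in-neighbor of $u$ to which it is comparable; since $w\to u$, the decay of $w$ is what sustains $u$ above threshold, so $u$ fires in the window just after $w$ and hence follows $w$ in the sequence. Making ``$a$ peaks before $b$'' rigorous requires tracking the piecewise-linear flow through its linear chambers, for which there is no closed form; I expect this to be the main obstacle. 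The cleanest route seems to be an asymptotic analysis in the regime $\varepsilon,\delta\to 0$, where the flow degenerates to a purely combinatorial ``who suppresses whom'' rule that can be followed chamber by chamber, combined with the structural results on fixed-point supports from \cite{CTLN} and \cite{patterncompletion} to transfer the conclusion back to the fixed parameters used here.

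Finally I would treat the degenerate cases. The conclusion is only unambiguous when the induced subgraph has a unique sink; with two or more sinks the corresponding in-neighbors peak (nearly) simultaneously and $u$ may be inserted after either, which is exactly the synchronous-firing behavior and the balanced-subgraph failure mode seen in the experiments. I would therefore prove the conjecture under the hypothesis that the induced subgraph has a unique sink and contains no balanced sub-subgraph --- matching the observed domain of validity of the algorithm --- and flag the balanced-subgraph and outerneuron configurations as the remaining obstruction.
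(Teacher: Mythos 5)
This statement is one of the open conjectures listed in the paper's ``Future work'' section; the paper offers no proof of it and explicitly asks for ``proofs or counterexamples,'' so there is no argument of record to compare yours against. What you have written is a research plan rather than a proof, and you say so yourself: the entire weight of the argument rests on the claim that if $a \to b$ is an edge between two in-neighbors of $u$, then $a$ peaks before $b$ in the limit cycle of the reduced network, and you concede that making this rigorous ``requires tracking the piecewise-linear flow through its linear chambers, for which there is no closed form.'' That concession is the gap. This ordering claim is precisely the heuristic underlying the whole algorithm (``activity follows the arrows''), and the paper itself notes that the dynamics follow the direction of the arrows ``though not always''; the documented failures on balanced subgraphs and outerneuron constructions are exactly instances where peak order does not track edge direction. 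So the key lemma is not merely unproven --- it is false without additional hypotheses, and identifying the right hypotheses is the actual content of the conjecture.

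Two further concrete problems with the proposed route. First, the asymptotic regime $\varepsilon, \delta \to 0$ does not degenerate to a ``purely combinatorial who-suppresses-whom rule'': in that limit every off-diagonal weight tends to $-1$, so the network becomes fully symmetric and the graph structure disappears entirely. The combinatorics live in $\varepsilon$ and $\delta$ being bounded away from zero, and you give no mechanism for transferring a perturbative conclusion back to the fixed parameters actually used ($\varepsilon = 0.25$, $\delta = 0.5$, which must also satisfy $\varepsilon < \delta/(1+\delta)$ for Theorem~\ref{nosinks} to apply). Second, your final paragraph quietly replaces the conjecture with a weaker statement (unique sink, no balanced sub-subgraph); that may well be the correct statement to prove, but it is not the statement posed, and even the weakened version is left unproven since it still depends on the unestablished ordering lemma. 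As it stands, the proposal identifies the right obstacle but does not overcome it.
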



\backmatter

\appendix

\chapter{Code for random walk model of replenishment}\label{compmodelcode}
This appendix gives the Matlab code for the random walk model of replenishment. There are six functions total:
\begin{itemize}
\item Trials\_for\_plot
\item R\_evolve
\item create\_SiteMat
\item R\_update
\item N\_evolve
\item N\_update
\end{itemize}
 The main function Trials\_for\_plot takes ribbon type (1 for ribbon and 0 for nonribbon), vesicle diameter (in $\mu$m), diffusion coefficient (in $\mu$m$^2$/s), attachment probabilities for both populations (as in Models 1 and 2), the fraction in population $A$, the vesicle concentration (in vesicles/$\mu$m$^3$), and the number of trials and outputs a cell array where entry $\{i,j,1\}$ is a vector of the number of vesicles on the ribbon at each time step in the computational model and entry $\{i,j,2\}$ is a vector of the number of vesicles on the ribbon at each time step as predicted by the theoretical random walk model for the $i$th concentration value and the $j$th trial. Plotting these two vectors against time gives us the replenishment curves in both cases.

{\small
\begin{lstlisting}[style=Matlab-editor]%[basicstyle=\linespread{0.8}\listingfont]

function [Trials] = Trials_for_plot(ribbon,delta,Diff,s_A,s_B,f,concen,numtrials)
%%%%%% initial state %%%%%%%%
W = 50; %width of matrix (pick an even number)
H = 50; %height of matrix
D = 31; %depth of matrix  
n = 110; %number of ribbon sites, must be a multiple of 10
f = 1; %fraction of fast-replenishing sites 0.757 for -10mV and 0.54 or -30mV
s = s_A*f+s_B*(1-f); %average attachment probability
dim_vec = [H,W,D];
timevector = floor(((harmonic(n)*2)/(delta^3*s))./(concen.*fracmob)); %corresponding length of time for each conc.
concenmax = length(concen); %number of concentration values 
Trials = cell(concenmax,numtrials,2); %collect data
%%%%%% start trials %%%%%%%%%%%
for concenidx = 1:concenmax
    rho = concen(concenidx); %vesicle density per micrometer^3 
    den = rho*delta^3; %density of occupied lattice sites
    T = timevector(concenidx); %number of time steps    
for trial = 1:numtrials
%%%%%%% creates initial state matrix %%%%%%%    
for i = 1:D
Mat(:,:,i) = rand(H,W)>(1-den); 
end
m = nnz(Mat); %number of vesicles
[r, c, l] = ind2sub(size(Mat),find(Mat == 1));
S_initial = [r,c,l]; % matrix whose rows are the coordinates of the vesicles
%%%%%% chooses update function %%%%%%%
if ribbon == 1 
    evolve_fun = @(S_initial) R_evolve(S_initial,dim_vec,T,s_A,s_B,f,n,shape);
else
    evolve_fun = @(S_initial) N_evolve(S_initial,dim_vec,T,s_A,s_B,f,n);
end
tic
[S_array,vec] = evolve_fun(S_initial); % update matrix T times
toc
if ribbon==1 
tau_approx = @(s) 1/(Diff*rho*delta*s);   %approx time constant for ribbon
else
tau_approx = @(s) 1/(2*Diff*rho*delta*s);  %approx time constant for nonribbon
end
end
x = 1:T-1;
x1 = [0 x].*(delta^2/(2*Diff));
y = zeros(1,T);
z = zeros(1,T);
for i=1:T
    y(i) = length(vec{i}); %vector of vesicles on ribbon at each time step
    z(i) = (n*f)*(1-exp(-x1(i)/tau_approx(s_A)))+(n*(1-f))*(1-exp(-x1(i)/tau_approx(s_B))); %vector of theoretically predicted number of vesicles at each time step
end
    Trials{concenidx,trial,1} = y;
    Trials{concenidx,trial,2} = z;   
figure; % plot theoretical prediction vs. computational trial
plot(x1,y,'-k')
hold on;
plot(x1,z,'-b')
hold off;    
end
end
end
\end{lstlisting} 

\begin{lstlisting}[style=Matlab-editor]

function [S_array,vec] = R_evolve(S,dim_vec,Tsteps,s_A,s_B,f,n)
% evolve function for ribbon case
SiteMat = create_SiteMat(dim_vec,n/10); % creates matrix of ribbon sites
W = dim_vec(1);
D = dim_vec(3);
perm = randperm(n);  %permutation of num of vesicles to randomly determine populations A and B
frac = floor(f*n);   %number in population A
indices = [];
for k = 1:length(S(:,1))
    if S(k,1)>(W/2)-1 && S(k,1)<(W/2)+1 && S(k,2)<10 && S(k,3)>((D+1)/2)-3 && S(k,3)<((D+1)/2)+3
       indices = [indices, k];
    end
end   
newindices = setdiff([1:size(S,1)],indices);
S = S(newindices,:);
RibbonMat = ones(size(S)); %matrix of ones with rows corresponding to vesicles on ribbon zeroed out
FilledMat = [];
[rowdim,coldim] = size(S);
S_array = zeros(rowdim,coldim,Tsteps);
S_array(:,:,1) = S;
vec = cell(1,Tsteps);
vec{1} = [];   
for t = 2:Tsteps
    index = find(ismember(S,SiteMat,'rows'));
    ves_idx = [vec{t-1}];
    index = setdiff(index,ves_idx);
    for j = 1:length(index)
        vesicle_idx = find(RibbonMat(:,1)==0);
        if isempty(find(ismember(S(vesicle_idx,:),S(index(j),:),'rows'),1)) == 1 
           if isempty(find(ismember(SiteMat(perm(1:frac),:),S(index(j),:),'rows'))) == 0
              if rand(1) <= s_A
                 RibbonMat(index(j),:) = 0;
                 FilledMat = [FilledMat; S(index(j),:)];
              end
           else
               if rand(1) <= s_B
                  RibbonMat(index(j),:) = 0;
                  FilledMat = [FilledMat; S(index(j),:)];
               end
           end    
        end
    end
    vec{t} = find(RibbonMat(:,1) == 0);
    S = R_update(S,RibbonMat,dim_vec,FilledMat,shape,n);
    S_array(:,:,t) = S;
end
end
\end{lstlisting}
\begin{lstlisting}[style=Matlab-editor]

function SiteMat = create_SiteMat(dim_vec,ht)
%creates matrix of the indices of the ribbon
W = dim_vec(1);
D = dim_vec(3);
SiteMat = [];
for ii = [W/2-1,W/2+1]
    for jj = 1:ht;
        for kk = (D+1)/2-2:(D+1)/2+2;
            SiteMat = [SiteMat;[ii,jj,kk]];
        end
    end
end 
\end{lstlisting}
\begin{lstlisting}[style=Matlab-editor]
function [S_new] = R_update(currentS,RibbonMat,dim_vec,FilledMat,n) 
%update function for the ribbon case
[rowdim,coldim]=size(currentS);
changeS=randi([0,1],size(currentS));
changeS=2*changeS-1;
D = dim_vec(3);
    Illegal_Mat = FilledMat;
    for ii = 1:n/10
        for jj = (D+1)/2-2:(D+1)/2+2
            Illegal_Mat = [Illegal_Mat;[dim_vec(1)/2,ii,jj]];
        end
    end
test_changeS = changeS.*RibbonMat;
test_S = currentS+changeS;
for ii = 1:length(dim_vec)
    test_S(:,ii) = min(max(test_S(:,ii),1),dim_vec(ii));
end
change_entries = find(ismember(test_S,Illegal_Mat,'rows'));
changeS(change_entries,:) = 0;       
changeS = changeS.*RibbonMat;
preS = currentS+changeS;
for jj = 1:length(dim_vec)
    S_new(:,jj) = min(max(preS(:,jj),1),dim_vec(jj));
end
end

\end{lstlisting}

\begin{lstlisting}[style=Matlab-editor]

function [S_array,vec] = N_evolve(S,dim_vec,Tsteps,s_A,s_B,f,n)
% evolve function for the nonribbon case
SiteMat = [[4,7,6];[4,16,6];[4,25,6];[4,34,6];[4,43,6];
[15,7,6];[15,16,6];[15,25,6];[15,34,6];[15,43,6];
[26,7,6];[26,16,6];[26,25,6];[26,34,6];[26,43,6];
[37,7,6];[37,16,6];[37,25,6];[37,34,6];[37,43,6];
[4,7,10];[4,16,10];[4,25,10];[4,34,10];[4,43,10];
[15,7,10];[15,16,10];[15,25,10];[15,34,10];[15,43,10];
[26,7,10];[26,16,10];[26,25,10];[26,34,10];[26,43,10];
[37,7,10];[37,16,10];[37,25,10];[37,34,10];[37,43,10];
[4,7,14];[4,16,14];[4,25,14];[4,34,14];[4,43,14];
[15,7,14];[15,16,14];[15,25,14];[15,34,14];[15,43,14];
[26,7,14];[26,16,14];[26,25,14];[26,34,14];[26,43,14];
[37,7,14];[37,16,14];[37,25,14];[37,34,14];[37,43,14];
[4,7,18];[4,16,18];[4,25,18];[4,34,18];[4,43,18];
[15,7,18];[15,16,18];[15,25,18];[15,34,18];[15,43,18];
[26,7,18];[26,16,18];[26,25,18];[26,34,18];[26,43,18];
[37,7,18];[37,16,18];[37,25,18];[37,34,18];[37,43,18];
[15,7,22];[15,16,22];[15,25,22];[15,34,22];[15,43,22];
[26,7,22];[26,16,22];[26,25,22];[26,34,22];[26,43,22];
[37,7,22];[37,16,22];[37,25,22];[37,34,22];[37,43,22];
[15,7,26];[15,16,26];[15,25,26];[15,34,26];[15,43,26];
[26,7,26];[26,16,26];[26,25,26];[26,34,26];[26,43,26];
[37,7,26];[37,16,26];[37,25,26];[37,34,26];[37,43,26]]; 
W = dim_vec(1);
D = dim_vec(3);
perm = randperm(n);  %permutation of num of vesicles to randomly determine populations A and B
frac = floor(f*n);   %number in population A
indices = [];
for k = 1:length(S(:,1))
    if S(k,1)>(W/2)-1 && S(k,1)<(W/2)+1 && S(k,2)<10 && S(k,3)>((D+1)/2)-3 && S(k,3)<((D+1)/2)+3
       indices = [indices, k];
    end
end   
newindices = setdiff([1:size(S,1)],indices);
S = S(newindices,:);
RibbonMat = ones(size(S));
FilledMat = [];
[rowdim,coldim] = size(S);
S_array = zeros(rowdim,coldim,Tsteps);
S_array(:,:,1) = S;
vec = cell(1,Tsteps);
vec{1} = [];    
for t = 2:Tsteps
    index = find(ismember(S,SiteMat,'rows'));
    ves_idx = [vec{t-1}];
    index = setdiff(index,ves_idx);
    for j = 1:length(index)
        vesicle_idx = find(RibbonMat(:,1)==0);
        if isempty(find(ismember(S(vesicle_idx,:),S(index(j),:),'rows'),1)) == 1 
           if isempty(find(ismember(SiteMat(perm(1:frac),:),S(index(j),:),'rows'))) == 0
              if rand(1) <= s_A
                 RibbonMat(index(j),:) = 0;
                 FilledMat = [FilledMat; S(index(j),:)];
              end
           else
               if rand(1) <= s_B
                  RibbonMat(index(j),:) = 0;
                  FilledMat = [FilledMat; S(index(j),:)];
               end
           end    
        end
    end
    vec{t} = find(RibbonMat(:,1) == 0);
    S = N_update(S,RibbonMat,dim_vec);
    S_array(:,:,t) = S;
end
end
\end{lstlisting} 

\begin{lstlisting}[style=Matlab-editor] 

function [S_new] = N_update(currentS,RibbonMat,dim_vec) 
%update function for the nonribbon case
[rowdim,coldim]=size(currentS);
changeS=randi([0,1],size(currentS));
changeS=2*changeS-1;    
changeS=changeS.*RibbonMat;
preS=currentS+changeS;
S_new = zeros(length(currentS),3);
for i=1:length(dim_vec)
    S_new(:,i)=min(max(preS(:,i),1),dim_vec(i));
end
end
\end{lstlisting} 

}

\chapter{Code for sequence prediction algorithm}\label{algorithmcode}
In this appendix, we present the Matlab code for the implementation of the  neural sequence algorithm. There are three functions total:
\begin{itemize}
\item run\_Algorithm
\item DeconstructGraph
\item ReconstructCycle
\end{itemize}
 The main function, run\_Algorithm takes the transposed adjacency matrix of a graph as an input and outputs the list of predicted limit cycles for the corresponding network. This function calls two subfunctions for the two phases of the algorithm: DeconstructGraph and ReconstructCycle, also included here. This particular implementation of the algorithm was done in collaboration with Katherine Morrison.

{\footnotesize
\begin{lstlisting}[style=Matlab-editor]
function [ExpectedLimitCyles] = run_Algorithm(sA)
% This function takes a transposed adjacency matrix (sA matrix) as an input and outputs a list of expected limit cycles.
DeletedNodes=[ ];
% This should always be initialized as empty
DeletedNodesList=[ ];
% This should always be initialized as empty
CoreCyclesList=[ ];
% This should always be initialized as empty
[DeletedNodesList,CoreCyclesList] = DeconstructGraph(sA,DeletedNodes,DeletedNodesList,CoreCyclesList);
n=size(sA,2);
ExpectedLimitCycles=zeros(size(CoreCyclesList,1),2*n);
% We will reconstruct a full cycle for every core cycle and we will have every unique core cycle listed as well since we expect unstable fixed points at those core cycles.  We will allow the ExpectedLimitCycles to have length up to 2*n because of the potential for period doubling -- hopefully we won't have any cycles longer than this, but if we do, we'll get an error here
UnstableFixedPoints=unique(CoreCyclesList,'rows');
% Insert all unique core cycles at the end of our ExpectedLimitCycles list
for i=1:size(CoreCyclesList,1)
    DeletedNodes=DeletedNodesList(i,:);
    CoreCycle=CoreCyclesList(i,:);
    FullCycle=ReconstructCycle(sA,DeletedNodes,CoreCycle);
    ExpectedLimitCycles(i, 1:length(FullCycle))=FullCycle;
    % Insert each reconstructed full cycle into a long row padded with zeros
end
ExpectedLimitCycles
end
\end{lstlisting} 

\begin{lstlisting}[style=Matlab-editor]

function [DeletedNodesList,CoreCyclesList] = DeconstructGraph(sA,DeletedNodes,DeletedNodesList,CoreCyclesList)
% sA is the full adjacency matrix
% DeletedNodes is a row vector of the nodes deleted thus far along a single path to to one core cycle
% DeletedNodesList is a matrix that will have rows of length n-3 added to it once a full row vector of DeletedNodes has been completed (this row vector will be padded with zeros if the size of the core cycle is greater than 3
% CoreCyclesList is a matrix that will have rows of length n added to it as each core cycle is found (the core cycle will be padded with zeros to make it length n)
% This function recursively calls itself (in the manner of a depth-first search) until there are no additional nodes that can be deleted and yield a valid graph.  At this point, the vector of deleted nodes is padded with zeros and added as a row vector to DeletedNodesList.  The corresponding core cycle is computed, padded with zeros, and added as a row vector to CoreCyclesList

if nargin<2
    DeletedNodes=[ ];
end
if nargin<3
    DeletedNodesList=[ ];
end
if nargin<4
    CoreCyclesList=[ ];
end
NodesToDelete=FindDeleteNodes(sA,DeletedNodes);
for i=1:length(NodesToDelete)
    [DeletedNodesList,CoreCyclesList] = DeconstructGraph(sA,[DeletedNodes, NodesToDelete(i)],DeletedNodesList,CoreCyclesList);
    % Keep recursively calling the function, adding the next node to delete to the end of the vector DeletedNodes (in the function call but not outside it so that we can loop through and call the function multiple times with different nodes to delete added on the end)
end
n=size(sA,2);
if isempty(NodesToDelete)
    % In this case, we've gotten down to a core cycle after an empty loop
    DeleteRow=zeros(1,n-3);
    DeleteRow(1:length(DeletedNodes))=DeletedNodes;
    % Fill in the first entries of DeleteRow so that the remaining entries are all the padded zeros
    DeletedNodesList=[DeletedNodesList;DeleteRow];    
    CycleRow=zeros(1,n);
    [sAsubmat,labels]=MakeSubmat(sA,DeletedNodes);
    CoreCycleIndices(1)=1;
    % Always start the cycle at an index of 1 (which will have a label corresponding to the lowest remaining node)   
    for j=1:length(labels)-1
        NextIndex=find(sAsubmat(:,CoreCycleIndices(j))==1);
        % This finds the unique node in the core cycle that CoreCyclesIndices(j) feeds into
        CoreCycleIndices(j+1)=NextIndex;
    end    
    CycleRow(1:length(labels))=labels(CoreCycleIndices);
    % Fill in the first entries of CycleRow so that the remaining entries are all the padded zeros
    CoreCyclesList=[CoreCyclesList;CycleRow];
end
\end{lstlisting} 

\begin{lstlisting}[style=Matlab-editor]

function FullCycle=ReconstructCycle(sA, DeletedNodes, CoreCycle)
% sA is the full adjacency matrix
% DeletedNodes is a row vector of all the nodes to  delete in the graph to get down to the core cycle
% CoreCycle is a row vector of the nodes in the core cycle in order according to which node feeds into which
% This function reconstructs the full cycle by appropriately reinserting the deleted nodes in reverse order of when they were deleted.
DeletedNodes=DeletedNodes(DeletedNodes~=0);
% This removes any padded zeros from the end
CoreCycle=CoreCycle(CoreCycle~=0);
% This removes any padded zeros from the end
cycle=CoreCycle;
for i=length(DeletedNodes):-1:1
    % run through the DeletedNodes in reverse order
    NodeToInsert=DeletedNodes(i);    
    [sAsubmat, labels]=MakeSubmat(sA, DeletedNodes(1:i-1));
    % This reconstructs the subgraph when the nodes before the current node have been deleted -- when we're on the last node, i.e. i=1, DeletedNodes(1:i-1) will be empty and we'll just get back the original sA and labels=1:n   
    IdxToInsert=find(labels==NodeToInsert);
    % This finds the index in the submatrix corresponding to the node to be inserted  
    IdxIntoNode=find(sAsubmat(IdxToInsert, :)==1); 
    % This finds all the indices of nodes that feed into the node to be inserted (i.e. the locations of 1s in the row corresponding to the node to be inserted)    
    if length(IdxIntoNode)>=3
        NodesInSubMat = sAsubmat(IdxIntoNode,IdxIntoNode);
        SinkNode = find(sum(NodesInSubMat,1)==0);        
        if length(SinkNode) == 1                 
        IdxIntoNode = SinkNode;
        cycle=InsertNode(cycle, NodeToInsert, labels(IdxIntoNode));              
        else   
        disp(['Error -- vertex ' num2str(NodeToInsert) ' has 3 or more inputs in the subgraph for the delete sequence ' mat2str(DeletedNodes) '.  Check this graph by hand to update algorithm']);
        return
        end    %added
        % At this point, we kill the function via 'return' because the algorithm doesn't know how to handle this situation.  This should be updated once we've seen graphs that have this feature and determine a heuristic for how to handle this situation
    elseif length(IdxIntoNode)==2 && sAsubmat(IdxIntoNode(1),IdxIntoNode(2))==0 && sAsubmat(IdxIntoNode(2),IdxIntoNode(1))==0
        % In this case there are 2 nodes that feed into the NodeToInsert and they are not adjacent to each other, so the algorithm says that the new node should be inserted into the sequence after both incoming vertices
        cycle=InsertNode(cycle, NodeToInsert, labels(IdxIntoNode(1)));
        % This guarantees that we are only placing nodes back into the sequence if they directly follow a node from the core cycle
        cycle=InsertNode(cycle, NodeToInsert, labels(IdxIntoNode(2)));
    elseif length(IdxIntoNode)==2
        % One of the nodes to insert feeds into the other.  The one that is fed into should have the new node inserted after it      
        if sAsubmat(IdxIntoNode(1),IdxIntoNode(2))==1
            % Then node 2 feeds into node 1, so the new node should be inserted after node 1
            cycle=InsertNode(cycle, NodeToInsert, labels(IdxIntoNode(1)));
        else
            % Then node 1 feeds into node 2, so the new node should be inserted after node 2
            cycle=InsertNode(cycle, NodeToInsert, labels(IdxIntoNode(2)));
        end
    elseif length(IdxIntoNode)==1
        if isempty(intersect(CoreCycle,labels(IdxIntoNode))) == 0
        cycle=InsertNode(cycle, NodeToInsert, labels(IdxIntoNode));
        end
    end
end
FullCycle=cycle;
\end{lstlisting}

}

\chapter{Catalogue of $n\le5$ oriented graphs with no sinks}\label{catalogue}
This appendix includes a list of all the oriented graphs on $n\le 5$ vertices with no sinks. The graphs on $n=5$ vertices are sorted by which $n=4$ subgraph(s) they reduce to in the first step of the algorithm from Chapter \ref{algorithm}. We group the $n=4$ graphs into four classes: \#2, 3, and 4 reduce to a three-cycle, \#5 is a 4-cycle, \#6 and 7 have three strong neurons and one weak, and \#8 has two distinct limit cycles. For graphs whose networks exhibit more than one behavior, like \#8, we list the initial conditions for each attractor type.  Table \ref{indextable} gives an index for the classification indicating which category each of the oriented graphs on $n=5$ nodes without sinks fall into.

\begin{table}
\begin{center}
\begin{tabular}{|c|c|c|}
\hline
Reduces to & Graph indices & Attractor type\\
\hline
2/3/4 & 9--38 & AT-1\\
 & 39--42 & AT-11\\ \hline
 5& 43--47 & AT-5\\
 & 48--51& AT-6\\ \hline
6/7 & 52--81 & AT-2 \\
 & 82--99& AT-4\\ \hline
8 & 100--110 & AT-15 \\
 & 111 & AT-10\\
 & 112--113 & AT-16\\ \hline
2 or 2 & 114 & AT-11 \\ \hline
2 or 5 & 115 & AT-11, AT-16\\ \hline
5 or 5 & 116 & AT-13\\ \hline
2/3/4 or 6/7 & 117--130 & AT-2\\
 & 131 & AT-10, AT-11 \\ \hline
5 or 6 & 132 & AT-14\\ \hline
6/7 or 6/7 & 133 & AT-10 \\ 
 & 134--135 & AT-16\\
 & 136--139 & AT-4\\
 & 140--142 & AT-3\\ \hline
2/3/4 or 8 & 143--146 & AT-15, AT-16\\ \hline
6/7 or 8  & 147--148 & AT-3 \\ 
& 149 & AT-4\\ \hline
8 or 8 & 150 & AT-18 \\
 & 151 & AT-17\\ \hline
3, 3, or 3 & 152 & AT-7\\ \hline
4, 6, or 7 & 153 & AT-12\\ \hline
7, 7, or 7 & 154 & AT-12\\ \hline
2, 6, or 8 & 155 & AT-15, AT-16\\ \hline
6, 6,  or 8 & 156 & AT-16\\ \hline
3, 7, or 8  & 157 & AT-15, AT-16\\ \hline
2, 8, or 8  & 158 & AT-15, AT-16\\  \hline
7, 7, 7, 7, or 7 & 159 & AT-9\\ \hline
none & 160 & AT-8\\
\hline
\end{tabular}
\end{center}
\caption{Index for classification of oriented graphs with no sinks on $n=5$ nodes: The left column gives the indices of the possible $n=4$ subgraph(s) that appear after step one of the algorithm, the middle column shows the indices of the graphs that have this reduction (based on the classification indexing), and the right column shows which attractors these graphs have (based on the dictionary). Subgraphs 2, 3, and 4 are grouped together since they have the same behavior and subgraphs 6 and 7 are similarly grouped.}
\label{indextable}
\vspace{2in}
\end{table}

\newpage

\begin{center}
\includegraphics[width=180mm]{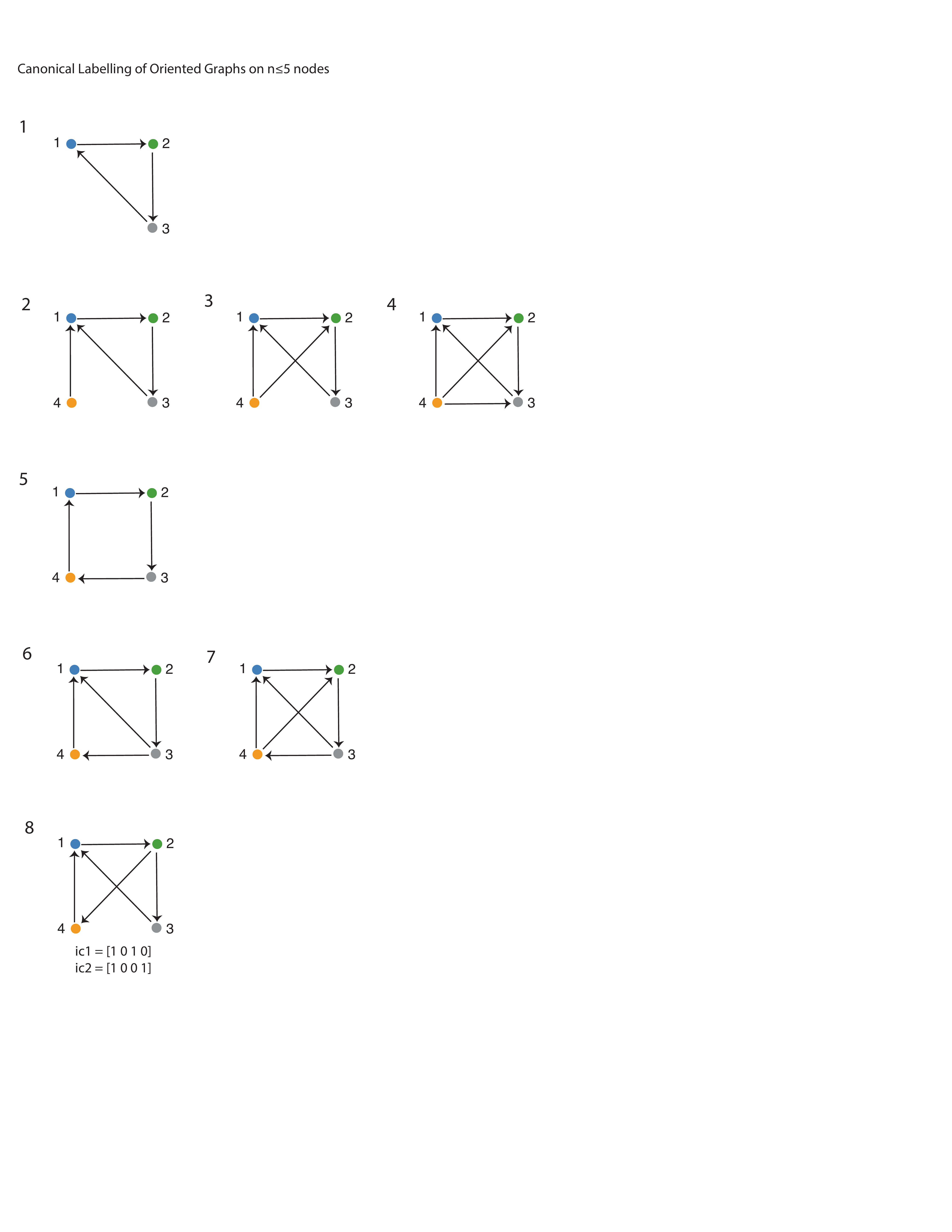}
\end{center}
\newpage
\begin{center}
\includegraphics[width=150mm]{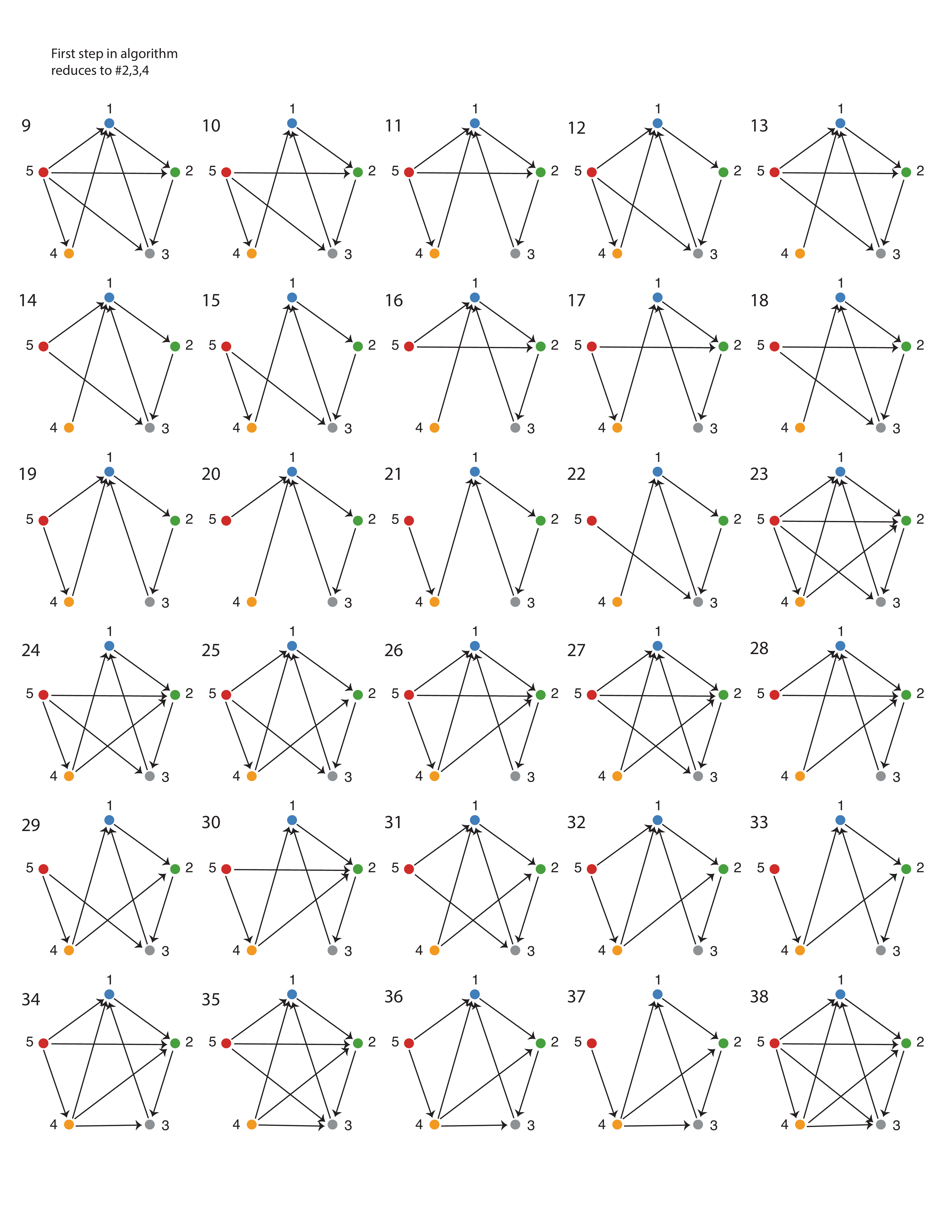}
\end{center}
\newpage
\begin{center}
\includegraphics[width=150mm]{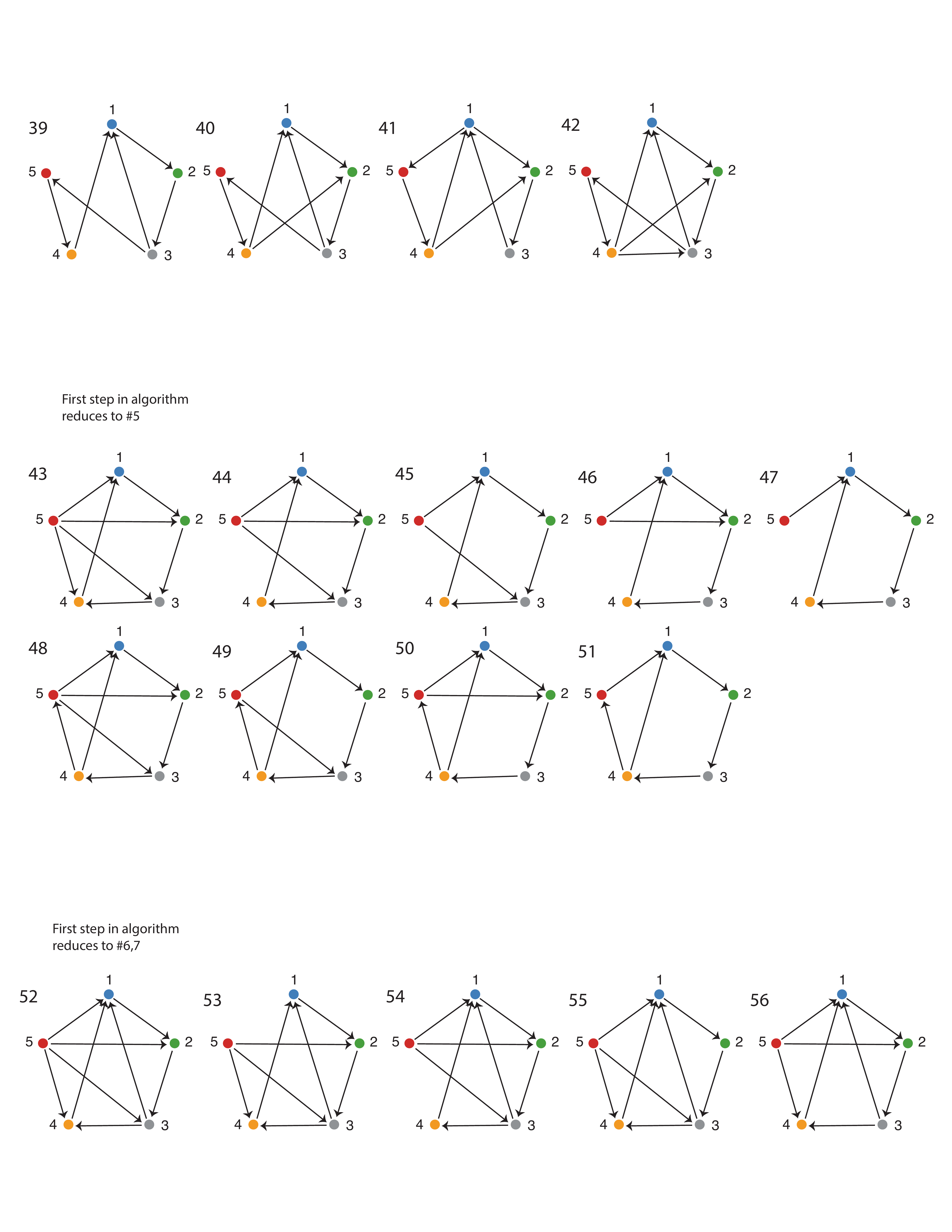}
\end{center}
\newpage
\begin{center}
\includegraphics[width=150mm]{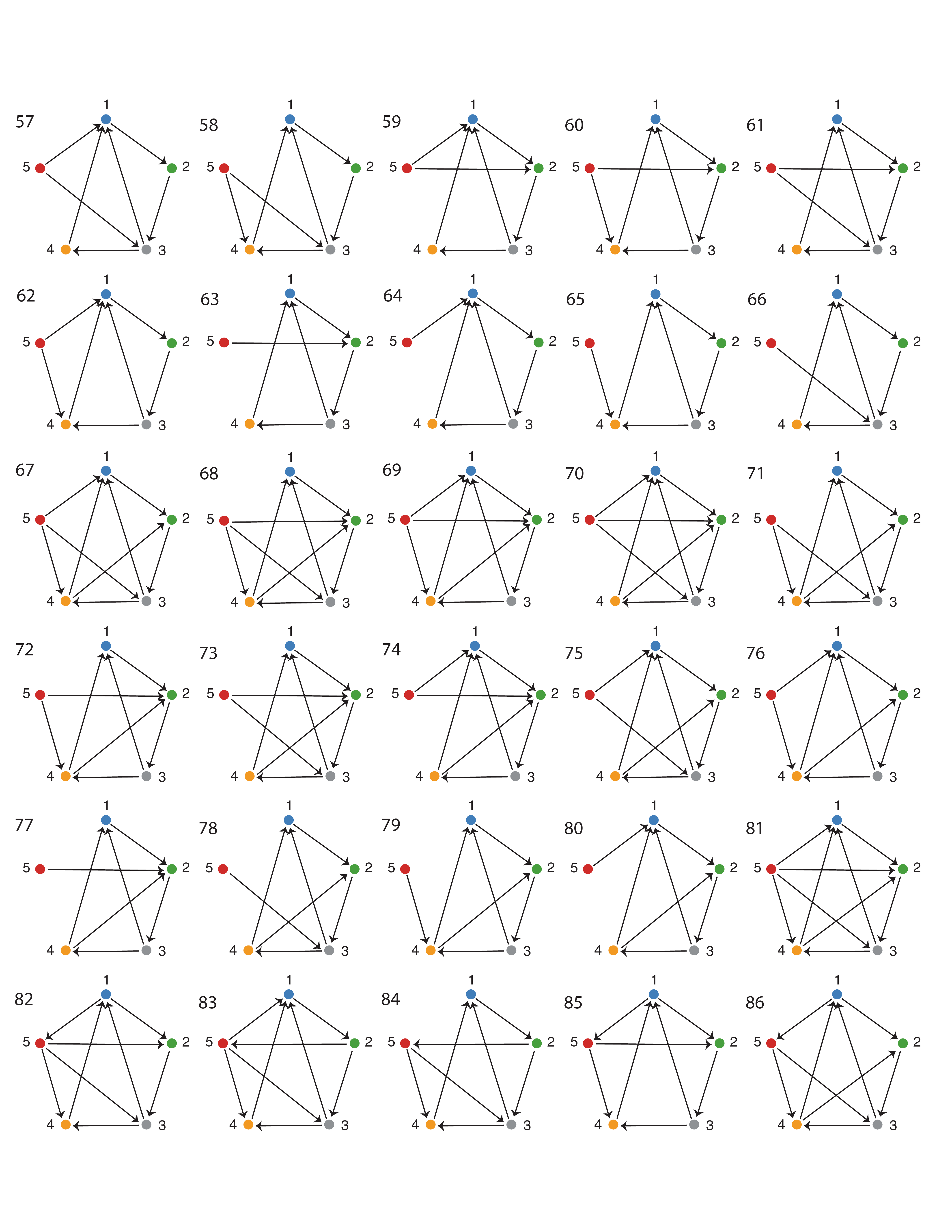}
\end{center}
\newpage
\begin{center}
\includegraphics[width=150mm]{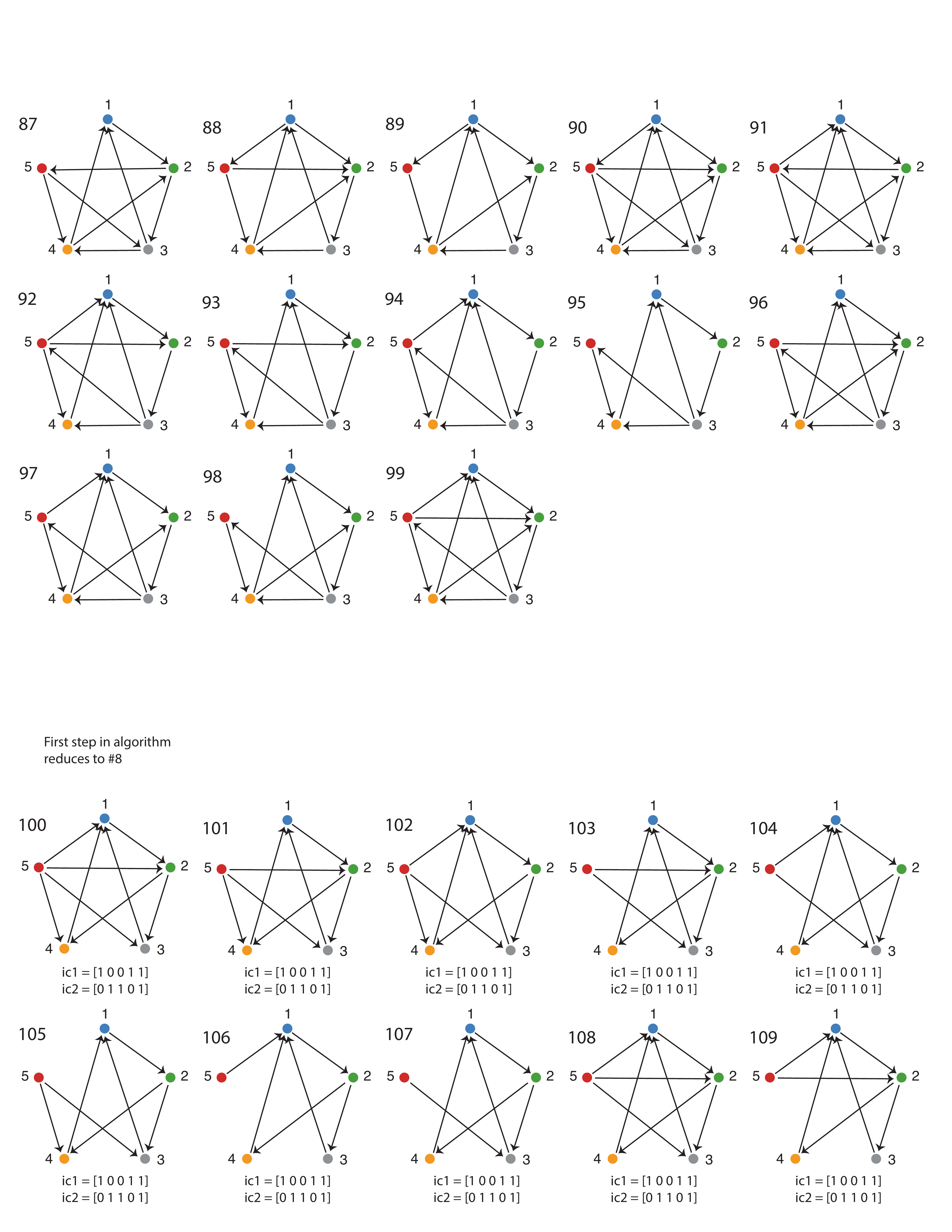}
\end{center}
\newpage
\begin{center}
\includegraphics[width=150mm]{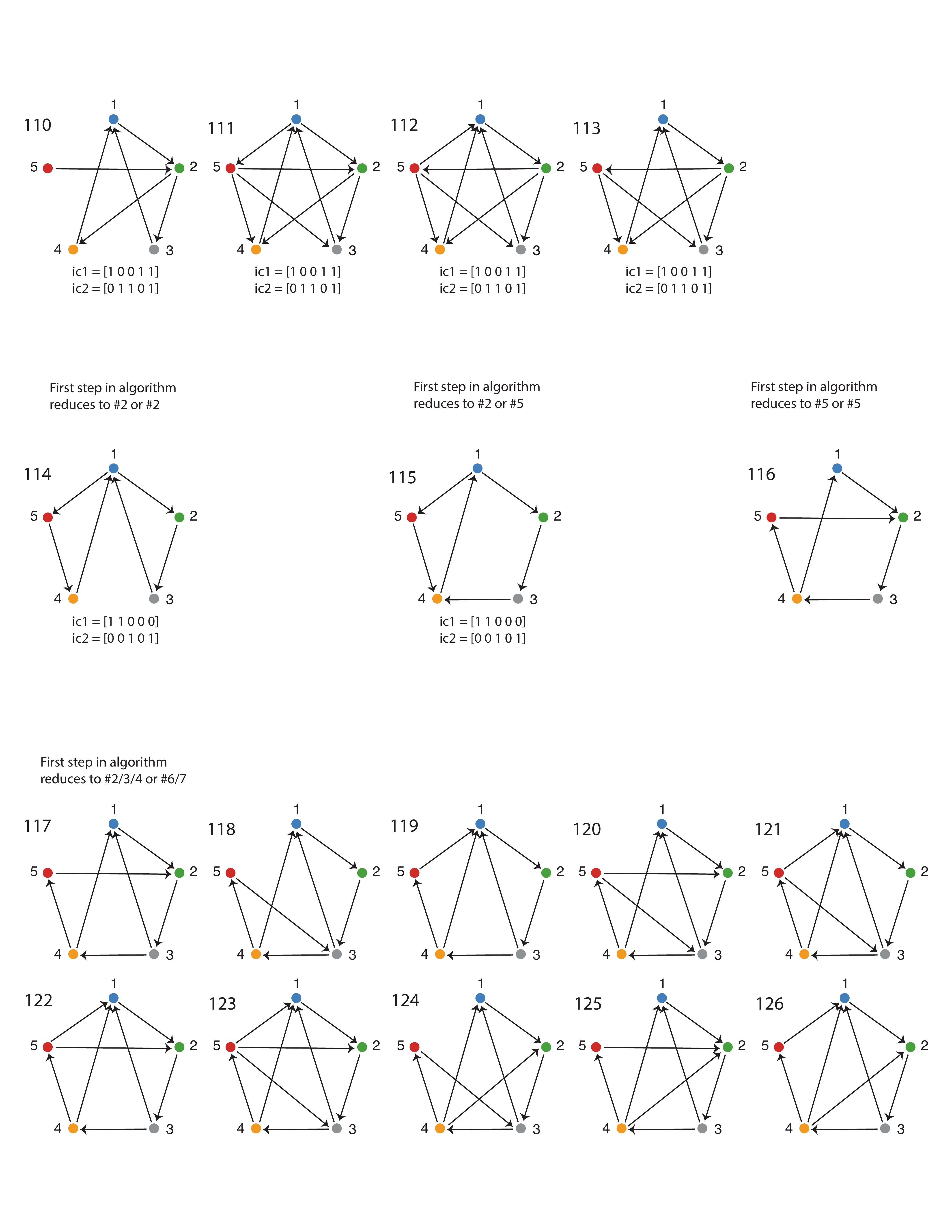}
\end{center}
\newpage
\begin{center}
\includegraphics[width=150mm]{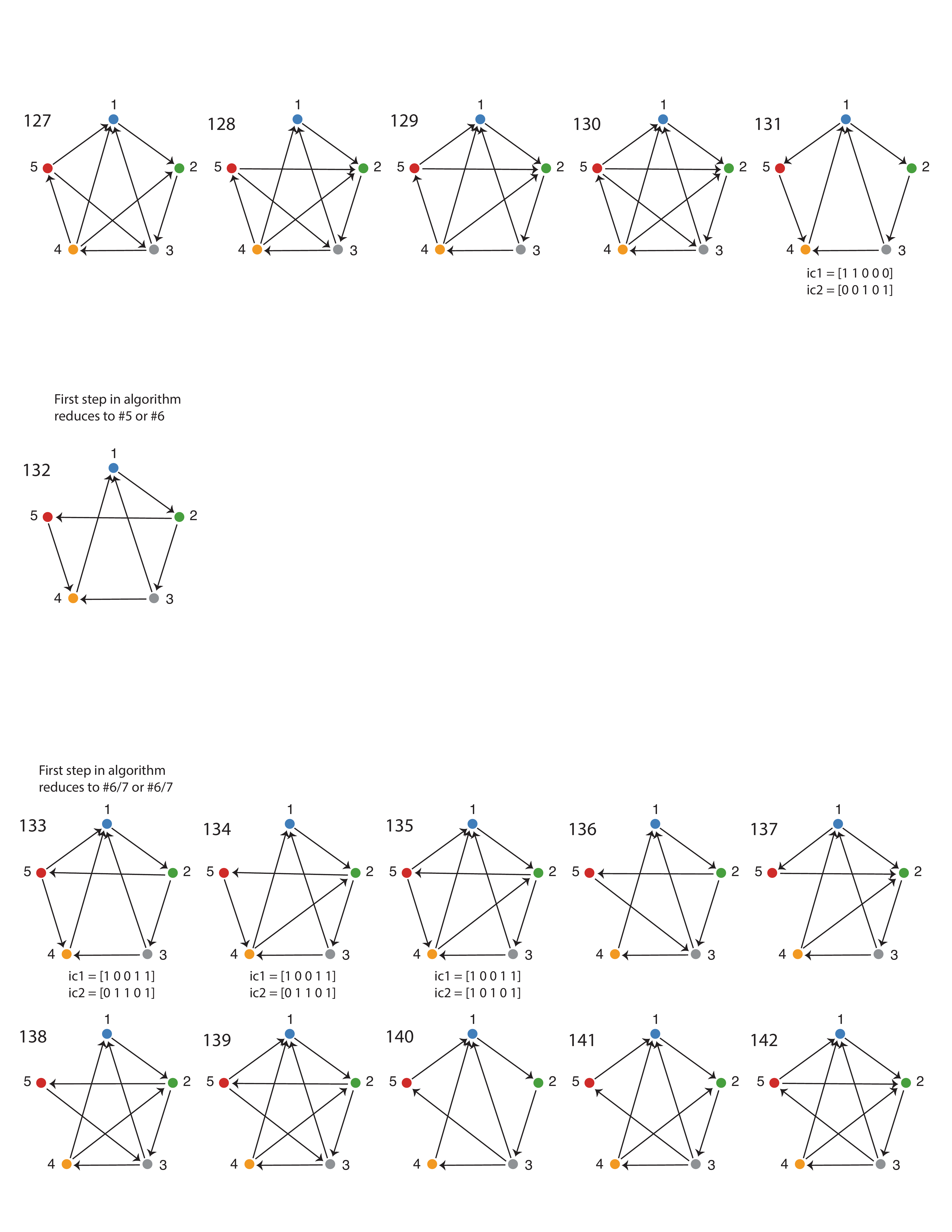}
\end{center}
\newpage
\begin{center}
\includegraphics[width=150mm]{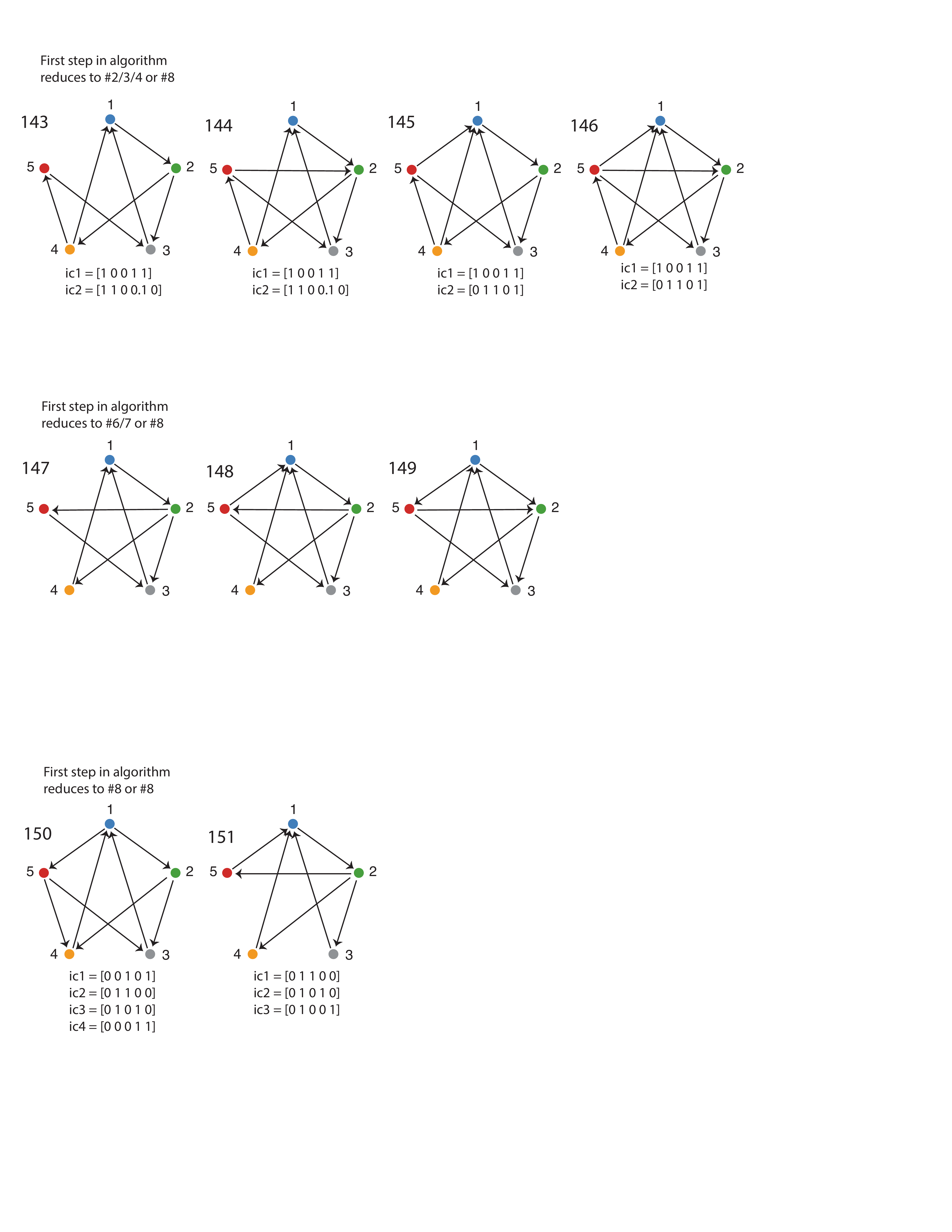}
\end{center}
\newpage
\begin{center}
\includegraphics[width=150mm]{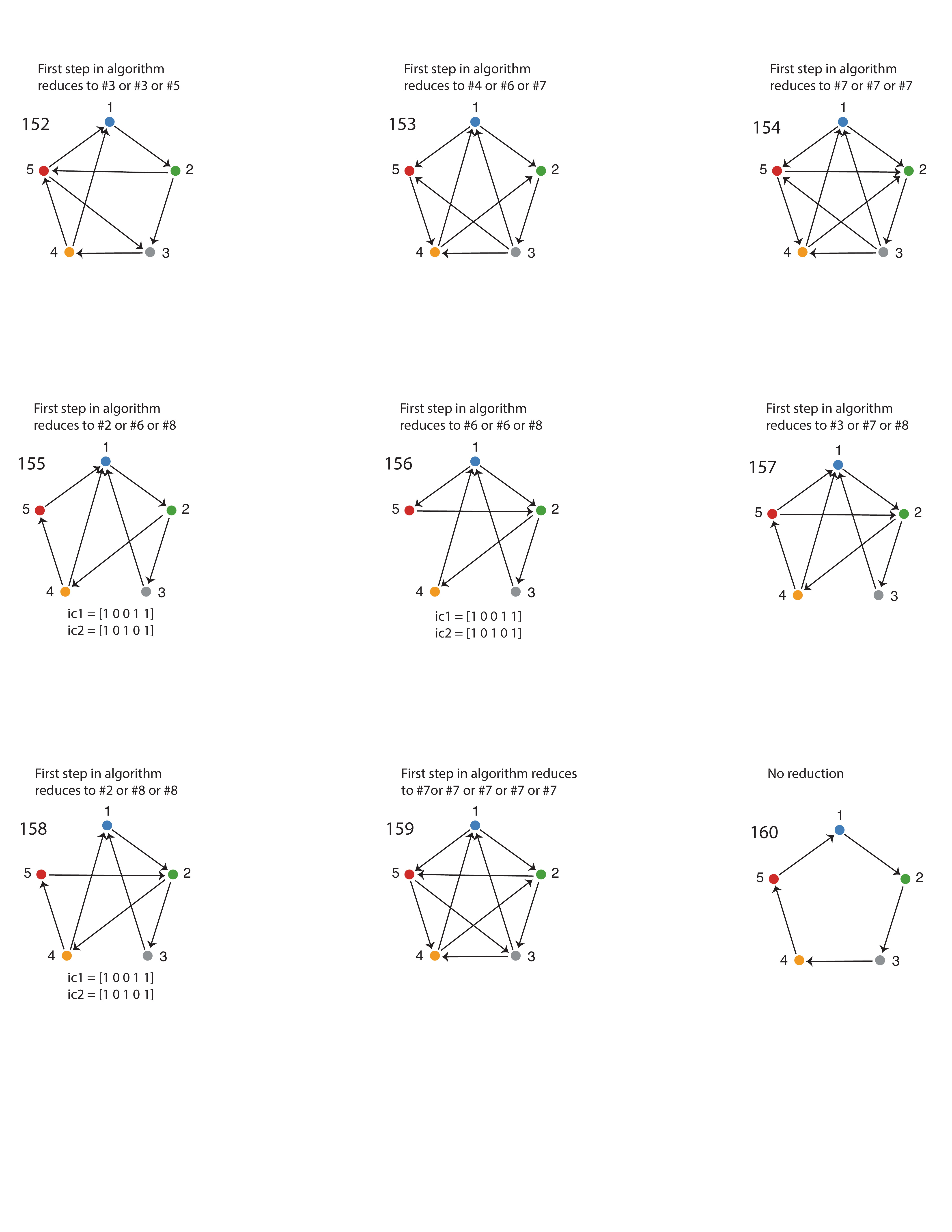}
\end{center}




\bibliographystyle{plain}
\bibliography{bibrefs}

\end{document}